\definecolor{myred}{RGB}{236, 17, 0}
\definecolor{myblue}{RGB}{10, 88, 153}
\definecolor{mygreen}{RGB}{26, 152, 81}
\definecolor{myorange}{RGB}{236, 137, 0}
\definecolor{LightGray}{RGB}{220, 220, 220}
\definecolor{LinkColor}{RGB}{167, 20, 49}
\newcommand{\PGM}[4][]{%
    \ifthenelse{\isempty{#1}}%
        {\mathcal{E}^{#2 #3}_{#4}(#3)}%
        {\mathcal{E}^{#1_{#2 #3}}_{#4}(#3)}%
}
\newcommand{\PGMelements}[4][]{%
    \ifthenelse{\isempty{#1}}%
        {\Lambda^{#2 #3;#4}_{#3}}%
        {\Lambda^{#1_{#2 #3};#4}_{#3}}%
}
\newcommand{\PGMchannel}[4][]{%
    \ifthenelse{\isempty{#1}}%
        {\mathcal{E}^{#2 #3}_{#4|#3}}%
        {\mathcal{E}^{#1_{#2 #3}}_{#4|#3}}%
}
\newcommand{\PGMchannelMartin}[2]{
    \Lambda^{#1}_{#2}
}
\newcommand{\TLS}{I}
\newcommand{\textDEOR}{$\mathrm{deor_{\mathcal{A}}}$}
\newcommand{\DEOR}{\mathrm{deor_{\mathcal{A}}}}
\newcommand{\carla}[1]{{\color{blue} Carla: #1}}
\newcommand{\martin}[1]{{\color{red} Martin: #1}}
\title{Improved Two-source Extractors against Quantum Side Information}
\author[1]{Jakob Miller}
\email{jamiller@student.ethz.ch}
\author[1]{Martin Sandfuchs}
\author[1]{Carla Ferradini}
\affil[1]{Institute for Theoretical Physics, ETH Zürich}
\begin{document}

\maketitle
\thispagestyle{empty}

\begin{abstract}
    Two-source extractors aim to extract randomness from two independent sources of weak randomness. It has been shown that any two-source extractor which is secure against classical side information remains secure against quantum side information. Unfortunately, this generic reduction comes with a significant penalty to the performance of the extractor. In this paper, we show that the two-source extractor from \cite{Dodis_extractors} performs equally well against quantum side information as in the classical realm, surpassing previously known results about this extractor. Additionally, we derive a new quantum XOR-Lemma which allows us to re-derive the generic reduction but also allows for improvements for a large class of extractors.
\end{abstract}

% \begin{center}    
% \begin{tikzpicture}[overlay,remember picture]
%     \node[inner sep=0pt] (tmp) at (0,-5) {\includegraphics[width=0.7\textwidth]{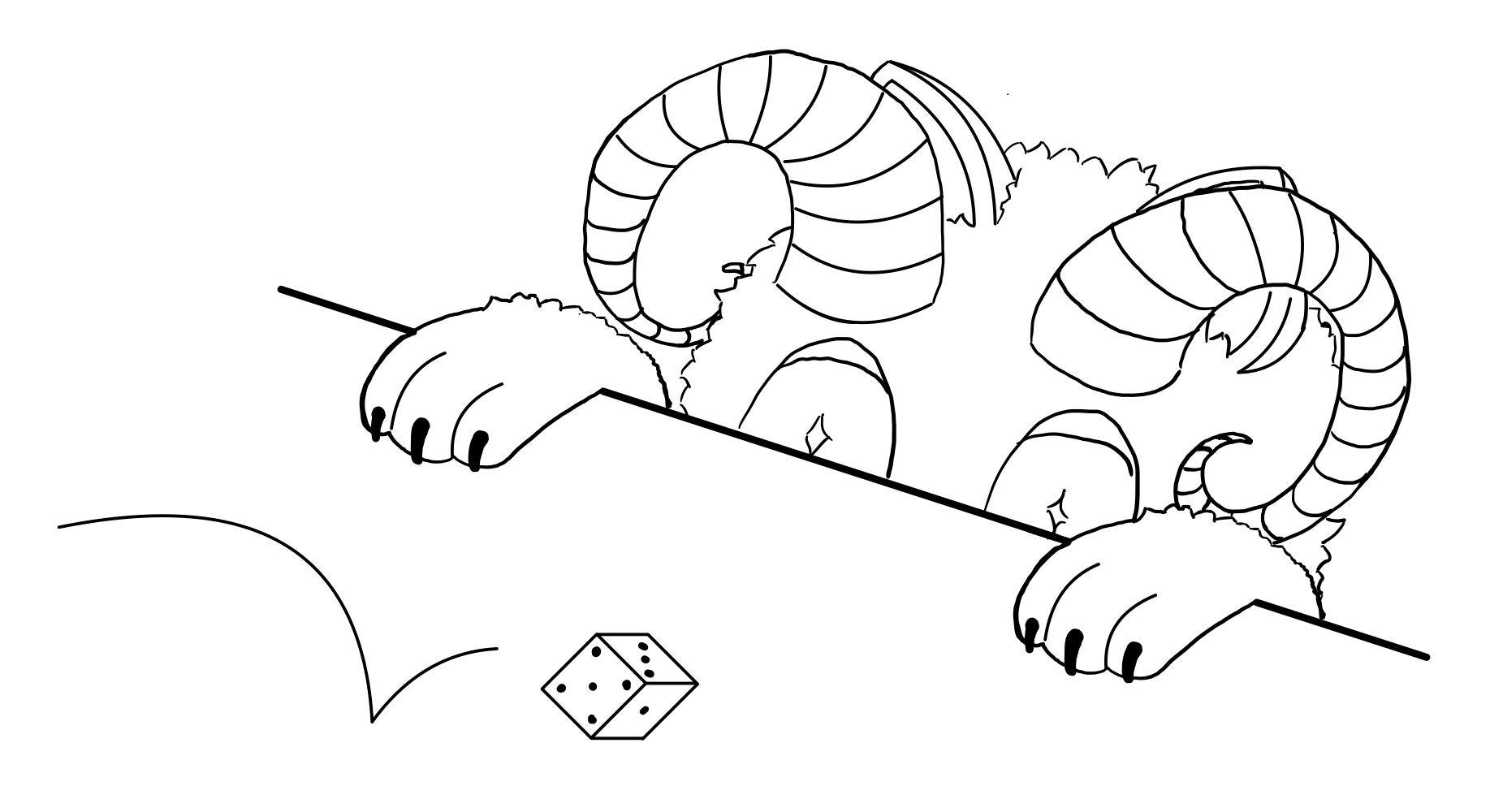}};
% \end{tikzpicture}
% \end{center}

\vspace{2.0cm}
\begin{figure}[h]
    \centering
    \includegraphics[width=0.7\columnwidth]{figures/Eve-close-up.png}
\end{figure}

{\color{white}{Dear all, latex didn't work properly. Therefore, we had to introduce this phantom text to keep the figure in first page and having a new page for the introduction. If you read this thanks for downloading! Jakob, Martin and Carla}}
\newpage

\section{Introduction}

Randomness extraction is the art of transforming a string of imperfect randomness into a (generally shorter) string of nearly perfect randomness. This allows for constructing sources of near perfect randomness out of weaker resources, making randomness extractors useful tools whenever perfect randomness is required. Furthermore, randomness extractors have found many uses in areas such as derandomization and cryptography. It is well known that it is impossible to extract randomness from a single min-entropy source using a deterministic function. Hence, in the literature one often considers access to an additional short but uniform random seed or access to a second, independent, source of imperfect randomness. These two scenarios are termed seeded extractors and two-source extractors, respectively.

Given that extractors are classical functions acting on classical random variables, one may expect classical information theory to be sufficient for their study. Unfortunately, it has been shown that there are extractors which are secure against classical but not against quantum side information \cite{Gavinsky_2007}. Hence, one has to separately prove security of extractors against the various types of side information. Indeed, there has been a long line of work, showing security of both seeded and two-source extractors against different types of side information. Among the first to consider the security of seeded extractors against quantum adversaries were König, Renner, and Maurer \cite{Renner_PhD_thesis,Konig_2005,RenKoe05}. Later, König and Terhal showed that any 1-bit seeded extractor remains secure against quantum side information with slightly worse parameters \cite{KT08}. For the multi bit output scenario, it was shown in \cite{Berta_2017, Berta_2016} that any seeded extractor with error $\varepsilon$ remains quantum-proof with rescaled error $\sqrt{2^m \varepsilon}$, where $m$ are the number of output bits.

Kasher and Kempe were the first to study security of the more general two-source extractors against quantum adversaries \cite{Kasher_Kempe_main}. They considered adversaries with limited entanglement as well as adversaries with quantum product-type side information. Their main result is that the specific two-source extractor from \cite{Dodis_extractors} remains secure against product-type quantum side information with somewhat worse parameters. Using the XOR-Lemma from \cite{Kasher_Kempe_main}, it was shown in \cite{Chung_2014} that any two-source extractor remains secure with an error scaling as $2^m \sqrt{\varepsilon}$. This scaling was later improved to $\sqrt{2^m \varepsilon}$ using an independent proof technique \cite{arnonfriedman_extractors}. Furthermore \cite{arnonfriedman_extractors} generalized the notion of product side information to the more general Markov model and showed that multi-source extractors remain secure in this more general security model. Two-source extractors in the Markov model have since found applications in device-independent randomness amplification \cite{Kessler_2020,Foreman_2023}.

\subsection{Overview of results}
\begin{table}[h]
    \centering
    \def\arraystretch{1.3}
    \begin{tabular}{| l | l | l |}
        \hline
        Side information & Output length $m$ & Proof/source\\
        \hline
        \hline
        None & $k_1 + k_2 + 2 - n - r - 2\log\frac{1}{\varepsilon}$ & \cite{Dodis_extractors} (\Cref{lem:BleA_ext_no_side_info}) \\\hline
        Classical Markov model & $k_1+k_2+2-n-r-4\log\frac{3}{\varepsilon}$ &  \Cref{cor:BleA_against_classical_knowledge} \\\hline
        Quantum product-type & $\frac{1}{6}(k_1 + k_2 +2 -n - 6\log\frac{1}{\varepsilon})$ & Lemma 30 in \cite{Kasher_Kempe_main} \\\hline
        Quantum product-type & $\frac{1}{2} (k_1 + k_2+3-n-r-4\log\frac{1}{\varepsilon})$ & \Cref{lem:BleA_2m_result} \\\hline
        Quantum product-type & $k_1 + k_2+2-n-r-2\log\frac{1}{\varepsilon}$ & \Cref{thm:BleA_1m_result} \\\hline
        Quantum Markov model & $\frac{1}{5}(k_1 + k_2+9-n-4\log3-8\log\frac{1}{\varepsilon})$ & ~\cite{arnonfriedman_extractors} (\Cref{prop:ext_to_ext_against_quantum_Markov})\\\hline
        Quantum Markov model & $k_1 + k_2+2-n-r-4\log\frac{3}{\varepsilon}$ & \Cref{cor:BleA_against_Markov} \\\hline
    \end{tabular}
    \caption{Output length $m$ for the two-source extractor from \cite{Dodis_extractors} using different proof techniques.
            Here $k_1$ and $k_2$ denote the (conditional) min-entropy of the input sources of imperfect randomness, $X_1$ and $X_2$,  and $n$ denotes the length of the input strings. The parameter $r$ depends on the specific extractor in the \textDEOR{}-extractor family (see~\cref{def:BleA_ext}).
            Note, that in \cite{Kasher_Kempe_main} and \cite{arnonfriedman_extractors} only the case $r=0$ is treated.}
    \label{tab:BleA_results}
\end{table}
Our contribution is two-fold: For ease of reference we will differentiate between two approaches that we call \emph{modular} and \emph{non-modular} in the following.
Firstly, we present a new, \emph{modular} proof-technique that recovers the following known statement.
\begin{theorem}[informal]\label{thm:informal:general_ext}
    Any strong two-source extractor $\Ext$ remains a strong two-source extractor against product-type quantum knowledge with weaker parameters. 
\end{theorem}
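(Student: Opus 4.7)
The plan is to prove the statement by reducing the $m$-bit quantum security of $\Ext$ to the security of its single-bit XOR slices $\Ext_S(x_1,x_2) := \bigoplus_{i \in S} \Ext(x_1,x_2)_i$, for $\emptyset \neq S \subseteq [m]$, via a quantum XOR-Lemma, and then invoking a known 1-bit quantum security statement against product-type side information.

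First, I would establish a quantum XOR-Lemma roughly of the form
\[
    \bigl\| \rho_{Z X_1 E_1 E_2} - \tau_Z \otimes \rho_{X_1 E_1 E_2} \bigr\|_1 \;\leq\; \sqrt{2^m} \,\max_{\emptyset \neq S \subseteq [m]} \bigl\| \rho_{Z_S X_1 E_1 E_2} - \tau_{Z_S} \otimes \rho_{X_1 E_1 E_2} \bigr\|_1 ,
\]
where $Z_S := \bigoplus_{i \in S} Z_i$, valid for every cq-state with $m$-bit classical register $Z$ and arbitrary side information. This follows by expanding $\rho_{Z|X_1 E_1 E_2}$ in the character basis on $Z$ and applying Cauchy--Schwarz; the ``Fourier coefficients'' become operators on the side-information registers whose trace norms are precisely the XOR biases on the right-hand side.

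Second, I would instantiate this bound with $Z = \Ext(X_1,X_2)$, where $X_1 E_1$ and $X_2 E_2$ are independent with conditional min-entropies $k_1$ and $k_2$. Each $\Ext_S$ is a deterministic function of $\Ext$, and since $\Ext$ is \emph{strong}, $\Ext_S(X_1,X_2)$ is $\varepsilon$-close to uniform given $X_1$; hence $\Ext_S$ is itself a classical strong 1-bit two-source extractor with the same min-entropy requirements. I would then invoke a 1-bit two-source quantum security statement (a König--Terhal-type argument, applied to each factor $E_i$ independently and crucially using the product structure $E_1 \otimes E_2$) to conclude that $\Ext_S$ remains secure against product-type quantum side information with error $\mathcal{O}(\sqrt{\varepsilon})$. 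Plugging this into the XOR-Lemma yields overall error of order $\sqrt{2^m \varepsilon}$, recovering the scaling of~\cite{arnonfriedman_extractors}.

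The main obstacle is the quantum XOR-Lemma itself: classically it is a short Parseval/Fourier argument on probability distributions, whereas here the ``Fourier coefficients'' are non-commuting operators on $E_1 E_2$, and one must bound the trace norm of their sum without losing more than a $\sqrt{2^m}$ factor. A secondary difficulty is carefully tracking the ``strong'' register $X_1$ throughout: one must keep $X_1$ on the adversary's side during the character expansion, apply the 1-bit statement with $X_1 E_1$ as the side information on the first source, and verify that the conclusion yields a \emph{strong} two-source extractor against product-type side information rather than a merely weak one.
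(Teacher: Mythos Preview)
Your overall strategy---reduce to single-bit XOR slices via an XOR-Lemma, then handle the 1-bit case against quantum side information---is exactly the route taken in~\cite{Kasher_Kempe_main,Chung_2014} and in the paper itself. The difference, and the genuine gap in your proposal, lies in the XOR-Lemma you state.

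The inequality
\[
    \bigl\| \rho_{ZE} - \tau_Z \otimes \rho_E \bigr\|_1 \;\leq\; \sqrt{2^m}\,\max_{s\neq 0}\bigl\|\rho_{(s\cdot Z)E}-\tau_1\otimes\rho_E\bigr\|_1
\]
does \emph{not} follow from ``expand in characters and apply Cauchy--Schwarz'' once $E$ is quantum. Parseval holds for the Hilbert--Schmidt norm of the operator-valued Fourier coefficients, not for their trace norms, so the Cauchy--Schwarz step that works classically breaks down; the best one gets by the triangle inequality is the Kasher--Kempe bound with a factor $2^m$ rather than $\sqrt{2^m}$. Combined with the K\"onig--Terhal step (which gives $O(\sqrt{\varepsilon})$ per slice), your outline therefore yields $2^m\sqrt{\varepsilon}$, which is precisely the~\cite{Chung_2014} result, not the $\sqrt{2^m\varepsilon}$ you claim. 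You correctly flag this as ``the main obstacle'', but you do not actually resolve it.

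The paper's way around this is the \emph{measured} XOR-Lemma (\cref{lem:measured_XOR_lemma}): starting from the $L_2$/Parseval bound of \cref{prop:proof_step_1}, one recognises the resulting $L_2$ quantities as traces against the pretty-good-measurement operators, so the right-hand side already carries \emph{classical} (measured) side information. This collapses your two steps into one: there is no separate K\"onig--Terhal reduction, and the sum over $s$ appears inside a square root together with the classical extractor error, giving $\sqrt{2^{m-1}\varepsilon}$ directly. In short, the PGM is what lets Parseval do the work that trace norms alone cannot.
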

The formal statement is given in~\cref{thm:general_ext_result}. Our proof of this statement follows similar steps as \cite[lemma 30]{Kasher_Kempe_main} and \cite[theorem 5.3]{Chung_2014} which can be summarized as follows: Use the XOR-Lemma from \cite{Kasher_Kempe_main} to reduce the security of the $m$-bit extractor $\mathrm{Ext}$ to the case of the single-bit extractor $s \cdot \mathrm{Ext}$. The single-bit case can then be reduced to classical side information using the methods from \cite{KT08}. However, in our proof we use a modified version of the XOR-Lemma which we call the \emph{measured XOR-Lemma}. Just like the XOR-Lemma in \cite{Kasher_Kempe_main}, our generalization relates $\mathrm{Ext}$ to $s \cdot \mathrm{Ext}$ but also applies a pretty good measurement (see~\cref{def:PGM}) to the quantum side information. Since the measured XOR-Lemma already includes a measurement, we can directly apply the classical result, i.e., the measured XOR-Lemma already contains the reduction to classical side information from \cite{KT08}. This then allows us to obtain the improved error of $\sqrt{2^m \varepsilon}$ as in \cite{arnonfriedman_extractors} instead of the looser error $2^m \sqrt{\varepsilon}$ from \cite{Chung_2014}.
Apart from giving a new perspective onto the generic security reduction, our proof-technique also allows for a significant improvement of the possible output length $m$, if $s \cdot \Ext$ is known to be a good extractor against classical product-type side information. We demonstrate this on the example of the extractor from \cite{Dodis_extractors}.
We also note that, in the case of strong extractors, we are able to derive better bounds than \cite{arnonfriedman_extractors}.
The reason for this is an improvement to a lemma reducing security against classical product-type side information to security against no side information (\cref{prop:ext_to_ext_against_classical_product_type}).
Secondly, we show that the specific extractor from \cite{Dodis_extractors}, which we denote by \textDEOR, has identical performance against quantum side information as in the case of no side information:
\begin{theorem}[informal]\label{thm:informal:BleA_1m}
    The \textDEOR{}-extractor is as good in the case of product-type quantum side information as in the case of no side information.
\end{theorem}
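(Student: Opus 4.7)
The plan is to avoid the $\sqrt{2^m}$ loss of the generic XOR-Lemma and instead exploit the specific algebraic structure of the \textDEOR{}-extractor. Starting from a product-type cq state $\rho_{X_1 X_2 E_1 E_2} = \rho_{X_1 E_1} \otimes \rho_{X_2 E_2}$ with $H_{\min}(X_i \mid E_i)_\rho \ge k_i$, the target is
\[
\bigl\| \rho_{\DEOR(X_1,X_2) E_1 E_2} - \tau_m \otimes \rho_{E_1 E_2} \bigr\|_1 \le \varepsilon,
\]
with $m = k_1 + k_2 + 2 - n - r - 2 \log(1/\varepsilon)$, matching the classical no-side-information parameters of \cref{lem:BleA_ext_no_side_info}.

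First, I would convert trace distance to Hilbert--Schmidt distance, paying the standard $\sqrt{2^m}$ factor, and Fourier-expand the extractor output over seeds $s \in \{0,1\}^m \setminus \{0\}$. The squared $2$-norm then rewrites as an average over $s$ of per-character quantities involving the single-bit extractor $s \cdot \DEOR$ together with operators on $E_1 E_2$. Next, I would apply the pretty good measurement from \cref{def:PGM} to each $E_i$ independently — this is possible because the state is product — which turns each character contribution into a classical-side-information bias for $s \cdot \DEOR$, where the PGM guess $G_i$ leaves $X_i$ with classical min-entropy compatible with $k_i$.

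At this point, for every nonzero $s$, the classical bias of $s \cdot \DEOR(X_1, X_2)$ given $(G_1, G_2)$ can be bounded using the bilinear/character-sum analysis underlying \cite{Dodis_extractors}. Summing the character-wise contributions, the $1/2^m$ Fourier normalisation cancels against the $\sqrt{2^m}$ from the first step, leaving an output length free of any $m$-dependent penalty. The principal obstacle lies in this final step: the per-seed classical bias must retain the optimal $k_1 + k_2 - n - r$ entropy accounting uniformly in $s$, without extra logarithmic factors. This is precisely where a direct (non-modular) treatment of \textDEOR{} pays off: by summing tightly over $s$ rather than taking a worst-case-over-$s$, we recover exactly the no-side-information parameters that the modular bound in \cref{lem:BleA_2m_result} misses by a factor of two in $m$.
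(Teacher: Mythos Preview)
Your outline is essentially the modular route that the paper already carries out and records as \cref{lem:BleA_2m_result}; it does \emph{not} reach the no-side-information parameters. After the $\sqrt{2^m}$/Fourier cancellation you correctly anticipate, what is left is \cref{prop:proof_step_1}:
\[
\delta^2 \;\le\; \tfrac14 \sum_{s\neq 0}\tr{\Bigl(\sum_z (-1)^{s\cdot z}M(z)\Bigr)^2}.
\]
This is a \emph{sum} over $s$, not an average. Once you apply the PGM to $E_1,E_2$, each summand is bounded (this is exactly the content of the measured XOR-Lemma) by a classical \emph{trace distance} for the single-bit extractor $s\cdot\DEOR=\mathrm{IP}(X_1,A_s^T X_2)$, and that bias scales like $2^{-(k_1+k_2-n-r)/2}$, with the square root coming from Cauchy--Schwarz in the inner-product bound. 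Summing $2^m$ such terms gives $\delta\le 2^{-(k_1+k_2+3-n-r-2m)/4}$, i.e.\ the $2m$ result. Your ``sum tightly over $s$ instead of worst case'' does not help: the modular proof already sums over $s$; the loss is in the per-$s$ bound, not in the aggregation.

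The idea you are missing is that the paper's non-modular proof \emph{never measures} $C_2$. It stays at the Hilbert--Schmidt level throughout: choosing $\sigma_{X_1 C}=\rho_{X_1 C_1}\otimes\sigma_{C_2}$ makes the $X_1 C_1$ part collapse via $[x_1][x_1']=\delta_{x_1,x_1'}[x_1]$, leaving only the operator $N_{C_2}(x_2)=\sigma_{C_2}^{-1/4}\rho_{C_2\land x_2}\sigma_{C_2}^{-1/4}$. After the change of variable $x_1'=A_s x_1$ and the bound $P_{A_s X_1}(x_1')\le 2^{-k_1+r}$, a \emph{second} application of Parseval to the matrix-valued function $N$ turns the remaining sum into $\trinorm{N}_2^2=2^{-H_2(\rho_{X_2 C_2}|\sigma_{C_2})}\le 2^{-k_2}$ --- with no square root. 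That is what yields the per-$s$ contribution $2^{-(k_1+k_2-n-r)}$ rather than $2^{-(k_1+k_2-n-r)/2}$, and hence the $1m$ parameters. Replacing the operator-valued Parseval step by a PGM-then-classical-bias argument is precisely where the factor of two in $m$ is lost.
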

For a formal statement we refer to \cref{thm:BleA_1m_result}. 
Furthermore, we demonstrate that this result can be extended to quantum side information in the Markov model, with an insignificant decrease in the extractor's output length, as shown in~\cref{cor:BleA_against_Markov}.
This shows that \textDEOR{} is as good against side information in the quantum Markov model as it is against side information in the classical Markov model. 
Compared to \cite[corollary 21]{arnonfriedman_extractors}, this result constitutes an increase in the output length $m$ by a factor of 5.
In~\cref{tab:BleA_results} we summarize our results for the \textDEOR{}-extractor.

\subsection{Organization of the paper}
The rest of this paper is structured as follows. 
\Cref{chapter:preliminaries} contains the preliminaries clarifying the notation and definitions that we use.
In \cref{sec:two_source_extractors}, we define the security model for two-source extractors and describe how the \textDEOR{}-extractor is constructed.
In~\cref{chapter:results} we present our results, providing formal statements and proofs of~\cref{thm:informal:general_ext,thm:informal:BleA_1m} and of the measured XOR-Lemma.
\Cref{chapter:conclusion} concludes our work and outlines open questions which would make for interesting topics in future research.

\section{Preliminaries}
This section includes definitions, notation and theorems which are used in proving our results. 

\paragraph{Notation.}
\label{chapter:preliminaries}
\label{sec:QM_notation}
\Cref{tab:notation} summarizes some basic notions of quantum information theory as well as the notation used throughout the remainder of this work. For a detailed introduction to the formalism of quantum information theory, see, for instance, \cite{Nielsen_Chuang}. 
    \renewcommand{\arraystretch}{1.5}
\begin{longtable}[H]{|p{0.35\linewidth-5\tabcolsep}||p{0.65\linewidth}|}
        \hline
        \textbf{Notation} & \textbf{Description} \\
        \hline
        \hline
        $\log$ & Logarithm in base 2\\
        \hline 
        $x \cdot y$ & The inner product modulo 2 between $x \in \{0,1\}^n$ and $y \in \{0,1\}^n$ \\
        \hline
        $\mathop{\mathbb{E}}_{x\leftarrow X}[f(x)]$ & Expectation value of a function $f(X)$ of a random variable $X$ with distribution $P_X(x)$ \\
        \hline

        $\hilmap_A, \hilmap_X$, ... & Hilbert spaces belonging to different systems, $A$, $X$, ... (typically $A$,$B$, $\dots$ are quantum systems and $X$, $Y$, $\dots$ are classical)\footnote{In quantum information theory, systems are associated to Hilbert spaces. Hence, we identify systems by the corresponding system label, i.e., whenever we say ``system A'' we refer to the quantum system represented by the Hilbert space $\mathcal{H}_{A}$. The state of a system is described by density operators. The evolution of quantum system is described by completely positive and trace-preserving (CPTP) maps.} \\
        \hline
        $\mathrm{End}(\hilmap)$ & Set of linear operators on $\hilmap$ \\
        \hline
        $\mathcal{P}(\hilmap)$ & Set of positive semi-definite linear operators on $\hilmap$ \\
        \hline
        $\mathcal{S}(\hilmap)$ & Set of positive semi-definite, trace-one, linear operators on $\hilmap$, i.e., density operators\\
        \hline
        $\rho_X=\sum_{x\in\mathcal{X}}P_X(x)\ketbra{x}{x}_X$ & Classical state in the preferred fixed basis $\{\ket{x}\}_{x\in\mathcal{X}}$\\
        \hline
        $\rho_{XB}$ \hfill 
       
        $=\sum_{x\in\mathcal{X}} P_X{(x)} \ketbra{x}{x}_X \otimes \rho_{B|x}$ & Classical-quantum state (cq-state) in $\mathcal{S}(\hilmap_X\otimes\hilmap_B)$ with conditional states $\rho_{B|x}\in\mathcal{S}(\hilmap_B)$ and probability distribution $P_X$\\
        \hline
        $\rho_{B \land x} = P_X(x) \rho_{B|x}$ & Subnormalized conditional state relative to the cq state $\rho_{XB}$\\
        \hline
        $\omega_{d} \coloneq \frac{\id}{d}$ & Maximally mixed state on $\hilmap$ where $d=\dim(\hilmap)$ \\
        \hline
        $[v] \coloneq \ketbra{v}{v}$ & Rank one projector onto the vector $\ket{v}$\\
        \hline
        $\mathrm{CPTP}(\hilmap_A, \hilmap_B)$&
        Set of completely-positive and trace-preserving linear maps from $\mathrm{End}(\hilmap_A)$ to $\mathrm{End}(\hilmap_B)$\\
        \hline
        $\mathcal{E}_{B|A}$ & An element of the set $\mathrm{CPTP}(\hilmap_A, \hilmap_B)$\footnote{For a bipartite state $\rho_{AC} \in \mathcal{S}(\hilmap_A\otimes\hilmap_C)$ and a channel $\mathcal{E}_{B|A}\in\mathrm{CPTP}(\hilmap_A, \hilmap_B)$ that acts only on subsystem $A$, we will typically omit the identity channel acting on subsystem $C$, i.e., we write $\mathcal{E}_{B|A}(\rho_{AC})$ instead of $(\mathcal{E}_{B|A}\otimes\mathcal{I}_{C})(\rho_{AC})$.} \\
        \hline
        $\mathcal{I}_{A}$ & Identity channel on linear operators on $\hilmap_{A}$, i.e., for all $\rho_A\in\mathcal{S}(\hilmap_A)$, $\mathcal{I}_{A}(\rho_A)=\rho_A$\\
        \hline
        $f_{Y|X}[\rho_{X}] = f[\rho_{X}] = \rho_{f(X)}$ \hfill
        
        $ = \sum_{x} \ketbra{f(x)}{f(x)} \bra{x} \rho \ket{x}$ & CPTP map corresponding to a classical function $f: \mathcal{X} \mapsto \mathcal{Y}$ \\
        \hline
        $\norm{S}_1 \coloneq \tr{\abs{S}} \coloneq \tr{\sqrt{S^\dag S}}$& Trace norm $\norm{\cdot}_1$ of $S\in \mathrm{End}(\hilmap)$\\
        \hline
        $\delta(\rho,\sigma) \coloneq \frac{1}{2}\norm{\rho-\sigma}_1$& Trace distance between $\rho, \sigma \in \mathcal{S}(\hilmap)$ (\cref{def: trace distance})\\
        \hline
        $d_2(\rho_{AB}|\sigma_B)$ & $L_2$-distance to uniform (\cref{def: L2 distance}) \\
        \hline
        $H_{\min}(\rho_{AB}|\sigma_B), H_{\min}(A|B)_\rho$ & Conditional min entropy (\cref{def:quantum_cond_min_entropy}) \\
        \hline
        $H_2(\rho_{AB}|\sigma_B),H_2(A|B)\rho$ & Collision entropy (\cref{def:collision entropy}) \\
        \hline
      
    \caption{Summary of notation.}
    \label{tab:notation}
\end{longtable}

\subsection{Distance measures}\label{sec:distance_states}
Quantifying the level of imperfection of a randomness source amounts to quantifying the distance from its state to an ideal state. Therefore, in this section we introduce distance measures between quantum states and their relevant properties. 

\begin{definition}[Trace distance]
    \label{def: trace distance}
    We define the trace distance between two density operators $\rho, \sigma \in \mathcal{S}(\hilmap)$ as 
    \begin{equation*}
        \delta(\rho,\sigma) = \frac{1}{2}\norm{\rho-\sigma}_1,
    \end{equation*}
    where for any $S\in \mathrm{End}(\hilmap)$ the trace norm $\norm{\cdot}_1$ is given by  
    \begin{equation*}
        \norm{S}_1 \coloneq \tr{\abs{S}} \coloneq \tr{\sqrt{S^\dag S}}.
    \end{equation*}  
\end{definition}

The following statement is immediately obtained from the definition of the trace distance.
\begin{proposition}\label{prop:trace_norm_cq}
    Let $\rho_{XB},\sigma_{XB}\in\mathcal{S}(\hilmap_X\otimes\hilmap_B)$ be cq-states such that
    \begin{equation*}
        \begin{split}
            \rho_{XB} = \sum_{x\in\mathcal{X}} P_X{(x)} \ketbra{x}{x}_X \otimes \rho_{B|x}\quad \text{ and }\quad
            \sigma_{XB} = \sum_{x\in\mathcal{X}} P_X{(x)} \ketbra{x}{x}_X \otimes \sigma_{B|x},
        \end{split}
    \end{equation*}
    i.e., the states have the same distribution $P_{X}(x)$, then
    \begin{equation*}
        \delta\bigl(\rho_{XB} , \sigma_{XB} \bigr) = \mathop{\mathbb{E}}_{x\leftarrow X} \mleft[ \delta\bigl(\rho_{B|x} , \sigma_{B|x} \bigr) \mright].
    \end{equation*}
\end{proposition}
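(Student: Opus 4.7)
The statement is essentially the standard ``block-diagonal'' structure of cq-states under the trace norm, so the approach is to compute $\|\rho_{XB} - \sigma_{XB}\|_1$ directly. The key observation is that because both states share the same classical distribution $P_X$ on the same basis $\{\ket{x}\}_{x \in \mathcal{X}}$, their difference takes the block-diagonal form
\begin{equation*}
    \rho_{XB} - \sigma_{XB} = \sum_{x\in\mathcal{X}} \ketbra{x}{x}_X \otimes \bigl(P_X(x)\rho_{B|x} - P_X(x)\sigma_{B|x}\bigr).
\end{equation*}
I would define $\Delta_x \coloneq P_X(x)(\rho_{B|x} - \sigma_{B|x}) \in \mathrm{End}(\hilmap_B)$, so the difference becomes $\sum_x \ketbra{x}{x}_X \otimes \Delta_x$.

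The main (and only slightly non-trivial) step is to show that for any family of operators $\{\Delta_x\}_x$,
\begin{equation*}
    \Bigl\|\sum_{x} \ketbra{x}{x}_X \otimes \Delta_x\Bigr\|_1 = \sum_{x} \|\Delta_x\|_1.
\end{equation*}
I would argue this by taking the polar decomposition $\Delta_x = U_x |\Delta_x|$ of each block and noting that $U \coloneq \sum_x \ketbra{x}{x}_X \otimes U_x$ is a partial isometry whose adjoint applied to $\sum_x \ketbra{x}{x}_X \otimes \Delta_x$ yields $\sum_x \ketbra{x}{x}_X \otimes |\Delta_x|$, which is positive semidefinite. Hence this is the absolute value of the block-diagonal operator, and taking the trace gives $\sum_x \tr{|\Delta_x|} = \sum_x \|\Delta_x\|_1$.

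Plugging back in and using that $P_X(x) \geq 0$ so $\|\Delta_x\|_1 = P_X(x)\|\rho_{B|x} - \sigma_{B|x}\|_1$, I obtain
\begin{equation*}
    \|\rho_{XB} - \sigma_{XB}\|_1 = \sum_{x} P_X(x) \|\rho_{B|x} - \sigma_{B|x}\|_1,
\end{equation*}
and dividing by $2$ and rewriting the sum as an expectation over $X$ yields the claim. There is no real obstacle here; the only point worth being careful about is justifying the block-diagonal trace-norm identity, which is routine but should be made explicit for a clean proof.
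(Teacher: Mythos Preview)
Your proposal is correct and matches the paper's approach: the paper does not give an explicit proof but simply states that the proposition ``is immediately obtained from the definition of the trace distance,'' which is precisely the direct block-diagonal computation you carry out.
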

A proof of the following lemma can be found in~\cite[section~9.2]{Nielsen_Chuang}.
\begin{lemma}[Data-processing]\label{lem:trace_norm_data_processing}
    For any channel $\mathcal{E}_{B|A}\in\mathrm{CPTP}(\hilmap_A,\hilmap_B)$ and any hermitian operator $S_A\in\mathrm{End}(\hilmap_A)$, we have $\norm{\mathcal{E}_{B|A}[S_A]}_1 \leq \norm{S_A}_1$.
\end{lemma}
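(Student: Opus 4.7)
The plan is to use the Jordan decomposition of a Hermitian operator together with the positivity and trace-preservation of the channel. Concretely, since $S_A$ is Hermitian, I can write it uniquely as $S_A = P_+ - P_-$ where $P_+, P_- \in \mathcal{P}(\hilmap_A)$ are positive semi-definite operators with orthogonal support (obtained by splitting the spectral decomposition into positive and negative eigenvalues). Because the supports are orthogonal, $\abs{S_A} = P_+ + P_-$, and hence $\norm{S_A}_1 = \tr{P_+} + \tr{P_-}$.

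Next I apply $\mathcal{E}_{B|A}$ to both sides to get $\mathcal{E}_{B|A}[S_A] = \mathcal{E}_{B|A}[P_+] - \mathcal{E}_{B|A}[P_-]$. The key structural fact is that complete positivity implies positivity, so $\mathcal{E}_{B|A}[P_+]$ and $\mathcal{E}_{B|A}[P_-]$ both lie in $\mathcal{P}(\hilmap_B)$. This allows me to invoke the triangle inequality of the trace norm and the identity $\norm{Q}_1 = \tr{Q}$ for $Q \in \mathcal{P}(\hilmap_B)$, yielding
\begin{equation*}
    \norm{\mathcal{E}_{B|A}[S_A]}_1 \leq \norm{\mathcal{E}_{B|A}[P_+]}_1 + \norm{\mathcal{E}_{B|A}[P_-]}_1 = \tr{\mathcal{E}_{B|A}[P_+]} + \tr{\mathcal{E}_{B|A}[P_-]}.
\end{equation*}

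Finally, I use the trace-preservation property of $\mathcal{E}_{B|A}$ to rewrite $\tr{\mathcal{E}_{B|A}[P_\pm]} = \tr{P_\pm}$, which gives $\norm{\mathcal{E}_{B|A}[S_A]}_1 \leq \tr{P_+} + \tr{P_-} = \norm{S_A}_1$, as desired. There is no real obstacle here: the argument is essentially a two-line combination of the Jordan decomposition and the two defining properties (CP and TP) of a quantum channel. The only subtlety worth flagging is that we really only use positivity (not full complete positivity) of $\mathcal{E}_{B|A}$ in the step $\mathcal{E}_{B|A}[P_\pm] \in \mathcal{P}(\hilmap_B)$; since the lemma is stated for a general Hermitian operator on $\hilmap_A$ with no implicit ancilla, positivity is sufficient. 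I would also mention that the restriction to Hermitian $S_A$ is essential for the Jordan decomposition step; for general linear operators one needs the polar decomposition and the stronger statement requires complete positivity.
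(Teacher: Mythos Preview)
Your argument is correct and is precisely the standard proof via the Jordan decomposition combined with positivity and trace-preservation. The paper does not supply its own proof but simply refers to \cite[section~9.2]{Nielsen_Chuang}, where exactly this argument appears, so your approach coincides with what the paper invokes.
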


In what follows, we will use the trace distance to quantify the imperfection of a randomness source. In order to derive upper-bounds on the trace distance, we use the $L_2$-distance to uniform originally defined in~\cite{Renner_PhD_thesis}.

\begin{definition}[$L_2$-distance to uniform]
\label{def: L2 distance}
    Let $\rho_{AB}\in \mathcal{P}(\hilmap_A\otimes \hilmap_B)$ and $\sigma_B\in\mathcal{P}(\hilmap_{B})$. We define the (conditional) $L_2$-distance to uniform of $\rho_{AB}$ relative to $\sigma_B$ as
    \begin{equation*}
        d_2(\rho_{AB}|\sigma_B) \coloneq \tr{\mleft((\id_A \otimes \sigma^{-1/4}_B)(\rho_{AB} - \omega_{A} \otimes \rho_B)(\id_A \otimes \sigma^{-1/4}_B)\mright)^2},
    \end{equation*}
    where $\omega_{A}$ denotes the maximally mixed state on $A$.
\end{definition}
In the case of cq-states, the conditional $L_2$-distance can be expressed more compactly. 
\begin{lemma}[{\cite[Lemma 5.2.4]{Renner_PhD_thesis}}]\label{lem:d_2_cq_states}
    Let $\rho_{XB}\in\mathcal{S}(\hilmap_X \otimes \hilmap_B)$ be a cq-state with conditional operators $\rho_{B \land x}$, i.e., $\rho_{XB} =\sum_{x\in\mathcal{X}} [x]_X\otimes\rho_{B \land x}$.
    Let $\sigma_B\in\mathcal{P}(\hilmap_B)$, then
    \begin{equation*}
        d_2(\rho_{XB}|\sigma_B) = \sum_{x\in\mathcal{X}}\tr{\mleft( \sigma_B^{-1/4} \rho_{B \land x} \sigma_B^{-1/4} \mright)^2} - \frac{1}{|\mathcal{X}|}\tr{\mleft( \sigma_B^{-1/4} \rho_B \sigma_B^{-1/4} \mright)^2}.
    \end{equation*}
\end{lemma}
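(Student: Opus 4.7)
The plan is to reduce the definition of $d_2$ directly, exploiting the classical structure of $X$. The observation is that in the preferred basis, every operator of the form $\sum_x [x]_X \otimes T_x$ squares block-diagonally, so the trace decouples into a sum over $x$ of traces of $T_x^2$. Once this decoupling is achieved, only a bookkeeping calculation remains.

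Concretely, I would first rewrite $\omega_X \otimes \rho_B = \tfrac{1}{|\mathcal{X}|}\sum_x [x]_X \otimes \rho_B$ and use the cq decomposition $\rho_{XB} = \sum_x [x]_X \otimes \rho_{B\land x}$ to obtain
\begin{equation*}
    \rho_{XB} - \omega_X \otimes \rho_B = \sum_{x \in \mathcal{X}} [x]_X \otimes \bigl(\rho_{B\land x} - \tfrac{1}{|\mathcal{X}|}\rho_B\bigr).
\end{equation*}
Conjugating by $\id_X \otimes \sigma_B^{-1/4}$ only affects the $B$-factor, so defining $S_x := \sigma_B^{-1/4}\bigl(\rho_{B\land x} - \tfrac{1}{|\mathcal{X}|}\rho_B\bigr)\sigma_B^{-1/4}$ yields $(\id_X \otimes \sigma_B^{-1/4})(\rho_{XB} - \omega_X \otimes \rho_B)(\id_X \otimes \sigma_B^{-1/4}) = \sum_x [x]_X \otimes S_x$. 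Using $[x][x']_X = \delta_{xx'}[x]_X$ the square collapses to $\sum_x [x]_X \otimes S_x^2$, and taking the trace gives $d_2(\rho_{XB}|\sigma_B) = \sum_{x \in \mathcal{X}} \tr{S_x^2}$.

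It then remains to expand $S_x^2$. Writing $A_x := \sigma_B^{-1/4}\rho_{B\land x}\sigma_B^{-1/4}$ and $B := \sigma_B^{-1/4}\rho_B \sigma_B^{-1/4}$, so that $S_x = A_x - \tfrac{1}{|\mathcal{X}|} B$, one has
\begin{equation*}
    \tr{S_x^2} = \tr{A_x^2} - \tfrac{2}{|\mathcal{X}|}\tr{A_x B} + \tfrac{1}{|\mathcal{X}|^2}\tr{B^2},
\end{equation*}
using cyclicity of the trace to combine the cross terms. Summing over $x$ and invoking the key identity $\sum_x A_x = B$, which follows from $\sum_x \rho_{B\land x} = \rho_B$, we get $\sum_x \tr{A_x B} = \tr{B^2}$, so the two correction terms combine to $-\tfrac{1}{|\mathcal{X}|}\tr{B^2}$. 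Plugging back the definitions of $A_x$ and $B$ reproduces exactly the claimed expression.

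There is no substantial obstacle: the only subtle point is making sure that the conjugation by $\sigma_B^{-1/4}$ is handled consistently (it acts trivially on $X$ and commutes with no operator on $B$ in general, but cyclicity of the trace is all that is needed), and that $\sigma_B^{-1/4}$ is interpreted in the standard way as the inverse on the support of $\sigma_B$. Since $\rho_{B\land x}$ and $\rho_B$ need not have support inside that of $\sigma_B$, one may implicitly assume (as in Renner's original statement) that either $d_2$ can be infinite or that $\mathrm{supp}(\rho_B) \subseteq \mathrm{supp}(\sigma_B)$; the algebraic identity itself holds as stated on this support.
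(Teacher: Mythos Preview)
Your proof is correct. The paper itself does not prove this lemma; it merely cites it as \cite[Lemma 5.2.4]{Renner_PhD_thesis}, so there is no paper proof to compare against. Your argument is the standard direct computation (and indeed is essentially the one given in Renner's thesis): exploit the block-diagonal structure of the cq-state so that the square decouples over $x$, then expand and use $\sum_x \rho_{B\land x} = \rho_B$ to collapse the cross terms.
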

The following statement allows us to derive upper bounds on the trace distance using the $L_2$ distance:

\begin{corollary}[{\cite[Lemma 5.2.3]{Renner_PhD_thesis}}]\label{cor:bound_one_norm_by_two_norm}
    Let $\rho_{AB} \in \mathcal{P}(\hilmap_A \otimes \hilmap_B)$ and $\sigma_B \in \mathcal{P}(\mathcal{H_B})$. Then\footnote{
        In principle, one has to be careful when dealing with non-invertible $\sigma_{B}$. However, it is readily verified that the statement holds if $\mathrm{ker}(\id_{A} \otimes \sigma_{B}) \subseteq \mathrm{ker}(\rho_{AB})$ which will always be true in our applications.}
    \begin{equation*}
        \delta\big(\rho_{AB},\omega_{A}\otimes \rho_B\big) \leq \frac{1}{2}\sqrt{\mathrm{dim}(\hilmap_A)\tr{\sigma_B}d_2(\rho_{AB}|\sigma_B) },
    \end{equation*}
    where $\omega_{A}$ denotes the maximally mixed state on $A$.
\end{corollary}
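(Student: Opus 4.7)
The plan is to bound $\|\Delta\|_1$ with $\Delta \coloneq \rho_{AB} - \omega_A \otimes \rho_B$ by the $L_2$-quantity $d_2(\rho_{AB}|\sigma_B)$ via the noncommutative H\"older inequality. The key idea is to ``dress'' $\Delta$ symmetrically with $\id_A \otimes \sigma_B^{1/4}$ on both sides: one inner factor will reproduce precisely the operator whose squared Hilbert--Schmidt norm is $d_2(\rho_{AB}|\sigma_B)$, while the two outer factors jointly supply the $\mathrm{dim}(\hilmap_A)\,\tr{\sigma_B}$ term appearing in the bound.

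Concretely, I would first write
\begin{equation*}
    \Delta = (\id_A \otimes \sigma_B^{1/4})\, \tilde\Delta\, (\id_A \otimes \sigma_B^{1/4}),
    \qquad
    \tilde\Delta \coloneq (\id_A \otimes \sigma_B^{-1/4})\, \Delta\, (\id_A \otimes \sigma_B^{-1/4}),
\end{equation*}
so that $\|\tilde\Delta\|_2^2 = \tr{\tilde\Delta^2} = d_2(\rho_{AB}|\sigma_B)$ holds by definition. Next, I would apply the three-factor H\"older inequality for Schatten norms, $\|ABC\|_1 \leq \|A\|_4 \|B\|_2 \|C\|_4$ (valid since $\tfrac{1}{4} + \tfrac{1}{2} + \tfrac{1}{4} = 1$), with $A = C = \id_A \otimes \sigma_B^{1/4}$ and $B = \tilde\Delta$. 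A direct computation gives
\begin{equation*}
    \|\id_A \otimes \sigma_B^{1/4}\|_4 = \bigl(\tr{\id_A \otimes \sigma_B}\bigr)^{1/4} = \bigl(\mathrm{dim}(\hilmap_A)\,\tr{\sigma_B}\bigr)^{1/4},
\end{equation*}
and combining the three factors produces $\|\Delta\|_1 \leq \sqrt{\mathrm{dim}(\hilmap_A)\, \tr{\sigma_B}\, d_2(\rho_{AB}|\sigma_B)}$. Dividing by $2$ and using $\delta(\rho,\sigma) = \tfrac{1}{2}\|\rho-\sigma\|_1$ then yields the claim.

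The only subtle point is that when $\sigma_B$ is singular, $\sigma_B^{-1/4}$ must be interpreted as a pseudoinverse on $\mathrm{supp}(\sigma_B)$. Under the footnote hypothesis $\mathrm{ker}(\id_A \otimes \sigma_B) \subseteq \mathrm{ker}(\rho_{AB})$, a short argument (taking partial traces) shows that $\mathrm{supp}(\rho_B) \subseteq \mathrm{supp}(\sigma_B)$ as well, so both $\rho_{AB}$ and $\omega_A \otimes \rho_B$ are supported on $\hilmap_A \otimes \mathrm{supp}(\sigma_B)$; consequently the dressing identity above still holds verbatim and the computation goes through without change. Aside from this technicality, the argument is essentially a one-line application of H\"older's inequality, so I do not anticipate any genuine obstacle.
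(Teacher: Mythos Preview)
Your proof is correct and is essentially the standard argument from \cite[Lemma~5.2.2--5.2.3]{Renner_PhD_thesis}, which the paper cites but does not reproduce. Renner's proof proceeds in exactly the same way: write $\Delta = (\id_A \otimes \sigma_B^{1/4})\,\tilde\Delta\,(\id_A \otimes \sigma_B^{1/4})$ and apply the H\"older/Cauchy--Schwarz inequality with exponents $(4,2,4)$, so there is no difference in approach.
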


\subsection{Entropies}\label{sec:entropies}
\begin{definition}[Conditional min-entropy] \label{def:quantum_cond_min_entropy}
    Let $\rho_{AB}\in\mathcal{P}(\hilmap_A \otimes \hilmap_B)$ and $\sigma_B\in\mathcal{P}(\hilmap_B)$. If $\ker(\id_{A} \otimes \sigma_{B}) \subseteq \rho_{AB}$, then the min-entropy of $\rho_{AB}$ relative to $\sigma_B$ is defined as 
    \begin{align*}
        \begin{split}
            H_{\min}(\rho_{AB}|\sigma_B) \coloneq& -\log\min\{\lambda : \rho_{AB} \leq \lambda \id_A \otimes \sigma_{B}\} = \max \{ z : \rho_{AB} \leq 2^{-z} \id_{A} \otimes \sigma_{B} \},
        \end{split}
    \end{align*}
    and as $-\infty$ if $\ker(\id_{A} \otimes \sigma_{B}) \not\subseteq \rho_{AB}$.
    The min-entropy of $\rho_{AB}$ given $B$ is defined as 
    \begin{equation*}
        H_{\min}(A|B)_\rho \coloneq \max_{\sigma_B} H_{\min}(\rho_{AB}|\sigma_B),
    \end{equation*}
    where the maximum is taken over all $\sigma_B\in\mathcal{S}(\hilmap_B)$.
\end{definition}

\begin{lemma}[Data-processing]\label{lem:data_processing}
    For any channel $\mathcal{E}_{B'|B}\in\mathrm{CPTP}(\hilmap_B,\hilmap_{B'})$ and for any state $\rho_{AB} \in \mathcal{S}(\hilmap_A\otimes\hilmap_B)$, we have
    \begin{equation*}
        H_{\min}(A|B)_{\rho} \leq H_{\min}(A|B')_{\sigma}
    \end{equation*}
    with $\sigma_{AB'} = \mathcal{E}_{B'|B}(\rho_{AB})$.
\end{lemma}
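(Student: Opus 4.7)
The plan is to work directly from the variational definition of the conditional min-entropy given in~\cref{def:quantum_cond_min_entropy}. Set $h \coloneq H_{\min}(A|B)_{\rho}$ and let $\sigma_B^{*} \in \mathcal{S}(\hilmap_B)$ be a state attaining the maximum, so that by definition
\begin{equation*}
    \rho_{AB} \leq 2^{-h}\, \id_A \otimes \sigma_B^{*}.
\end{equation*}
This operator inequality is the only handle we have, and the goal is to transport it through the channel $\mathcal{E}_{B'|B}$.

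First, I would apply the map $\mathcal{I}_{A} \otimes \mathcal{E}_{B'|B}$ to both sides. Because $\mathcal{E}_{B'|B}$ is completely positive, the product map $\mathcal{I}_{A} \otimes \mathcal{E}_{B'|B}$ is positive, and hence it preserves the partial order on Hermitian operators. Combined with linearity, this yields
\begin{equation*}
    \sigma_{AB'} = (\mathcal{I}_A \otimes \mathcal{E}_{B'|B})(\rho_{AB}) \leq 2^{-h}\, \id_A \otimes \mathcal{E}_{B'|B}(\sigma_B^{*}),
\end{equation*}
where I used that $\mathcal{E}_{B'|B}$ acts trivially on $\id_A$. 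Define $\tau_{B'} \coloneq \mathcal{E}_{B'|B}(\sigma_B^{*})$; since $\mathcal{E}_{B'|B}$ is trace-preserving and positive, $\tau_{B'} \in \mathcal{S}(\hilmap_{B'})$ is a valid density operator.

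By the definition of conditional min-entropy relative to $\tau_{B'}$, the above inequality means $H_{\min}(\sigma_{AB'} | \tau_{B'}) \geq h$. Maximizing over all states on $B'$ then gives $H_{\min}(A|B')_{\sigma} \geq h = H_{\min}(A|B)_{\rho}$, which is the desired statement.

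The only subtlety I expect is bookkeeping around the kernel condition in~\cref{def:quantum_cond_min_entropy}: if $\sigma_B^{*}$ is not invertible (or the maximum is not attained in a strict sense), one should either argue that it suffices to consider a sequence of invertible approximants $\sigma_B^{*} + \epsilon \id_B / d_B$ with $\epsilon \to 0$, or observe that the inequality $\rho_{AB} \leq 2^{-h}\, \id_A \otimes \sigma_B^{*}$ already implies the kernel inclusion $\ker(\id_A \otimes \sigma_B^{*}) \subseteq \ker(\rho_{AB})$, which is preserved by the CP map. No deeper tools (e.g.\ Stinespring, semidefinite duality) are needed, since the monotonicity is essentially built into the fact that CP maps preserve operator inequalities.
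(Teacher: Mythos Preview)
Your proof is correct and is the standard argument for data-processing of the min-entropy. The paper does not provide its own proof of \cref{lem:data_processing}; it is simply stated as a known fact (as is common for this result, which appears for instance in Renner's thesis and in Tomamichel's book cited in the paper), so there is nothing to compare against.
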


If, instead of manipulating the conditioning system $B$, we apply a function to system $A$, we would expect our knowledge about $A$ to increase.
Consequently, the entropy should decrease.
Indeed, for the specific case of two classical systems and a linear transformation acting on the first system we get the following result which is a generalization of \cite[lemma 3]{Dodis_extractors}.
\begin{proposition}\label{prop:min_entropy_matrix_multipl}

    Let $\rho_{XB}$ be a cq-state where $X$ takes values in $\{0, 1\}^{n}$. Consider a $n \times n$ matrix $L$ with entries in $\{0, 1\}$ and $\mathrm{rank}(L) \geq n - r$. Then
    \begin{equation*}
        H_{\min}(X'|B)_{\rho_{(L \cdot X) B}} \geq H_{\min}(X|B)_{\rho_{XB}} - r,
    \end{equation*}
    where $X'= L\cdot X$.
\end{proposition}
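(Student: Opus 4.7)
The plan is to work directly from the definition of conditional min-entropy and exploit the fact that each fiber of $L$ has bounded size. Let $k = H_{\min}(X|B)_{\rho_{XB}}$ and pick an optimizer $\sigma_B \in \mathcal{S}(\hilmap_B)$ achieving this value, so that $\rho_{XB} \leq 2^{-k}\,\id_X \otimes \sigma_B$. Since $\rho_{XB}$ is a cq-state, sandwiching by $\langle x|\cdot|x\rangle$ on the $X$ register yields, for every $x \in \{0,1\}^n$, the operator inequality
\begin{equation*}
    \rho_{B \wedge x} \leq 2^{-k}\,\sigma_B.
\end{equation*}

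Next I would use the rank hypothesis to bound the size of the fibers of $L$. Since $\mathrm{rank}(L) \geq n - r$, the kernel of $L$ (viewed as a linear map over $\mathbb{F}_2$) has dimension at most $r$, so $|\ker L| \leq 2^r$. For any $x' \in \{0,1\}^n$, the preimage $L^{-1}(x')$ is either empty or a coset of $\ker L$, and thus $|L^{-1}(x')| \leq 2^r$ in either case.

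Now I would apply the CPTP map $f_{X'|X}$ associated with $f(x) = L \cdot x$ (cf.\ the table of notation) to the cq-state $\rho_{XB}$. This produces another cq-state $\rho_{(L\cdot X)B} = \sum_{x'} [x']_{X'} \otimes \rho_{B \wedge x'}$ whose conditional operators satisfy
\begin{equation*}
    \rho_{B \wedge x'} \;=\; \sum_{x \,:\, L\cdot x = x'} \rho_{B \wedge x} \;\leq\; |L^{-1}(x')|\cdot 2^{-k}\,\sigma_B \;\leq\; 2^{-(k-r)}\,\sigma_B.
\end{equation*}
Summing this over $x'$ against $[x']_{X'}$ gives $\rho_{(L\cdot X)B} \leq 2^{-(k-r)}\,\id_{X'} \otimes \sigma_B$, which by \Cref{def:quantum_cond_min_entropy} means exactly $H_{\min}(X'|B)_{\rho_{(L\cdot X)B}} \geq k - r$, as required.

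There is no real obstacle here beyond bookkeeping; the proof is essentially the same as the classical coset-counting argument of \cite[lemma~3]{Dodis_extractors}, and the only quantum ingredient is the fact that for cq-states the global operator inequality $\rho_{XB} \leq 2^{-k}\id_X \otimes \sigma_B$ is equivalent to the per-branch inequalities on the $\rho_{B\wedge x}$, which is immediate since the $X$-register is diagonal in a fixed basis. One minor point worth checking is that we may choose the \emph{same} $\sigma_B$ as the one optimizing $H_{\min}(X|B)_\rho$ when lower-bounding $H_{\min}(X'|B)_{\rho_{(L\cdot X)B}}$; this is fine because the latter is defined as a maximum over all $\sigma_B$, so any admissible choice yields a valid lower bound.
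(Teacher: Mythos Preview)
Your proof is correct and is essentially identical to the paper's own argument: both pick an optimizer $\sigma_B$ for $H_{\min}(X|B)_\rho$, pass to the per-branch inequalities $\rho_{B\wedge x}\le 2^{-k}\sigma_B$, sum over the fiber $L^{-1}(x')$ of size at most $2^r$, and read off the operator inequality $\rho_{(L\cdot X)B}\le 2^{-(k-r)}\id_{X'}\otimes\sigma_B$. Your write-up is slightly more explicit about the coset/kernel bound and about why reusing the same $\sigma_B$ is legitimate, but there is no substantive difference.
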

\begin{proof}

Let $z = H_{\min}(X|B)_{\rho}$ and $\sigma_B \in \mathcal{S}(\hilmap_B)$ be such that $\rho_{XB} \leq 2^{-z} \id_{X} \otimes \sigma_B$ (i.e., $\sigma_B$ is the optimizer for the min-entropy). In particular, this implies that $\rho_{B \land x} \leq 2^{-z} \sigma_{B}$ holds for all $x$. We have that
\begin{equation} \label{eq:min_ent_mat_mult_quant}
\begin{aligned}
    \rho_{(L \cdot X) B} =& \sum_{x'} [x']_{X'} \otimes \Bigl(\sum_{\substack{x\in\{0,1\}^n \\\mathrm{s.t.}\; Lx=x'}}\rho_{B \land x} \Bigr) \leq \sum_{x'} [x']_{X'} \otimes \Bigl( \sum_{\substack{x\in\{0,1\}^n \\\mathrm{s.t.}\; Lx=x'}} 2^{-z} \sigma_{B} \Bigr) \leq 2^{-z}2^{r} \id_{X'} \otimes \sigma_{B},
\end{aligned}
\end{equation}
where for the first inequality we used that $\rho_{B \land x} \leq 2^{-z} \sigma_B$ and for second inequality we noticed that if $\mathrm{rank}(L) \geq n-r$, then $L$ can map at most $2^r$ different inputs $x$ to the same value $x'$. The desired inequality then follows immediately from~\cref{eq:min_ent_mat_mult_quant} and the definition of the min-entropy (\cref{def:quantum_cond_min_entropy}).
\end{proof}

\begin{definition}[Collision-entropy]
\label{def:collision entropy}
    Let $\rho_{AB}\in\mathcal{P}(\hilmap_A \otimes \hilmap_B)$ and $\sigma_B\in\mathcal{P}(\hilmap_B)$. If $\ker(\id_{A} \otimes \sigma_B) \subseteq \ker(\rho_{AB})$, then the collision-entropy of $\rho_{AB}$ relative to $\sigma_B$ is defined as 
    \begin{equation*}
        H_2(\rho_{AB}|\sigma_B) \coloneq -\log \frac{1}{\tr{\rho_{AB}}} \tr{\mleft((\id_{A} \otimes\sigma_B^{-1/4})\rho_{AB}(\id_{A} \otimes\sigma_B^{-1/4})\mright)^2}
    \end{equation*}
    and as $-\infty$ if $\ker(\id_{A} \otimes \sigma_{B}) \not\subseteq \ker(\rho_{AB})$.
    The collision-entropy of $\rho_{AB}$ given $B$ is defined as
    \begin{equation*}
        H_2(A|B)_\rho = \sup_{\sigma_B} H_2(\rho_{AB}|\sigma_B)
    \end{equation*}
    where the maximum is taken over all $\sigma_B\in\mathcal{S}(\hilmap_B)$.
\end{definition}
As shown in \cite[remark 5.3.2]{Renner_PhD_thesis}, the collision-entropy provides an upper bound to the min-entropy, i.e.,
\begin{equation*}
    H_{\min}(\rho_{AB}|\sigma_B)\leq H_2(\rho_{AB}|\sigma_B),
\end{equation*}
which immediately implies
\begin{equation*}
    H_{\min}(A|B)_\rho\leq H_2(A|B)_\rho.
\end{equation*}
In the scenario where system $A$ is classical, the collision-entropy simplifies to the expression given below, which was shown in \cite[remark 5.3.3]{Renner_PhD_thesis}.
\begin{lemma}
\label{lem:collision_ent_cq}
    Let $\rho_{XB}\in\mathcal{P}(\hilmap_X\otimes\hilmap_B)$ be a (non-normalized) cq-state that is classical on $X$.
    Let $\sigma_B\in\mathcal{P}(\hilmap_B)$, then
    \begin{equation*}
        H_{2}(\rho_{XB}|\sigma_B) = - \log \frac{1}{\tr{\rho_{XB}}} \sum_{x} \tr{\mleft(\sigma_B^{-1/4} \rho_{B \land x}\sigma_B^{-1/4}\mright)^2}.
    \end{equation*}
\end{lemma}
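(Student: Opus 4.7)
The plan is to unfold the definition of $H_2(\rho_{XB}|\sigma_B)$ from~\cref{def:collision entropy} and exploit the cq-structure of $\rho_{XB}$. Writing $\rho_{XB} = \sum_{x \in \mathcal{X}} [x]_X \otimes \rho_{B \land x}$ and noting that $\sigma_B^{-1/4}$ acts trivially on the $X$-system, I would first compute
\begin{equation*}
    (\id_X \otimes \sigma_B^{-1/4}) \rho_{XB} (\id_X \otimes \sigma_B^{-1/4}) = \sum_{x \in \mathcal{X}} [x]_X \otimes \sigma_B^{-1/4} \rho_{B \land x} \sigma_B^{-1/4}.
\end{equation*}

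Next, I would square this operator. Since $\{[x]_X\}_{x \in \mathcal{X}}$ are mutually orthogonal rank-one projectors (they project onto distinct elements of the preferred basis on $\hilmap_X$), one has $[x]_X [x']_X = \delta_{x,x'} [x]_X$, and the cross terms vanish:
\begin{equation*}
    \Bigl( \sum_{x} [x]_X \otimes \sigma_B^{-1/4} \rho_{B \land x} \sigma_B^{-1/4} \Bigr)^2 = \sum_{x} [x]_X \otimes \bigl( \sigma_B^{-1/4} \rho_{B \land x} \sigma_B^{-1/4} \bigr)^2.
\end{equation*}
Taking the trace and using $\tr{[x]_X} = 1$ yields $\sum_x \tr{(\sigma_B^{-1/4} \rho_{B \land x} \sigma_B^{-1/4})^2}$. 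Plugging this into the formula $-\log \frac{1}{\tr \rho_{XB}}(\cdots)$ from~\cref{def:collision entropy} gives the claimed expression.

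The only point requiring any care is the kernel condition from~\cref{def:collision entropy}, namely $\ker(\id_X \otimes \sigma_B) \subseteq \ker(\rho_{XB})$; but this is exactly the assumption needed for $\sigma_B^{-1/4}$ to be well-defined on the support of each $\rho_{B\land x}$, so the decomposition above is valid on the relevant support and can be regularized in the standard way otherwise. There is no real obstacle here — the lemma is a direct computation and the proof is essentially a matter of writing out the definition and invoking the orthogonality of the classical basis on $X$.
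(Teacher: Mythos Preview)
Your proposal is correct: the lemma is a direct unfolding of \cref{def:collision entropy} together with the block-diagonal structure of a cq-state, and your computation is exactly the right one. The paper does not supply its own proof here but simply cites \cite[remark~5.3.3]{Renner_PhD_thesis}, where the argument is precisely the orthogonality calculation you wrote down.
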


\section{Security model and the \textDEOR{}-extractor} \label{sec:two_source_extractors}
As discussed in the introduction, there are different security models for two-source extractors considered in the literature. 
In this section, we present two-source extractors without side information, against classical/quantum product-type side information and against classical/quantum side information in the Markov model.

\subsection{No side information}
Let us begin with the case of two sources of randomness, that are imperfect, and where there is no adversary collecting side information. 
\begin{definition}[Two-source extractor] \label{def:ext}
    A function $\Ext : \{0,1\}^{n_1} \times \{0,1\}^{n_2} \mapsto \{0,1\}^{m}$ is said to be a $(k_1,k_2,\varepsilon)$ weak two-source extractor if for all independent sources $X_1$ and $X_2$ that satisfy $H_{\min}(X_1)\geq k_1$ and  $H_{\min}(X_2)\geq k_2$,
    \begin{equation*}
        \delta\big(\rho_{\Ext(X_1,X_2)},\omega_{m}\big)\leq \varepsilon
    \end{equation*}
    is fulfilled. 
    %We denote $\rho_{\Ext(X_1,X_2)} = \Ext_{X_1X_2}\rho_{X_1 X_2}$. 
    The extractor is called $X_i$-strong for $i=1,2$ if
    \begin{equation*}
        \delta\big(\rho_{\Ext(X_1,X_2)X_i}, \omega_{m} \otimes \rho_{X_i}\big)\leq \varepsilon,
    \end{equation*}
    %where $\rho_{\Ext(X_1,X_2)X_i} = (\Ext_{X_1X_2}\otimes\mathcal{I}_{X_i})\rho_{X_1 X_2 X_i}$ 
    where the second $X_i$ is an additional register containing a copy of the first $X_i$ register.
\end{definition}
% In the definition above, the function $\Ext$ is understood as a CPTP map as described in~\cref{sec:QM_notation}.
% Independence of the sources $X_1$ and $X_2$ implies that $P_{X_1 X_2}(x_1,x_2)=P_{X_1}(x_1)P_{X_2}(x_2)$ and
% \begin{equation*}
%     \rho_{X_1 X_2} = \sum_{x_1 \in \mathcal{X}_1, x_2\in\mathcal{X}_2} P_{X_1}(x_1)P_{X_2}(x_2)\mleft([x_1]_{X_1}\otimes[x_2]_{X_2}\mright).
% \end{equation*}  %For compactness, we will typically write $[x_1]$ instead of $\ketbra{x_1}{x_1}_{X_1}$ from here on. 

\subsection{Product-type side information}
If we allow the adversary to obtain some classical or quantum side information about the randomness sources, but demand independence between the side information on source $X_1$ and the side information on source $X_2$, we get the following definition.
\begin{definition}[Two-source extractor against quantum (classical) product-type knowledge] \label{def:ext_qc_know}
    A function $\Ext : \{0,1\}^{n_1} \times \{0,1\}^{n_2} \to \{0,1\}^{m}$ is said to be a $(k_1,k_2,\varepsilon)$ weak two-source extractor against quantum (classical) product-type knowledge if for all sources $X_1$, $X_2$ and quantum (classical) side information $C=C_1 C_2$ of the form $\rho_{X_1 X_2 C}=\rho_{X_1 C_1}\otimes\rho_{X_2 C_2}$ that satisfies $H_{\min}(X_1|C_1)\geq k_1$ and  $H_{\min}(X_2|C_2)\geq k_2$,
    \begin{equation*}
        \delta\big(\rho_{\Ext(X_1,X_2) C},\omega_{m}\otimes\rho_C\big)\leq \varepsilon
    \end{equation*}
    is fulfilled. 
    %We denote $\rho_{\Ext(X_1,X_2) C} = (\Ext_{X_1X_2}\otimes\mathcal{I}_C)\rho_{X_1 X_2 C}$.
    The extractor is called $X_i$-strong for $i=1,2$ if
    \begin{equation*}
        \delta\big(\rho_{\Ext(X_1,X_2)X_i C}, \omega_{m}\otimes\rho_{X_i C}\big)\leq \varepsilon,
    \end{equation*}
    %where $\rho_{\Ext(X_1,X_2) X_i C} = (\Ext_{X_1X_2}\otimes\mathcal{I}_{X_i C})\rho_{X_1 X_2 X_i C}$ 
    where the second $X_i$ is an additional register containing a copy of the first $X_i$ register.
\end{definition}
% For product-type knowledge we have $\rho_{X_i C}=\rho_{X_i C_i}\otimes\rho_{C_{\bar{i}}}$, where the index $\bar{i}$ denotes the remaining index that is different from $i$.
Note, that using \cref{prop:trace_norm_cq} we can rewrite the security condition for an $X_1$-strong extractor as
\begin{equation}\label{eq:strong_ext_condition}
    \begin{split}
        \delta\big(\rho_{\Ext(X_1,X_2)X_1 C_1 C_2}, \omega_{m}\otimes\rho_{X_1 C_1}\otimes\rho_{C_2}\big)
        &= \mathop{\mathbb{E}}_{x_1\leftarrow X_1}\mleft[ \delta\bigl(\rho_{\Ext(x_1,X_2) C_2}\otimes\rho_{C_1|x_1}, \omega_{m}\otimes\rho_{C_2}\otimes\rho_{C_1|x_1}\bigr) \mright]\\
%           &= \frac{1}{2}\Biggl\lVert\sum_{x_1,x_2} P_{X_1}(x_1)P_{X_2}(x_2) \Big([\Ext(x_1,x_2)]\otimes [x_1] \otimes \rho_{C_1}^{x_1}\otimes \rho_{C_2}^{x_2} - \rho_{U_m}\otimes [x_1] \otimes\rho_{C_1}^{x_1}\otimes\rho_{C_2}^{x_2}\Big)\Biggr\rVert _1\\
%           &= \mathop{\mathbb{E}}_{x_1\leftarrow X_1}\mleft[ \frac{1}{2}\norm{\sum_{x_2} P_{X_2}(x_2)\mleft([\Ext(x_1,x_2)] \otimes \rho_{C_1}^{x_1} \otimes \rho_{C_2}^{x_2} - \rho_{U_m}\otimes\rho_{C_1}^{x_1}\otimes\rho_{C_2}^{x_2}\mright)}_1 \mright]\\
%           &= \sum_{x_1} P_{X_1}(x_1)\frac{1}{2}\norm{\sum_{x_2} P_{X_2}(x_2)\mleft([\Ext(x_1,x_2)] \otimes \rho_{C_2}^{x_2} - \rho_{U_m}\otimes\rho_{C_2}^{x_2}\mright)}_1\\
        &= \mathop{\mathbb{E}}_{x_1\leftarrow X_1} \mleft[\delta\big(\rho_{\Ext(x_1,X_2)C_2}, \omega_{m}\otimes\rho_{C_2}\big)\mright] \\
        &= \delta \big( \rho_{\mathrm{Ext}(X_1, X_2) X_1 C_2} , \omega_{m} \otimes \rho_{X_1 C_2} \big),
    \end{split}
\end{equation}
where $\rho_{\Ext(x_1, X_2)C_2}$ is the state obtained by interpreting the function $\Ext(x_1,\cdot)$ for fixed $x_1$ as a channel acting on $X_{2}$.
Following the same argument as in \cite[proposition 1]{KT08}, we can show that a two-source extractor always yields a two-source extractor against classical product-type knowledge with weaker parameters. %\carla{Remark on conditional entropy}
For convenience, we reproduce the proof in \cref{app:proof:ext_to_ext_against_classical_product_type}.
\begin{restatable}{proposition}{extagainstcpt}
    \label{prop:ext_to_ext_against_classical_product_type}
    A $(k_1,k_2,\varepsilon)$ $X_1$-strong two-source extractor is a $(k_1, k_2 + \log\frac{1}{\varepsilon}, 2\varepsilon)$ $X_1$-strong two-source extractor against classical product-type knowledge.
\end{restatable}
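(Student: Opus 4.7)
The plan is to mimic the splitting-lemma style argument of \cite{KT08}: reduce the statement about classical product-type side information to a statement about no side information by conditioning on the value of $C_2$ and showing that "most" values of $C_2$ leave $X_2$ with nearly the original min-entropy. The product structure and the $X_1$-strong property will eliminate $C_1$ for free, so the work all happens on the $C_2$ side.

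First I would use the reformulation in~\cref{eq:strong_ext_condition} to replace the target trace distance $\delta(\rho_{\mathrm{Ext}(X_1,X_2)X_1 C_1 C_2},\omega_m\otimes \rho_{X_1 C_1}\otimes \rho_{C_2})$ by $\delta(\rho_{\mathrm{Ext}(X_1,X_2)X_1 C_2},\omega_m\otimes \rho_{X_1 C_2})$; here the product-type assumption $\rho_{X_1 X_2 C}=\rho_{X_1 C_1}\otimes \rho_{X_2 C_2}$ is what makes $C_1$ drop out. Because $X_1$ is independent of $(X_2,C_2)$, applying~\cref{prop:trace_norm_cq} with $C_2$ classical yields
\begin{equation*}
  \delta(\rho_{\mathrm{Ext}(X_1,X_2)X_1 C_2},\omega_m\otimes \rho_{X_1 C_2})
  = \mathop{\mathbb{E}}_{c_2\leftarrow C_2}\bigl[\delta(\rho_{\mathrm{Ext}(X_1,X_2|c_2)X_1},\omega_m\otimes \rho_{X_1})\bigr],
\end{equation*}
where conditional on $C_2=c_2$ the variable $X_2|c_2$ remains independent of $X_1$.

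The main technical step is a standard "splitting" argument for classical min-entropy. Let $k_2'=k_2+\log\tfrac{1}{\varepsilon}$ and call $c_2$ good when $H_{\min}(X_2|C_2=c_2)\geq k_2$ and bad otherwise. From the hypothesis $H_{\min}(X_2|C_2)\geq k_2'$ there exists $\sigma_{C_2}$ with $P_{X_2 C_2}(x_2,c_2)\leq \varepsilon\cdot 2^{-k_2}\sigma_{C_2}(c_2)$; on the other hand any bad $c_2$ admits some $x_2^\star$ with $P_{X_2 C_2}(x_2^\star,c_2)>2^{-k_2}P_{C_2}(c_2)$. Combining the two inequalities gives $P_{C_2}(c_2)<\varepsilon\,\sigma_{C_2}(c_2)$ for bad $c_2$, so summing yields $\Pr[C_2\text{ bad}]<\varepsilon$. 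This Markov-type bound is the part I expect to require the most care in writing out cleanly, since one has to juggle the operator inequality defining min-entropy and the classical conditional distributions simultaneously.

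Finally I would split the expectation over $c_2$ into good and bad parts. For each good $c_2$, the sources $X_1$ and $X_2|c_2$ are independent and satisfy the original min-entropy hypotheses $H_{\min}(X_1)\geq H_{\min}(X_1|C_1)\geq k_1$ (by data-processing,~\cref{lem:data_processing}) and $H_{\min}(X_2|C_2=c_2)\geq k_2$, so the $(k_1,k_2,\varepsilon)$ $X_1$-strong extractor property contributes at most $\varepsilon$. For bad $c_2$ I would bound the trace distance trivially by $1$, contributing at most $\Pr[C_2\text{ bad}]\leq \varepsilon$. Adding the two gives the announced bound of $2\varepsilon$.
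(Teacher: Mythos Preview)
Your proposal is correct and follows essentially the same approach as the paper's proof: eliminate $C_1$ via~\cref{eq:strong_ext_condition}, condition on $C_2$, bound the probability that $C_2$ is ``bad'' by $\varepsilon$, and then split the expectation into good and bad parts. The only cosmetic difference is in how the bad-probability bound is obtained: the paper writes $2^{-H_{\min}(X_2|Z_2)}=\mathbb{E}_{z_2}[2^{-H_{\min}(X_2|Z_2=z_2)}]$ explicitly and applies Markov's inequality directly, whereas you phrase the same step through the auxiliary state $\sigma_{C_2}$ in the operator-inequality definition of min-entropy---both arguments are equivalent in the classical setting.
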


An analogous result can be proven for $X_2$-strong extractors.

\subsection{Side information in the Markov model}
Arnon-Friedman et al.\ \cite{arnonfriedman_extractors} introduced the Markov model as a generalization of product-type side information.
Instead of full independence between the side information on source $X_1$ and the side information on $X_2$, they demand that, conditioned on the side information, the two sources have to be independent.
This condition is formalized by the notion of a Markov chain that we give below. 
\begin{definition}
    A state $\rho_{A_1 A_2 C}\in\mathcal{S}(\hilmap_{A_1}\otimes\hilmap_{A_2}\otimes\hilmap_{C})$ is called a Markov chain, if 
    \begin{equation*}
        I(A_1:A_2|C)=0.
    \end{equation*}
    When referring to such a Markov chain we write $A_1\leftrightarrow C\leftrightarrow A_2$.
\end{definition}
\begin{theorem}[{\cite[Theorem 6]{Markov_rewriting}}]
\label{thm:Markov_rewriting}
    Let $\rho_{A_1 A_2 C}\in\mathcal{S}(\hilmap_{A_1}\otimes\hilmap_{A_2}\otimes\hilmap_{C})$ be such that $I(A_1:A_2|C)=0$, i.e., $A_1\leftrightarrow C\leftrightarrow A_2$ is a Markov chain. Then, there exists a decomposition of $\hilmap_{C}$ as
    \begin{equation*}
        \hilmap_{C} = \bigoplus_{z\in\mathcal{Z}}\hilmap_{C_1^z}\otimes \hilmap_{C_2^z},
    \end{equation*}
    such that
    \begin{equation*}
        \rho_{A_1 A_2 C} = \mathop{\bigoplus}_{z\in\mathcal{Z}}P_Z(z) \rho^z_{A_1 C_1^z} \otimes \rho^z_{A_2 C_2^z},
    \end{equation*}
    where $\rho^z_{A_1 C_1^z}\in\mathcal{S}(\hilmap_{A_1}\otimes\hilmap_{C_1^z})$, $\rho^z_{A_2 C_2^z}\in\mathcal{S}(\hilmap_{A_2}\otimes\hilmap_{C_2^z})$ and $\{P_Z(z)\}_{z\in\mathcal{Z}}$ is a probability distribution.
\end{theorem}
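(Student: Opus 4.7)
The plan is to invoke the Hayden--Jozsa--Petz--Winter structure theorem for quantum Markov chains. The starting observation is that $I(A_1:A_2|C)=0$ is precisely the equality case of strong subadditivity for $\rho_{A_1 A_2 C}$. By Petz's characterization of equality in the data-processing inequality, this vanishing conditional mutual information holds if and only if there exists a recovery map $\mathcal{R}_{C\to A_2 C}\in\mathrm{CPTP}(\hilmap_C,\hilmap_{A_2}\otimes\hilmap_C)$ acting only on the $C$ system such that $\rho_{A_1 A_2 C}=\mathcal{R}_{C\to A_2 C}(\rho_{A_1 C})$. One explicit choice is the Petz map that sends $Y\in\mathrm{End}(\hilmap_C)$ to $\rho_{A_2 C}^{1/2}\bigl(\rho_C^{-1/2} Y \rho_C^{-1/2}\bigr)\rho_{A_2 C}^{1/2}$, where the middle factor is implicitly tensored with $\id_{A_2}$ and the inverses are defined on the support of $\rho_C$.

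Next, I would extract the block structure of $\hilmap_C$ from this recovery. Consider the minimal unital $*$-subalgebra $\mathcal{M}\subseteq\mathrm{End}(\hilmap_C)$ generated by the operators $\mathrm{tr}_{A_2}\bigl(\mathcal{R}(Y)\bigr)$ as $Y$ ranges over $\mathrm{End}(\hilmap_C)$. By the standard Artin--Wedderburn structure theorem for finite-dimensional $*$-subalgebras of $\mathrm{End}(\hilmap_C)$, there exists a decomposition $\hilmap_C=\bigoplus_{z\in\mathcal{Z}}\hilmap_{C_1^z}\otimes\hilmap_{C_2^z}$ with respect to which $\mathcal{M}=\bigoplus_z \mathrm{End}(\hilmap_{C_1^z})\otimes\id_{C_2^z}$. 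Since $\rho_C$ lies in the commutant of $\mathcal{M}$, it inherits the block-diagonal form $\rho_C=\bigoplus_z P_Z(z)\,\rho^z_{C_1^z}\otimes\rho^z_{C_2^z}$ for some probability distribution $P_Z$.

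Finally, I would propagate this decomposition to the marginals and then to the full joint state. The explicit Petz form, together with the algebra structure, forces $\rho_{A_2 C}=\bigoplus_z P_Z(z)\,\rho^z_{C_1^z}\otimes\rho^z_{A_2 C_2^z}$, so $A_2$ couples only to the $C_2^z$ factors. By the symmetry $I(A_1:A_2|C)=I(A_2:A_1|C)$, the same analysis with the roles of $A_1$ and $A_2$ swapped yields the dual statement $\rho_{A_1 C}=\bigoplus_z P_Z(z)\,\rho^z_{A_1 C_1^z}\otimes\rho^z_{C_2^z}$. Substituting this into $\rho_{A_1 A_2 C}=\mathcal{R}_{C\to A_2 C}(\rho_{A_1 C})$ and observing that the Petz map preserves the block structure produces the desired factorization $\rho_{A_1 A_2 C}=\bigoplus_z P_Z(z)\,\rho^z_{A_1 C_1^z}\otimes\rho^z_{A_2 C_2^z}$.

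I expect the main technical obstacle to be showing that both symmetric Petz recoveries single out the \emph{same} block decomposition of $\hilmap_C$ with the correct dual factor assignments ($A_1$ attaching to $C_1^z$ and $A_2$ to $C_2^z$). This requires identifying the fixed-point algebras of the two symmetric recovery maps on the support of $\rho_C$ and appealing to the uniqueness, up to unitary conjugation on each block, of the Artin--Wedderburn decomposition. Once this alignment is secured, the remaining manipulations are routine bookkeeping of the block-diagonal tensor structure.
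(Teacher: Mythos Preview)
The paper does not prove this theorem at all: it is stated as a citation of \cite[Theorem~6]{Markov_rewriting} (Hayden--Jozsa--Petz--Winter) and used as a black box in the proof of \cref{lem:quantum_Ext_to_Markov}. Your sketch is essentially the argument of that cited reference, so there is nothing to compare against within the present paper.

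As a proof outline of the HJPW result your proposal is broadly on track, but one point deserves care. The algebra you form, generated by $\mathrm{tr}_{A_2}\bigl(\mathcal{R}(Y)\bigr)$, is not quite the object HJPW analyze; they work instead with the $*$-subalgebra of $\mathrm{End}(\hilmap_C)$ generated by the operators $\rho_C^{-1/2}\,\mathrm{tr}_{A_2}\!\bigl((\id_{A_2}\otimes X)\rho_{A_2 C}\bigr)\,\rho_C^{-1/2}$ for $X\in\mathrm{End}(\hilmap_C)$, and the key step is showing that $\rho_C$ lies in the \emph{center} of this algebra (not merely its commutant), which is what forces the simultaneous block-diagonalization. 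Your concern about aligning the two symmetric recoveries is legitimate; in HJPW this is handled not by running the argument twice and matching decompositions, but by first fixing the decomposition from the $A_2$ side and then using the recovery identity $\rho_{A_1 A_2 C}=\mathcal{R}_{C\to A_2 C}(\rho_{A_1 C})$ together with the explicit block form of $\mathcal{R}$ to read off the $A_1$--$C_1^z$ coupling directly, so no separate uniqueness argument is needed.
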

%\carla{I think we never use this and we can remove it: If $\rho_{A_1 A_2 C}$ fulfills the above condition and systems $A_1$, $A_2$ and $C$ are classical, we say that $\rho_{A_1 A_2 C}$ forms a classical Markov chain.}

Since we are interested in randomness sources $X_1$ and $X_2$ and side information $C$ that form a Markov chain, we will, from hereon, only consider Markov chains $\rho_{X_1 X_2 C}$ where $X_1$ and $X_2$ are classical systems, while system $C$ may be quantum or classical.
% Applying the decomposition in~\cref{eq:Markov_rewriting} to such a state, we immediately observe that for a trivial system $Z$ the state $\tilde{\rho}_{X_1 X_2 C_1 C_2 Z}$ reduces to product-type side information.
% This shows that the Markov model can, indeed, be seen as a generalization of product-type knowledge.
With the above definitions, we can introduce the Markov model for two-source extractors.

\begin{definition}
    Consider any two randomness sources $X_1$ and $X_2$ and quantum (classical) side information $C$, such that $H_{\mathrm{min}}(X_1|C)\geq k_1$ and $H_{\mathrm{min}}(X_2|C)\geq k_2$ and $X_1 \leftrightarrow C \leftrightarrow X_2$ form a Markov chain. A function $\Ext: \{0,1\}^{n_1} \times \{0,1\}^{n_2} \mapsto \{0,1\}^m$ is said to be a $(k_1,k_2,\varepsilon)$ weak two-source extractor against quantum (classical) knowledge in the Markov model, if for all such sources, we have
    \begin{equation*}
        \delta\bigl( \rho_{\Ext(X_1,X_2)C}, \omega_{m}\otimes\rho_C\bigr)\leq\varepsilon.
    \end{equation*}
    The extractor is called $X_i$-strong if the following condition holds:
    \begin{equation*}
        \delta\bigl( \rho_{\Ext(X_1,X_2)X_i C}, \omega_{m}\otimes\rho_{X_i C}\bigr)\leq\varepsilon,
    \end{equation*}
     where the second $X_i$ is an additional register containing a copy of the first $X_i$ register.
\end{definition}

With a proof similar to that of~\cref{prop:ext_to_ext_against_classical_product_type} one finds the following proposition.
\begin{proposition}[{\cite[Lemma 6]{arnonfriedman_extractors}}]\label{prop:ext_to_ext_against_classical_Markov}
    A $(k_1,k_2,\varepsilon)$ $X_1$-strong two-source extractor is a $(k_1 + \log\frac{1}{\varepsilon}, k_2 + \log\frac{1}{\varepsilon}, 3\varepsilon)$ $X_1$-strong two-source extractor against classical knowledge in the Markov model.
\end{proposition}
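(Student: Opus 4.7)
The plan is to reduce the Markov scenario back to the no--side-information case by conditioning on the value of the classical system $C$: by the Markov condition, $X_{1}$ and $X_{2}$ become independent once $C$ is fixed, and for typical values of $C$ their conditional min-entropies remain close to $H_{\min}(X_{i}|C)$. The $\log\frac{1}{\varepsilon}$ slack on each source is exactly what is needed for a Markov-style concentration bound on the conditional min-entropy; the factor $3$ in the final error is the sum of the extractor error $\varepsilon$ plus two $\varepsilon$'s, one for each source's probability of being atypical.

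Concretely, I would first use~\cref{prop:trace_norm_cq} on the classical register $C$ to rewrite the target as
\begin{equation*}
    \delta\bigl(\rho_{\Ext(X_{1},X_{2})X_{1}C},\, \omega_{m}\otimes\rho_{X_{1}C}\bigr) \;=\; \mathop{\mathbb{E}}_{c\leftarrow C}\Bigl[\,\delta\bigl(\rho_{\Ext(X_{1},X_{2})X_{1}\mid c},\, \omega_{m}\otimes\rho_{X_{1}\mid c}\bigr)\Bigr].
\end{equation*}
Then I would introduce the ``good'' set $\mathcal{G}=\{c : H_{\min}(X_{1}|C=c)\geq k_{1}\text{ and }H_{\min}(X_{2}|C=c)\geq k_{2}\}$. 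Since $C$ is classical, a direct calculation with the definition (\cref{def:quantum_cond_min_entropy}) gives the identity $\sum_{c}P_{C}(c)\,2^{-H_{\min}(X_{i}|C=c)} = 2^{-H_{\min}(X_{i}|C)}\leq \varepsilon\cdot 2^{-k_{i}}$, where the inequality is by hypothesis. Markov's inequality (on $2^{-H_{\min}(X_{i}|C=c)}$ viewed as a non-negative random variable in $c$) then yields $\Pr_{c}[H_{\min}(X_{i}|C=c)<k_{i}]\leq\varepsilon$ for each $i\in\{1,2\}$, and hence $\Pr_{c}[c\notin\mathcal{G}]\leq 2\varepsilon$ by a union bound.

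For $c\in\mathcal{G}$ the Markov condition on classical $C$ guarantees that $X_{1}$ and $X_{2}$ are independent given $C=c$, with min-entropies at least $k_{1}$ and $k_{2}$ respectively. The $X_{1}$-strong extractor property (applied to the conditional sources, together with~\cref{prop:trace_norm_cq} on the classical register $X_{1}$) therefore yields $\delta(\rho_{\Ext(X_{1},X_{2})X_{1}\mid c},\omega_{m}\otimes\rho_{X_{1}\mid c})\leq\varepsilon$; for $c\notin\mathcal{G}$ one falls back on the trivial bound $\delta\leq 1$. Plugging these into the expectation gives $\varepsilon\cdot\Pr[c\in\mathcal{G}] + 1\cdot\Pr[c\notin\mathcal{G}]\leq\varepsilon+2\varepsilon=3\varepsilon$.

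The only genuinely delicate step is the Markov-style bound on the conditional min-entropy, which relies crucially on $C$ being classical so that $2^{-H_{\min}(X_{i}|C)}$ decomposes as a genuine expectation over $c$; this is what makes the proposition cheaper than its quantum Markov analogue, where one must first invoke~\cref{thm:Markov_rewriting} to turn the Markov chain into a mixture of product states indexed by a classical $z$ and then feed each block into a product-type statement. Everything else in the argument is bookkeeping.
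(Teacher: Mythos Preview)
Your proposal is correct and is essentially the argument the paper has in mind: the paper does not spell out a proof of this proposition but says it is ``similar to that of~\cref{prop:ext_to_ext_against_classical_product_type}'' (and cites \cite[Lemma~6]{arnonfriedman_extractors}), and that proof is exactly conditioning on the classical side information, using the identity $2^{-H_{\min}(X_i|C)}=\mathbb{E}_{c}[2^{-H_{\min}(X_i|C=c)}]$, Markov's inequality, and a union bound --- precisely your steps. Your closing remark that the quantum Markov analogue requires~\cref{thm:Markov_rewriting} before running the same argument on each block is also exactly what the paper does in~\cref{lem:quantum_Ext_to_Markov}.
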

% For convenience, we give a proof in~\cref{app:proof:ext_to_ext_against_classical_Markov}.
% Analogous results can be obtained for a $X_2$-strong or weak extractor.
To our knowledge, the following lemma from~\cite{arnonfriedman_extractors} provides the latest improvement to the error scaling of strong two-source extractors against quantum knowledge in the Markov model.
\begin{proposition}[{\cite[Lemma 9]{arnonfriedman_extractors}}]\label{prop:ext_to_ext_against_quantum_Markov}
    A $(k_1,k_2,\varepsilon)$ $X_1$-strong two-source extractor is a $(k_1 + \log\frac{1}{\varepsilon}, k_2 + \log\frac{1}{\varepsilon},\sqrt{3 \cdot 2^{m-2} \varepsilon})$ $X_1$-strong two-source extractor against quantum knowledge in the Markov model, where $m$ is the output length of the extractor.
\end{proposition}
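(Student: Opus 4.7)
The plan is to chain two reductions: a classical reduction to the Markov model via \cref{prop:ext_to_ext_against_classical_Markov}, which accounts for the $+\log\frac{1}{\varepsilon}$ slacks on the min-entropies and the factor $3$ inside the square root, followed by a quantum-to-classical reduction that pays a $\sqrt{2^m}$ penalty through the standard trace-to-$L_2$ inequality (\cref{cor:bound_one_norm_by_two_norm}). Concretely, \cref{prop:ext_to_ext_against_classical_Markov} first upgrades the hypothesis to: $\Ext$ is a $(k_1+\log\frac{1}{\varepsilon},k_2+\log\frac{1}{\varepsilon},3\varepsilon)$ $X_1$-strong two-source extractor against \emph{classical} knowledge in the Markov model.

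To lift the guarantee to quantum side information, I would invoke \cref{thm:Markov_rewriting} to decompose any admissible state as $\rho_{X_1 X_2 C} = \bigoplus_{z\in\mathcal{Z}} P_Z(z)\,\rho^z_{X_1 C_1^z}\otimes\rho^z_{X_2 C_2^z}$, with a classical direct-sum index $Z$. Conditioned on $Z=z$, the side information is product-type, and by \cref{prop:trace_norm_cq} the target trace distance decomposes as $\mathop{\mathbb{E}}_{z\leftarrow Z}[\delta^z]$, where $\delta^z$ is the $X_1$-strong trace distance computed within the $z$-block. For each $z$, I would apply \cref{cor:bound_one_norm_by_two_norm} with $\sigma = \rho^z_{X_1 C_1^z}\otimes\rho^z_{C_2^z}$ to get $\delta^z \leq \tfrac{1}{2}\sqrt{2^m\, d_2^z}$, and then apply Jensen's inequality (concavity of $\sqrt{\cdot}$) to pull the $z$-expectation inside the square root, yielding the intermediate bound $\tfrac{1}{2}\sqrt{2^m\,\mathop{\mathbb{E}}_{z\leftarrow Z}[d_2^z]}$.

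The main obstacle is showing that $\mathop{\mathbb{E}}_{z\leftarrow Z}[d_2^z]\leq (3\varepsilon)^2$, which would combine with the previous step to give exactly $\sqrt{3\cdot 2^{m-2}\varepsilon}$. The natural route is to reduce $d_2^z$ back to a squared trace distance on a measured, hence classical, version of $C$. Because the decomposition puts $Z$ on its own classical register, one can perform the reduction block-by-block: apply a pretty good measurement (\cref{def:PGM}) to each of $C_1^z$ and $C_2^z$ separately, obtaining classical outcomes $Y_1,Y_2$ that together with $Z$ give classical side information still forming a Markov chain $X_1\leftrightarrow (Y_1 Y_2 Z)\leftrightarrow X_2$. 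Data-processing (\cref{lem:data_processing}) ensures the conditional min-entropies on this measured side information remain at least $k_i+\log\frac{1}{\varepsilon}$, so the classical-Markov guarantee from the first step delivers a trace-distance bound of $3\varepsilon$; translating this back through the $L_2$ distance (which is contractive under the measurement) yields the desired bound on $\mathop{\mathbb{E}}_{z\leftarrow Z}[d_2^z]$.

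The most delicate part will be the constants bookkeeping in this last reduction: one must be careful that the min-entropy slacks introduced in Step 1 exactly absorb the losses induced by the measurement, and that the coefficient $3$ propagates cleanly through the squaring so that the final bound is $\sqrt{3\cdot 2^{m-2}\varepsilon}$ rather than a larger multiple. Once the measurement-and-square-root arithmetic is aligned, the proof closes immediately by combining the two reductions.
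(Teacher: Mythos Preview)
The paper does not give its own proof of \cref{prop:ext_to_ext_against_quantum_Markov}; it is quoted from \cite[Lemma~9]{arnonfriedman_extractors}. The closest thing the paper proves is \cref{cor:generic_ext_Markov_model}, and that is obtained by the \emph{opposite} order of reductions from the one you propose: first pass from no side information to \emph{quantum product-type} knowledge via \cref{cor:no_side_info_to_quantum_product} (which rests on the measured XOR-Lemma), and only afterwards extend from product-type to the Markov model via \cref{lem:quantum_Ext_to_Markov}. In that ordering the classical-to-quantum step happens inside the product-type setting, where the PGM calculation applies cleanly, and the Markov step is just the Markov/union-bound argument on conditional min-entropies.

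Your plan reverses the order and contains a genuine gap at the quantum lift. You claim that from a classical-Markov trace-distance bound of $3\varepsilon$ one can deduce $\mathop{\mathbb{E}}_{z}[d_2^z]\leq (3\varepsilon)^2$ because ``the $L_2$ distance is contractive under the measurement.'' This fails on two counts. First, contractivity points the wrong way: data processing gives $d_2^{\text{measured}}\leq d_2^{\text{quantum}}$, so a bound on the measured side tells you nothing about the quantum $d_2^z$. The correct link between the quantum $d_2$ and the \emph{measured} trace distance is not a monotonicity statement but the explicit PGM identity underlying the proof of \cref{lem:measured_XOR_lemma}, and that identity gives a bound \emph{linear} in the measured trace distance, not quadratic. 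Second, your arithmetic is inconsistent with the stated conclusion: plugging $\mathop{\mathbb{E}}_{z}[d_2^z]\leq (3\varepsilon)^2$ into $\tfrac{1}{2}\sqrt{2^m\,\mathop{\mathbb{E}}_{z}[d_2^z]}$ gives $\tfrac{3}{2}\,\varepsilon\,2^{m/2}$, not $\sqrt{3\cdot 2^{m-2}\varepsilon}$ (for small $\varepsilon$ your bound would be far too strong). The intermediate bound actually needed is $\mathop{\mathbb{E}}_{z}[d_2^z]\lesssim 3\varepsilon$, and establishing that requires the PGM computation rather than a generic contractivity argument.
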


For applications, one often only has a bound on the smooth min-entropy rather than the min-entropy. It is therefore crucial to construct extractors which can work with smooth min-entropy sources. It was shown in \cite[lemma 17]{arnonfriedman_extractors} that any extractor in the Markov model performs almost identically even if one only has a guarantee on the smooth min-entropy.

%\carla{Maybe reverence lemma 9 or corollary 21 of Rotem's}

\subsection{The \textDEOR{}-extractor family} \label{sec:Dodis_ext}
% For a concrete statement showing that the inner product gives a good two-source extractor we refer to Appendix \ref{app:B} and Corollary \ref{cor:IP_strong_cl_ext}, in particular.

In the remainder of this section we want to define a family of extractors which was originally introduced in \cite{Dodis_extractors}. 
% It can be understood as a generalization of the inner product extractor. 
\begin{definition}\label{def:BleA_ext}
    Let $\mathcal{A}\coloneq \{A_i\}_{i=1}^{m}$ be a set of $n\times n$-matrices with entries in the finite field with two elements\footnote{The elements of such a field are denoted by $\{0,1\}$. Addition and multiplication are defined respectively as the usual addition and multiplication of integers modulo $2$.}, such that, for every $s\in\{0,1\}^m \setminus \{0^m\}$, the matrix $A_{s}\coloneqq \sum_{i=1}^m A_i s_i$ satisfies $\mathrm{rank}(A_s)\geq n-r$. Define
    \begin{align*}
        \DEOR: \{0,1\}^n \times \{0,1\}^n & \longrightarrow \{0,1\}^m \\
        (x, y) & \longmapsto ((A_1 x)\cdot y,...,(A_m x)\cdot y) = (x^T A_1^T y,..., x^T A_m^T y)
    \end{align*}
    with $\cdot$ the inner product modulo 2.
\end{definition}
It can be shown that this definition indeed gives a classical two-source extractor. 
\begin{lemma}[{\cite[Theorem 1]{Dodis_extractors}}]\label{lem:BleA_ext_no_side_info}
    The function \textDEOR{} is a $(k_1,k_2,\varepsilon)$ $X_2$-strong two-source extractor with
    \begin{equation*}
        \varepsilon = 2^{-\frac{k_1 + k_2 + 2 - n - r - m}{2}}.
    \end{equation*}
\end{lemma}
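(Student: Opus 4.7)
The plan is to upper bound the trace distance by the conditional $L_2$-distance via \cref{cor:bound_one_norm_by_two_norm} and then expand the $L_2$-distance as a Fourier sum over the characters of $\{0,1\}^m$. Writing $Z = \DEOR(X_1, X_2)$ and choosing $\sigma_B = \rho_{X_2}$, \cref{cor:bound_one_norm_by_two_norm} gives
\begin{equation*}
    \delta\bigl(\rho_{Z X_2},\,\omega_m\otimes\rho_{X_2}\bigr) \leq \tfrac{1}{2}\sqrt{2^m\, d_2(\rho_{Z X_2} | \rho_{X_2})},
\end{equation*}
and since all systems are classical, \cref{lem:d_2_cq_states} reduces the $L_2$-distance to $\mathop{\mathbb{E}}_{y\leftarrow X_2}\bigl[\sum_z P_{Z|X_2=y}(z)^2\bigr] - 2^{-m}$, i.e.\ the average conditional collision probability of $Z$, recentred by $2^{-m}$.

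The core step is to rewrite the conditional collision probability via the Fourier identity $\mathbf{1}[v = 0^m] = 2^{-m}\sum_{s\in\{0,1\}^m}(-1)^{s\cdot v}$, applied to $v = \DEOR(X_1, y) \oplus \DEOR(X_1', y)$ with $X_1'$ an independent copy of $X_1$. The linearity $s\cdot\DEOR(x,y) = (A_s x)\cdot y$ built into the construction then turns this into
\begin{equation*}
    \sum_z P_{Z|X_2=y}(z)^2 \;=\; \frac{1}{2^m}\sum_{s\in\{0,1\}^m} \Bigl(\mathop{\mathbb{E}}_{x\leftarrow X_1}[(-1)^{(A_s x)\cdot y}]\Bigr)^2 \;=\; \frac{1}{2^m}\sum_s \hat{P}_{W_s}(y)^2,
\end{equation*}
where $W_s \coloneq A_s X_1$ and $\hat{P}_{W_s}$ is the Fourier transform of its distribution over $\{0,1\}^n$. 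The $s = 0$ term contributes exactly $2^{-m}$ and cancels the $-2^{-m}$ in $d_2$, leaving $d_2 = 2^{-m}\sum_{s\neq 0}\sum_y P_{X_2}(y)\,\hat{P}_{W_s}(y)^2$.

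To bound each summand I would invoke the rank hypothesis: for $s\neq 0$, $\mathrm{rank}(A_s)\geq n-r$ and \cref{prop:min_entropy_matrix_multipl} give $H_{\min}(W_s)\geq k_1 - r$, so the collision probability of $W_s$ satisfies $\sum_w P_{W_s}(w)^2 \leq \max_w P_{W_s}(w) \leq 2^{r-k_1}$. Parseval's identity then gives $\sum_y \hat{P}_{W_s}(y)^2 = 2^n \sum_w P_{W_s}(w)^2 \leq 2^{n+r-k_1}$, and factoring out $\max_y P_{X_2}(y)\leq 2^{-k_2}$ yields $\sum_y P_{X_2}(y)\hat{P}_{W_s}(y)^2 \leq 2^{n+r-k_1-k_2}$. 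Summing over the at most $2^m - 1$ nonzero characters bounds $d_2 \leq 2^{n+r-k_1-k_2}$; substituting back produces $\delta \leq \tfrac{1}{2}\sqrt{2^{m+n+r-k_1-k_2}} = 2^{-(k_1+k_2+2-n-r-m)/2}$, matching the claimed bound. The only delicate part is keeping the exponents straight in the Fourier/Parseval step; the rank-defect $r$ enters only as a uniform $-r$ penalty on $H_{\min}(W_s)$, and the rest is routine bookkeeping.
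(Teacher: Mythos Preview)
Your argument is correct. The paper does not give its own proof of this lemma (it is quoted from \cite{Dodis_extractors}), but your proof is exactly the classical specialisation of the paper's proof of \cref{thm:BleA_1m_result}: both proceed via \cref{cor:bound_one_norm_by_two_norm}, a Fourier expansion over $s\in\{0,1\}^m$ using $s\cdot\DEOR(x,y)=(A_s x)\cdot y$, the rank hypothesis together with \cref{prop:min_entropy_matrix_multipl} to control $H_{\min}(A_s X_1)$, and Parseval to close.
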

Note, that exchanging the sources $X_1$ and $X_2$ with one another yields almost the same extractor with the only difference being that the matrices in $\mathcal{A}$ need to be transposed. 
Since for any matrix $A$ we have $\mathrm{rank}(A)=\mathrm{rank}(A^T)$, we immediately find that the \textDEOR-extractor is strong in both $X_1$ and $X_2$ separately.
% Following this argument and using the fact that the definition of the Markov model is symmetric under exchanging sources $X_1$ and $X_2$, all of the following results about the \textDEOR-extractor being $X_1$-strong imply that it is $X_2$-strong with the same parameters.
\begin{corollary}\label{cor:BleA_against_classical_knowledge}
    The function \textDEOR{} is a $(k_1,k_2,\varepsilon)$ $X_1$-strong two-source extractor against classical knowledge in the Markov model with
    \begin{equation*}
        \varepsilon = 2^{-\frac{k_1+k_2+2-4\log3-n-r-m}{4}}.
    \end{equation*}
\end{corollary}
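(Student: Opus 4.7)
The plan is to combine two results stated earlier in the excerpt: \Cref{lem:BleA_ext_no_site_info}, which gives the error for \textDEOR{} against no side information, and \Cref{prop:ext_to_ext_against_classical_Markov}, which lifts any strong two-source extractor to one secure against classical side information in the Markov model. The only subtlety is the self-consistency equation one must solve for the final error parameter.

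First, I would invoke the remark immediately following \Cref{lem:BleA_ext_no_side_info}: since $\mathrm{rank}(A_s)=\mathrm{rank}(A_s^{T})$, the \textDEOR{} construction yields an $X_1$-strong extractor as well as an $X_2$-strong one. So \textDEOR{} is a $(k_1',k_2',\varepsilon_0)$ $X_1$-strong two-source extractor with
\begin{equation*}
    \varepsilon_0 = 2^{-\frac{k_1'+k_2'+2-n-r-m}{2}}.
\end{equation*}
Next, I would apply \Cref{prop:ext_to_ext_against_classical_Markov} to this extractor, obtaining a $(k_1'+\log\tfrac{1}{\varepsilon_0},\, k_2'+\log\tfrac{1}{\varepsilon_0},\, 3\varepsilon_0)$ $X_1$-strong two-source extractor against classical knowledge in the Markov model.

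To recast this in the parameters of the corollary, set $k_i = k_i' + \log\tfrac{1}{\varepsilon_0}$ (so $k_i' = k_i + \log\varepsilon_0$) and $\varepsilon = 3\varepsilon_0$. Substituting into the expression for $\varepsilon_0$ gives
\begin{equation*}
    \log\varepsilon_0 = -\tfrac{1}{2}\bigl(k_1 + k_2 + 2\log\varepsilon_0 + 2 - n - r - m\bigr),
\end{equation*}
which rearranges to $\log\varepsilon_0 = -(k_1+k_2+2-n-r-m)/4$. Therefore
\begin{equation*}
    \varepsilon = 3\varepsilon_0 = 2^{\log 3 - \frac{k_1+k_2+2-n-r-m}{4}} = 2^{-\frac{k_1+k_2+2-4\log 3 - n - r - m}{4}},
\end{equation*}
exactly as stated.

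There is no real obstacle here: the argument is a direct concatenation of two previously established results plus bookkeeping to solve the resulting affine equation in $\log\varepsilon_0$. The only point requiring care is to make sure the $X_1$-strong version of \Cref{lem:BleA_ext_no_side_info} is used (so that the subsequent lifting lemma applies), and to keep track of the fact that tightening the min-entropy by $\log\tfrac{1}{\varepsilon_0}$ in the hypothesis changes the quadratic-root dependence in $\varepsilon_0$ into a fourth-root dependence in the final $\varepsilon$, which is what explains the factor of $1/4$ in the exponent of the corollary.
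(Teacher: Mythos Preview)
Your proposal is correct and follows exactly the same route as the paper: apply \Cref{prop:ext_to_ext_against_classical_Markov} to the $X_1$-strong version of \Cref{lem:BleA_ext_no_side_info}, set $k_i=k_i'+\log\tfrac{1}{\varepsilon_0}$ and $\varepsilon=3\varepsilon_0$, and solve the resulting affine equation in $\log\varepsilon_0$. The paper's proof merely summarizes this as ``a quick calculation'', whereas you spell out the bookkeeping explicitly.
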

\begin{proof}
    From~\cref{prop:ext_to_ext_against_classical_Markov} we know that any $(k_1',k_2',\varepsilon')$ $X_1$-strong two-source extractor is a $(k_1'+\log\frac{1}{\varepsilon},k_2'+\log\frac{1}{\varepsilon'},3\varepsilon')$ $X_1$-strong two-source extractor against classical knowledge in the Markov model.
    A quick calculation shows that the choice $k_1=k_1'+\log\frac{1}{\varepsilon'}$, $k_2=k_2'+\log\frac{1}{\varepsilon'}$, $\varepsilon=3\varepsilon'$ gives $\varepsilon = 3 \cdot 2^{-\frac{k_1+k_2+2-n-r-m}{4}}$, as desired.
\end{proof}

\section{Security of two-source extractors}
\label{chapter:results}
Up to this point, we have introduced the notion of two-source extractors and the tools that we will use for their analysis. 
Let us now come to our main results. 
Recall from the introduction that we divide our results into two approaches that we term \emph{modular} and \emph{non-modular}.
We begin with proving~\cref{prop:proof_step_1} that forms the basis of both approaches. 
Then via our modular approach, which follows in spirit the structure given in Chapter 5 of \cite{Chung_2014}, we prove~\cref{thm:general_ext_result} for generic two-source extractors. 
This is the formal version of~\cref{thm:informal:general_ext} from the introduction.
The main difference between the proof in \cite{Chung_2014} and ours lies in the use of what we call the measured XOR-Lemma, which we prove in~\cref{sec:proof_measured_XOR_lemma}.
In~\cref{sec:proof_theorem_1}, we first apply the measured XOR-Lemma to generic two-source extractors leading to~\cref{thm:general_ext_result} and then state a more refined result with improved output length that can be obtained for the extractor in \cite{Dodis_extractors}.
Restarting from~\cref{prop:proof_step_1}, we present our non-modular approach in~\cref{sec:proof_theorem_2}.
Through this approach, which mainly relies on the Fourier transform for matrix-valued functions from \cite{Fehr_Schaffner}, we prove~\cref{thm:BleA_1m_result}.
This theorem is the formal version of~\cref{thm:informal:BleA_1m} from the introduction and specific to the extractor from \cite{Dodis_extractors}.
In the final~\cref{sec:extension_to_Markov_model} we extend our results to the Markov model.

\subsection{Preparations}
In this section we present some preliminary results which will be useful for proving the statements in~\cref{sec:proof_measured_XOR_lemma,sec:proof_theorem_1,sec:proof_theorem_2}.
\subsubsection{Pretty good measurement}
\begin{definition}[Pretty good measurement (PGM)~\cite{PGM}]
\label{def:PGM}
    Let $\rho_{XB}$ be a cq-state. We define the pretty good measurement (PGM) associated to $\rho_{XB}$ as the POVM with elements
    \begin{equation*}
        \Lambda^{\rho_{XB};x}_{B} = \rho_{B}^{-1/2} \rho_{B \land x} \rho_{B}^{-1/2}.
    \end{equation*}
    We denote the channel obtained from measuring the PGM by $\Lambda^{\rho_{XB}}_{X'|B}$, i.e.
    \begin{equation*}
        \Lambda^{\rho_{XB}}_{X'|B}[\sigma_{B}] = \sum_{x} \ketbra{x}{x}_{X'} \tr{\Lambda^{\rho_{XB};x}_{B} \sigma_{B}}.
    \end{equation*}
\end{definition}

The following statement follows immediately from the definition of PGMs.
\begin{proposition}\label{prop:PGM_f_commute}
    % Let $\mathcal{H}_X$, $\mathcal{H}_Y$ and $\mathcal{H}_Z$ be Hilbert spaces associated with random variables $X,Y$ and $Z$, that have alphabets $\mathcal{X}, \mathcal{Y}$ and $\mathcal{Z}$, i.e.\ let $\{|x\rangle\}_{x\in\mathcal{X}}$, $\{|y\rangle\}_{y\in\mathcal{Y}}$ and $\{|z\rangle\}_{z\in\mathcal{Z}}$ be orthonormal bases of $\mathcal{H}_X$, $\mathcal{H}_Y$ and $\mathcal{H}_Z$, respectively.
    % Let $\rho_{XB}\in{\mathcal{S}(\mathcal{H}_X\otimes\mathcal{H}_B)}$ and $\sigma_{ZB}\in{\mathcal{S}(\mathcal{H}_Z\otimes\mathcal{H}_B)}$ be cq-states that are classical on $\mathcal{H}_X$ and $\mathcal{H}_Z$, respectively.
    Let $\rho_{XB}$ be a quantum state which is classical on $X$. Then for any function $f:\mathcal{X}\mapsto\mathcal{Y}$, we have
    \begin{equation*}
        \Lambda^{\rho_{f(X)B}}_{Y'|B} = f_{Y'|X'} \circ\Lambda^{\rho_{XB}}_{X'|B}.
    \end{equation*}
    %where $\PGM[\rho]{X}{B}{X}$ denotes the application of a PGM that has POVM-elements $\PGMelements[\rho]{X}{B}{x}=P_X(x)\rho_B^{-1/2}\rho_B^x\rho_B^{-1/2}$ and the channel $\PGMchannel{f(X)}{B}{Y'}$ corresponding to the application of a PGM that has POVM-elements $\rho_B^{-1/2}\sum_{x\;\mathrm{s.t.}\;f(x)=y} P_X(x)\rho_B^x\rho_B^{-1/2}$.
\end{proposition}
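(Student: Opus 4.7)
The proof is a direct unpacking of definitions, so the plan is essentially to verify that both sides agree on an arbitrary input $\sigma_B$. The main observation is that applying the classical function $f$ only re-labels and groups the classical register of $\rho_{XB}$, while leaving the marginal $\rho_B$ untouched; this makes the PGM elements of $\rho_{f(X)B}$ coincide with sums of PGM elements of $\rho_{XB}$ over the corresponding fibres of $f$.

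Concretely, I would first write the cq-state after applying $f$ as
\begin{equation*}
    \rho_{f(X)B} = \sum_{y \in \mathcal{Y}} [y]_{Y} \otimes \rho_{B \land y}^{(f)}, \qquad \rho_{B \land y}^{(f)} = \sum_{x \in f^{-1}(y)} \rho_{B \land x},
\end{equation*}
and note that the $B$-marginal satisfies $\mathrm{tr}_{Y}(\rho_{f(X)B}) = \rho_B$, unchanged from $\rho_{XB}$. Plugging into~\cref{def:PGM} then gives
\begin{equation*}
    \Lambda^{\rho_{f(X)B};y}_{B} = \rho_B^{-1/2} \rho_{B \land y}^{(f)} \rho_B^{-1/2} = \sum_{x \in f^{-1}(y)} \Lambda^{\rho_{XB};x}_{B}.
\end{equation*}

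Next I would evaluate both sides of the claimed identity on an arbitrary $\sigma_B$. For the left-hand side, substituting the expression above yields
\begin{equation*}
    \Lambda^{\rho_{f(X)B}}_{Y'|B}[\sigma_B] = \sum_{y} [y]_{Y'} \sum_{x \in f^{-1}(y)} \mathrm{tr}\bigl(\Lambda^{\rho_{XB};x}_{B} \sigma_B\bigr).
\end{equation*}
For the right-hand side, applying first $\Lambda^{\rho_{XB}}_{X'|B}$ and then the channel $f_{Y'|X'}$, which acts on the classical register as $[x]_{X'} \mapsto [f(x)]_{Y'}$, gives
\begin{equation*}
    f_{Y'|X'}\!\bigl(\Lambda^{\rho_{XB}}_{X'|B}[\sigma_B]\bigr) = \sum_{x} [f(x)]_{Y'} \,\mathrm{tr}\bigl(\Lambda^{\rho_{XB};x}_{B} \sigma_B\bigr),
\end{equation*}
and regrouping the outer sum by the value $y = f(x)$ produces exactly the same expression. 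Since $\sigma_B$ was arbitrary, equality of the two channels follows.

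There is no real obstacle here; the only thing to be careful about is the convention that the marginal $\rho_B$ used to normalise the PGM of $\rho_{f(X)B}$ is the same as the one used for $\rho_{XB}$, which is immediate because $f$ acts only on the classical register. I would therefore keep the proof to a few lines of calculation, remarking that the statement expresses the intuitive fact that post-processing a classical register commutes with performing the PGM, provided the PGM is defined with respect to the post-processed state.
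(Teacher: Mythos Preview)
Your proposal is correct and matches the paper's approach: the paper states only that the proposition ``follows immediately from the definition of PGMs'' and gives no further argument, and your direct unpacking of the definitions is exactly that immediate verification.
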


\subsubsection{Fourier transform of matrix-valued functions}\label{sec:Fourier_trafo}
Fehr and Schaffner \cite{Fehr_Schaffner} introduced the Fourier transform and L\textsubscript{2}-norm for matrix-valued functions to study an extractor family called $\delta$-biased masking.
% They showed that it can be helpful to think of some density operators, which in the case of finite-dimensional Hilbert spaces can be represented by complex-valued matrices, as the output of a matrix-valued function.
% This allows them to use an anlogue of Parseval's theorem for matrix-valued functions.
We give their definitions below and refer to \cite{Fehr_Schaffner} for more details and a proof of the following lemma.
Let $\mathcal{MF}(n)$ be the complex vector space of all matrix-valued functions $M: \{0,1\}^n \mapsto \mathbb{C}^{d\times d}$, where $n,d\in\mathbb{N}$ are fixed.

\begin{definition}\label{def:_matrix_valued_FT}
    Let $M\in\mathcal{MF}(n)$. 
    The Fourier transform of $M$ is given by the matrix-valued function
    \begin{equation*}
    \begin{aligned}
        \mathfrak{F}[M]: \{0, 1\}^{n} &\to \mathbb{C}^{d \times d} \\
        \alpha &\mapsto \frac{1}{\sqrt{2^{n}}}\sum_{x\in\{0,1\}^n}(-1)^{\alpha\cdot x}M(x).
    \end{aligned}
    \end{equation*}
\end{definition}

\begin{definition}\label{def:matrix_valued_norm}
    The L\textsubscript{2}-norm of $M\in\mathcal{MF}(n)$ is defined as 
    \begin{equation*}
        \trinorm{M}_2 \coloneq \sqrt{\tr{\sum_{x\in\{0,1\}^n} M(x)^\dagger M(x)}}.
    \end{equation*}
\end{definition}

The following lemma can be seen as an analogue of Parseval's theorem.
\begin{lemma}[{\cite[Lemma 4.1]{Fehr_Schaffner}}]\label{lem:Parseval}
    Let $M\in\mathcal{MF}(n)$. Then $\trinorm{\mathfrak{F}[M]}_2 = \trinorm{M}_2$.
\end{lemma}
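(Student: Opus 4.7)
The plan is to prove this by direct expansion, mirroring the standard proof of Parseval's theorem for scalar-valued functions on $\mathbb{F}_2^n$, with the only real adaptation being careful bookkeeping of the trace over the matrix indices.

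First I would square both sides so that the awkward square root in the definition of $\trinorm{\cdot}_2$ disappears, reducing the claim to showing
\begin{equation*}
    \tr{\sum_{\alpha \in \{0,1\}^n} \mathfrak{F}[M](\alpha)^{\dagger} \mathfrak{F}[M](\alpha)} = \tr{\sum_{x \in \{0,1\}^n} M(x)^{\dagger} M(x)}.
\end{equation*}
Then I would substitute the definition of the Fourier transform on the left-hand side, producing a double sum
\begin{equation*}
    \frac{1}{2^n} \sum_{\alpha} \sum_{x, y} (-1)^{\alpha \cdot (x \oplus y)} \tr{M(x)^{\dagger} M(y)},
\end{equation*}
where $x \oplus y$ denotes componentwise addition modulo $2$ and I have used that $(-1)^{\alpha \cdot x}$ is real so complex conjugation only affects $M(x)$. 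The trace is linear, so I can pull it inside.

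Next I would swap the order of summation, isolating the character sum $\sum_{\alpha} (-1)^{\alpha \cdot (x \oplus y)}$ as an inner factor. By the standard orthogonality of characters of $\mathbb{F}_2^n$, this equals $2^n$ when $x = y$ and $0$ otherwise; this is the one non-trivial ingredient, but it is elementary (for each bit position where $x \oplus y$ is nonzero, the corresponding factor $\sum_{\alpha_i \in \{0,1\}} (-1)^{\alpha_i \cdot (x_i \oplus y_i)}$ vanishes). Plugging this in collapses the double sum over $x, y$ to a single sum over $x$, cancels the prefactor $1/2^n$, and leaves exactly $\sum_x \tr{M(x)^{\dagger} M(x)}$, which is $\trinorm{M}_2^2$.

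I do not anticipate a genuine obstacle: the matrix-valued setting is handled uniformly since the trace is linear and $\tr{M(x)^{\dagger} M(y)}$ is just a complex scalar once the matrices are fixed, so the proof reduces to the scalar Parseval identity applied to each pair of matrix entries (equivalently, to a single application of character orthogonality). The only point where I would be careful is to ensure that the Hermitian conjugate in the definition of $\trinorm{\cdot}_2$ is correctly handled (so that the two $(-1)^{\alpha \cdot x}$ factors combine to $(-1)^{\alpha \cdot (x \oplus y)}$ rather than cancelling), and that the normalization factor $1/\sqrt{2^n}$ in~\cref{def:_matrix_valued_FT} squares correctly against the $2^n$ coming from the character sum.
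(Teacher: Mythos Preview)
Your proposal is correct. The paper does not actually prove this lemma; it is stated with a citation to \cite{Fehr_Schaffner} and the reader is explicitly referred there for the proof, so there is no in-paper argument to compare against. Your direct expansion via character orthogonality is the standard proof and goes through exactly as you describe.
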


\subsubsection{A useful proposition}
The following proposition will provide the starting point of both the modular and the non-modular approach.
\begin{proposition}\label{prop:proof_step_1}
    Let Z be a $m$-bit system and $E$ be an arbitrary quantum system. For any cq-state $\rho_{ZE} = \sum_{z\in\{0,1\}^m} [z]_Z \otimes \rho_{E \land z}$ and any state $\sigma_E\in\mathcal{S}(\mathcal{H}_E)$ with $\ker(\id_{Z} \otimes \sigma_{E}) \subseteq \ker(\rho_{ZE})$, we have
    \begin{equation*}
        \delta\bigl(\rho_{ZE}, \omega_{Z} \otimes \rho_E \bigr)^2 \leq \frac{1}{4} \sum_{0^m \neq s\in \{0,1\}^m}\tr{ \sum_{z,z'\in\{0,1\}^m} (-1)^{s\cdot z + s\cdot z'} M(z)M(z')},
    \end{equation*}
    where we defined the matrix-valued function $M\in\mathcal{MF}(m)$ as $M(z) \coloneqq \sigma_E^{-1/4} \rho_{E \land z} \sigma_E^{-1/4}$.
\end{proposition}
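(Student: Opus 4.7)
The plan is to combine \cref{cor:bound_one_norm_by_two_norm} with \cref{lem:d_2_cq_states} and then recognize the resulting quantity as the character sum over $s \in \{0,1\}^m \setminus \{0^m\}$ appearing on the right-hand side. No new ingredients beyond these two lemmas and elementary Fourier orthogonality on $\{0,1\}^m$ should be required.

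First, I would square both sides of \cref{cor:bound_one_norm_by_two_norm} applied to $\rho_{ZE}$ and $\sigma_E$. Since $\dim(\mathcal{H}_Z) = 2^m$ and $\tr{\sigma_E} = 1$, this immediately yields
\begin{equation*}
    \delta\bigl(\rho_{ZE},\,\omega_Z \otimes \rho_E\bigr)^2 \;\leq\; \frac{2^m}{4}\, d_2(\rho_{ZE} \mid \sigma_E).
\end{equation*}
Next, because $\rho_{ZE}$ is cq with conditional operators $\rho_{E \land z}$, I would invoke \cref{lem:d_2_cq_states} to write
\begin{equation*}
    d_2(\rho_{ZE}\mid\sigma_E) = \sum_{z \in \{0,1\}^m} \tr{M(z)^2} - \frac{1}{2^m} \tr{\Bigl(\sum_{z \in \{0,1\}^m} M(z)\Bigr)^{\!2}},
\end{equation*}
using $\rho_E = \sum_z \rho_{E \land z}$ so that $\sigma_E^{-1/4}\rho_E\sigma_E^{-1/4} = \sum_z M(z)$. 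Multiplying by $2^m/4$ gives the bound
\begin{equation*}
    \delta\bigl(\rho_{ZE},\,\omega_Z \otimes \rho_E\bigr)^2 \;\leq\; \frac{1}{4}\!\left(2^m \sum_{z} \tr{M(z)^2} \;-\; \tr{\Bigl(\sum_{z} M(z)\Bigr)^{\!2}}\right).
\end{equation*}

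The final, and key, step is to identify the parenthesized expression with the character sum on the right-hand side of the proposition. Using the orthogonality relation $\sum_{s \in \{0,1\}^m}(-1)^{s\cdot t} = 2^m$ if $t = 0^m$ and $0$ otherwise, one computes
\begin{equation*}
    \sum_{s \in \{0,1\}^m}\,\sum_{z,z'} (-1)^{s\cdot z + s\cdot z'}\, \tr{M(z)M(z')} \;=\; \sum_{z,z'} \Bigl(\sum_{s}(-1)^{s\cdot(z\oplus z')}\Bigr)\tr{M(z)M(z')} \;=\; 2^m\sum_{z}\tr{M(z)^2},
\end{equation*}
while separating off the $s = 0^m$ term yields $\sum_{z,z'}\tr{M(z)M(z')} = \tr{(\sum_z M(z))^2}$. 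Subtracting, the sum restricted to $s \neq 0^m$ equals exactly $2^m\sum_z \tr{M(z)^2} - \tr{(\sum_z M(z))^2}$, and substituting this identity into the previous display gives the claimed inequality.

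There is no real obstacle: the content is a short chain of two library lemmas followed by a Fourier-orthogonality rearrangement. The only thing to be mildly careful about is the kernel hypothesis $\ker(\id_Z \otimes \sigma_E) \subseteq \ker(\rho_{ZE})$, which is precisely the condition flagged in the footnote of \cref{cor:bound_one_norm_by_two_norm} ensuring that $\sigma_E^{-1/4}$ can be applied to $\rho_{E\land z}$ and that \cref{lem:d_2_cq_states} is valid in the non-invertible case. Everything else is a direct identification of two equal expressions.
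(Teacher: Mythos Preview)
Your proposal is correct and follows essentially the same approach as the paper: apply \cref{cor:bound_one_norm_by_two_norm} and \cref{lem:d_2_cq_states}, then rewrite the resulting expression as the character sum over $s\neq 0^m$. The only cosmetic difference is that the paper packages the final rearrangement via the matrix-valued Fourier transform and Parseval's identity (\cref{lem:Parseval}), writing $\trinorm{M}_2^2=\trinorm{\mathfrak{F}[M]}_2^2$ and then subtracting the $s=0^m$ term, whereas you carry out the equivalent computation directly using the orthogonality relation $\sum_s(-1)^{s\cdot t}=2^m[t=0^m]$; these are the same identity in different clothing.
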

\begin{proof}
    By definition of the Fourier transform, we have
    \begin{equation}
        \label{eq:FT_M_0}
        \mathfrak{F}[M](0^m) = \frac{1}{\sqrt{2^m}}\sum_{z\in\{0,1\}^m} M(z) = \frac{1}{\sqrt{2^m}}\sigma_E^{-1/4}\rho_E\sigma_E^{-1/4}.
    \end{equation}
    In the following, $z$, $z'$, and $s$ will denote $m$-bit strings. Using~\cref{cor:bound_one_norm_by_two_norm} and~\cref{lem:d_2_cq_states}, we find
    \begin{equation}
        \begin{split}
            \delta\big(\rho_{ZE}, \omega_{Z} \otimes \rho_E \big)^2 &\leq \frac{1}{4} \mathrm{dim}(\mathcal{H}_Z)\tr{\sigma_E}d_2(\rho_{ZE}|\sigma_E) \\
                & = \frac{2^m}{4} d_2(\rho_{ZE}|\sigma_E)\\
                & = \frac{2^m}{4} \mleft( \sum_{z} \tr{\mleft( \sigma_E^{-1/4} \rho_{E \land z} \sigma_E^{-1/4} \mright)^2 } - \frac{1}{2^m} \tr{\mleft( \sigma_E^{-1/4} \rho_E \sigma_E^{-1/4} \mright)^2 } \mright)\\
                & = \frac{2^m}{4} \mleft( \sum_{z} \tr{ M(z)^\dagger M(z) } - \tr{ \mathfrak{F}[M](0^m)^2 } \mright),
        \end{split}
    \end{equation}
    where in the last line we used the definition of $M(z)$ and~\cref{eq:FT_M_0}. 
    Continuing from here, we obtain
    \begin{equation}
        \begin{split}
            \delta\big(\rho_{ZE},\omega_{Z}\otimes \rho_E \big)^2 
                & \leq \frac{2^m}{4} \mleft(\trinorm{M}_2^2 - \tr{ \mathfrak{F}[M](0^m)^2 } \mright)\\
                & = \frac{2^m}{4} \mleft(\trinorm{\mathfrak{F}[M]}_2^2 - \tr{ \mathfrak{F}[M](0^m)^2 } \mright)\\
                & = \frac{2^m}{4} \mleft(\sum_{s} \tr{\mathfrak{F}[M](s)^\dagger \mathfrak{F}[M](s) }  - \tr{\mathfrak{F}[M](0^m)^2 } \mright)\\
                & = \frac{2^m}{4} \sum_{s\neq 0^m} \tr{\mathfrak{F}[M](s)^2 } \\
                &  = \frac{1}{4} \sum_{s\neq 0^m} \tr{ \sum_{z,z'} (-1)^{s\cdot z}(-1)^{s\cdot z'} M(z)M(z') },        
        \end{split}
    \end{equation}
    where the first equality follows by~\cref{lem:Parseval}. All other equalities follow by~\cref{def:_matrix_valued_FT} of the Fourier transform and~\cref{def:matrix_valued_norm} of the L\textsubscript{2}-norm for matrix-valued functions.
\end{proof}

\subsection{The Measured XOR-Lemma}\label{sec:proof_measured_XOR_lemma}
Building on~\cref{prop:proof_step_1} we find what we call the \textit{measured XOR-Lemma}, inspired by \cite{Kasher_Kempe_main} and \cite{Vazirani}. 
%\todo{should we add Goldreich ?}
It constitutes the main tool used in the modular approach presented in~\cref{sec:proof_theorem_1}.

\begin{lemma}[Measured XOR-Lemma]\label{lem:measured_XOR_lemma}
    Let $Z$ be a $m$-bit system and let $E$ be an arbitrary quantum system.
    For any cq-state $\rho_{ZE} = \sum_{z\in\{0,1\}^m} [z]_Z\otimes \rho_{E \land z}$, we have
    % \begin{equation*}
    %     \delta\bigl(\rho_{ZE}, \omega_{Z} \otimes \rho_E \bigr) \leq \sqrt{\frac{1}{2} \sum_{0\neq s \in \{0,1\}^m} \delta\bigl( \rho_{(s\cdot Z) \PGM[\rho]{(s\cdot Z)}{E}{\TLS'}} , \omega_{1}\otimes \rho_{\PGM[\rho]{(s\cdot Z)}{E}{\TLS'}} \bigr)},
    % \end{equation*}
    \begin{equation*}
        \delta\bigl(\rho_{ZE}, \omega_{Z}\otimes\rho_E \bigr) \leq \sqrt{\frac{1}{2} \sum_{0\neq s \in \{0,1\}^m} \delta\left( \PGMchannelMartin{\rho_{(s \cdot Z)E}}{\TLS|E}[\rho_{(s\cdot Z) E}] , \omega_{1} \otimes \PGMchannelMartin{\rho_{(s \cdot Z) E}}{\TLS|E}[\rho_{E}] \right)},
    \end{equation*}
    where $\TLS$ is a 1-bit system.
\end{lemma}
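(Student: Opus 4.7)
The plan is to apply~\cref{prop:proof_step_1} with the specific choice $\sigma_E = \rho_E$ and then re-express the resulting bound, one $s$ at a time, in terms of the trace distance appearing on the right-hand side of the statement. With this choice, $M(z) = \rho_E^{-1/4}\rho_{E\land z}\rho_E^{-1/4}$, and for every fixed $s \neq 0^m$ the inner double sum factorizes as
\begin{equation*}
\sum_{z, z'} (-1)^{s \cdot z + s \cdot z'} M(z) M(z') = N_s^2, \qquad N_s \coloneq \rho_E^{-1/4}\bigl(\rho_{E \land 0, s} - \rho_{E \land 1, s}\bigr)\rho_E^{-1/4},
\end{equation*}
where $\rho_{E \land b, s} \coloneq \sum_{z : s \cdot z = b} \rho_{E \land z}$ are the subnormalized conditional states of the cq-state $\rho_{(s \cdot Z) E}$. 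Since $N_s$ is Hermitian, $\tr{N_s^2} \geq 0$.

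The next step is to recognize that the POVM elements of the PGM associated with $\rho_{(s \cdot Z) E}$ are $\Lambda^b_s \coloneq \rho_E^{-1/2} \rho_{E \land b, s} \rho_E^{-1/2}$, whence by cyclicity of the trace
\begin{equation*}
\tr{N_s^2} = \tr{(\Lambda^0_s - \Lambda^1_s)(\rho_{E \land 0, s} - \rho_{E \land 1, s})} = \sum_{b, b' \in \{0, 1\}}(-1)^{b + b'}\, p_{b, b'},
\end{equation*}
where $p_{b, b'} \coloneq \tr{\Lambda^{b'}_s \rho_{E \land b, s}}$ is the joint probability that $s \cdot Z = b$ and the PGM outcome is $b'$. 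I would then directly expand the trace distance on the right-hand side of the claim using that the two states involved are classical--classical on the $(s \cdot Z)\, \TLS$ register. Comparing $p_{b, b'}$ with the marginal $\tfrac{1}{2}\sum_{b} p_{b, b'}$ yields
\begin{equation*}
\delta\bigl(\PGMchannelMartin{\rho_{(s \cdot Z) E}}{\TLS | E}[\rho_{(s \cdot Z) E}],\; \omega_1 \otimes \PGMchannelMartin{\rho_{(s \cdot Z) E}}{\TLS | E}[\rho_E]\bigr) = \tfrac{1}{2}\bigl(|p_{0, 0} - p_{1, 0}| + |p_{0, 1} - p_{1, 1}|\bigr).
\end{equation*}
Writing $\tr{N_s^2} = (p_{0, 0} - p_{1, 0}) + (p_{1, 1} - p_{0, 1})$ and using the triangle inequality together with $\tr{N_s^2} \geq 0$ gives $\tr{N_s^2} \leq 2 \delta_s$, where $\delta_s$ denotes the above trace distance. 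Substituting back into~\cref{prop:proof_step_1} and taking a square root delivers the claim.

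The main obstacle is the identification of $\tr{N_s^2}$ with the PGM measurement statistics; once the factorization into $N_s^2$ is observed and the PGM elements are recognized, the remainder reduces to elementary manipulations of a two-outcome probability table. A small but essential subtlety is the use of $\tr{N_s^2} \geq 0$, i.e., the fact that the signed sum $(p_{0, 0} - p_{1, 0}) - (p_{0, 1} - p_{1, 1})$ is automatically non-negative, which is what allows the triangle inequality to upper-bound $\tr{N_s^2}$ by the sum of absolute values and hence by twice the trace distance on the right-hand side.
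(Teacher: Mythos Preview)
Your proof is correct and follows essentially the same route as the paper: apply \cref{prop:proof_step_1} with $\sigma_E=\rho_E$, collapse the double sum over $z,z'$ according to the value of $s\cdot z$, recognise the PGM elements $\rho_E^{-1/2}\rho_{E\land b,s}\rho_E^{-1/2}$, and bound the resulting signed combination of measurement probabilities by the classical trace distance via the triangle inequality. One minor remark: the non-negativity $\tr{N_s^2}\ge 0$ is not actually needed for the final bound, since $a-b\le |a|+|b|$ holds for all real $a,b$ regardless of the sign of $a-b$.
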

% Note that~\cref{lem:measured_XOR_lemma} implies the classical-quantum XOR-Lemma in \cite[Lemma 10]{Kasher_Kempe_main} due to the data-processing inequality.
\begin{proof}
%    Let $\mathrm{dim}(\mathcal{H}_E) = D =2^d$.
    In the rest of this proof $z$, $z'$, and $s$ will denote $m$-bit strings and $i,i'$ will denote single bits.
    Let us define 
    \begin{equation}\label{eq:qubit_knowledge}
        \rho_{E \land (s\cdot Z = i)} \coloneq \sum_{z\;\mathrm{s.t.}\; s\cdot z = i} \rho_{E \land z},
    \end{equation}
    which can be seen as the (unnormalized) conditional operators of the state $\rho_{(s\cdot Z)E}$.
    % In contrast to~\cref{def:cq_state}, however, $\rho_E^{(s\cdot Z = i)}$ are non-normalized conditional operators, since they include the probability $P_{s\cdot Z}(i)$.
    From~\cref{prop:proof_step_1} with $\sigma_{E} = \rho_{E}$, recall $M(z) = \rho_E^{-1/4} \rho_{E \land z} \rho_E^{-1/4}$ and define
    \begin{equation}
        M_s(i) \coloneq \sum_{z\;\mathrm{s.t.}\; s \cdot z = i} M(z) = \rho_E^{-1/4} \rho_{E \land (s\cdot Z=i)} \rho_E^{-1/4}.
    \end{equation}
    %alternative sum: \substack{z\;\mathrm{s.t.}\\s\cdot z = i}
    
    With these definitions at hand, let us continue with the inequality that we found in~\cref{prop:proof_step_1}
    \begin{equation}\label{eq:proof:XOR_lemma:applying_prop_proof_step_1}
        \begin{split}
            \delta\bigl(\rho_{ZE}, \omega_{Z}\otimes\rho_E \bigr)^2 
            &\leq \frac{1}{4} \sum_{s\neq 0^m}\tr{ \sum_{z,z'} (-1)^{s\cdot z + s\cdot z'} M(z)M(z')}\\
            &= \frac{1}{4} \sum_{s\neq 0^m}\tr{ \sum_{i,i'} (-1)^{i + i'} M_s(i)M_s(i')}\\
            &= \frac{1}{4} \sum_{s\neq 0^m} \sum_{i,i'} (-1)^{i + i'} \tr{\rho_{E \land (s\cdot Z=i)} \rho_E^{-1/2} \rho_{E \land (s\cdot Z = i')} \rho_E^{-1/2} }\\
            &= \frac{1}{4} \sum_{s\neq 0^m} \sum_{i,i'} (-1)^{i + i'} \tr{\rho_{E \land (s\cdot Z=i')} \PGMelements[]{(s\cdot Z)}{E}{i} },
        \end{split}
    \end{equation}
    where we identified $M_s(i)$ and $M_s(i')$ in the first equality.
    For the second equality we used linearity and cyclicity of the trace.
    % Since $\sigma_E$ is arbitrary according to~\cref{prop:proof_step_1}, we made the choice $\sigma_E = \rho_E$ in the last equality.
    We then identified the POVM-elements $\PGMelements[]{(s\cdot Z)}{E}{i} = \rho_E^{-1/2}\rho_{E \land (s\cdot Z = i)}\rho_E^{-1/2}$ of a PGM.
    Note, that we omitted the superscript $\rho$ for the POVM-elements as this will not change throughout the proof.
%\todo{Note: Does it make sense to introduce $M_s(i)$ if we do not use it anywhere else in the proof?}
    
    Focusing on the sum over $i,i'$ in the last step of~\cref{eq:proof:XOR_lemma:applying_prop_proof_step_1}, we obtain 
    \begin{equation}\label{eq:proof:XOR_lemma:back_to_trace_norm}
        \begin{split}
        \sum_{i,i'} (-1)^{i+i'} \tr{ \rho_{E \land (s \cdot Z=i')} \PGMelements[]{(s\cdot Z)}{E}{i} }
            &= \sum_{i,i'} (-1)^{i+i'} \mleft( \tr{ \rho_{E \land (s\cdot Z=i')} \PGMelements[]{(s\cdot Z)}{E}{i} } - \frac{1}{2} \tr{\rho_E \PGMelements[]{(s\cdot Z)}{E}{i}} \mright) \\
            &\leq \sum_{i,i'} \abs{ \tr{ \left( \rho_{E \land (s\cdot Z=i')} - \frac{1}{2}\rho_E \right) \PGMelements[]{(s\cdot Z)}{E}{i} } }\\
            &= \norm{ \sum_{i,i'} [i]_\TLS\otimes[i']_{\TLS'} \tr{ \mleft(\rho_{E \land (s\cdot Z = i')} - \frac{1}{2}\rho_E \mright) \PGMelements[]{(s\cdot Z)}{E}{i} }  }_1.
        \end{split}
    \end{equation}
    For the first equality note that $\sum_{i'}(-1)^{i+i'}\tr{\rho_E \PGMelements[]{(s\cdot Z)}{E}{i}} = 0$. 
    The inequality follows because any real number is upper-bounded by its absolute value.
    Note, that the resulting expression can be interpreted as the trace norm of a diagonal operator that has $\tr{ \left( \rho_{E \land (s\cdot Z=i')} - \frac{1}{2}\rho_E \right) \PGMelements[]{(s\cdot Z)}{E}{i} }$ as its diagonal entries.
    This gives the last equality.
    We immediately see that 
    % \begin{equation}
    %     \rho_{(s\cdot Z) \PGM[]{(s\cdot Z)}{E}{\TLS'}} = \PGMchannel[]{(s\cdot Z)}{E}{\TLS'}(\rho_{(s\cdot Z)E}) = \sum_{i,i'} [i]_\TLS\otimes[i']_{\TLS'} \tr{\rho_E^{(s\cdot Z) = i} \PGMelements[]{(s\cdot Z)}{E}{i'} }
    % \end{equation}
    \begin{equation}
        \PGMchannelMartin{\rho_{(s \cdot Z) E}}{\TLS|E}[\rho_{(s \cdot Z) E}] = \sum_{i,i'} [i]_\TLS\otimes[i']_{\TLS'} \tr{\rho_{E \land (s\cdot Z = i')} \PGMelements[]{(s\cdot Z)}{E}{i} }
    \end{equation}
    and
    % \begin{equation}
    %     \omega_{1} \otimes \rho_{\PGM[]{(s\cdot Z)}{E}{\TLS'}} = \omega_{1} \otimes \PGMchannel[]{(s\cdot Z)}{E}{\TLS'}(\rho_E) = \sum_{i,i'} \frac{1}{2} [i]_\TLS\otimes[i']_{\TLS'} \tr{ \rho_E \PGMelements[]{(s\cdot Z)}{E}{i'} },
    % \end{equation}
    \begin{equation}
        \omega_{1} \otimes \PGMchannelMartin{\rho_{(s \cdot Z) E}}{\TLS|E}[\rho_{E}] = \sum_{i,i'} \frac{1}{2} [i]_\TLS\otimes[i']_{\TLS'} \tr{ \rho_E \PGMelements[]{(s\cdot Z)}{E}{i} }.
    \end{equation}
    Plugging this into~\cref{eq:proof:XOR_lemma:back_to_trace_norm} gives 
    % \begin{equation}
    %     \sum_{i,i'} (-1)^{i+i'} \tr{ \rho_E^{(s\cdot Z)=i} \PGMelements[]{(s\cdot Z)}{E}{i'} } 
    %     \leq 2\delta\left( \rho_{(s\cdot Z) \PGM[]{(s\cdot Z)}{E}{\TLS'}}, \omega_{1}\otimes \rho_{\PGM[]{(s\cdot Z)}{E}{\TLS'}} \right).
    % \end{equation}
    \begin{equation}
        \sum_{i,i'} (-1)^{i+i'} \tr{ \rho_{E \land (s\cdot Z=i')} \PGMelements[]{(s\cdot Z)}{E}{i} } 
        \leq 2\delta\left( \PGMchannelMartin{\rho_{(s \cdot Z)E}}{\TLS|E}[\rho_{(s\cdot Z) E}], \omega_{1} \otimes \PGMchannelMartin{\rho_{(s \cdot Z)E}}{\TLS|E}[\rho_{E}] \right).
    \end{equation}
    Together with~\cref{eq:proof:XOR_lemma:applying_prop_proof_step_1}, this gives the desired bound.
    %\todo{Note: is there anything useful we could find from directly using data-processing in order to lower bound delta(rho ZE,...) and show tightness or something similar}
\end{proof}

Applying the measured XOR-Lemma to the definition of a strong two-source extractor gives the following result.
\begin{corollary}\label{cor:measured_XOR_strong_ext}
    Let $\rho_{X_1 X_2 C_1 C_2} = \rho_{X_1 C_1}\otimes\rho_{X_2 C_2}$, where $\rho_{X_1 C_1}$ and $\rho_{X_2 C_2}$ are cq-states. For any function $\Ext: \{0,1\}^{n_1}\times\{0,1\}^{n_2}\mapsto\{0,1\}^m$, we have
    % \begin{equation}
    %     \begin{split}
    %         \delta\big(&\rho_{\Ext(X_1,X_2)X_1 C}, \rho_{U_m}\otimes\rho_{X_1 C_1}\otimes\rho_{C_2}\big)\\
    %         &\leq \sqrt{ \frac{1}{2} \sum_{0\neq s \in \{0,1\}^m} \delta\left( \rho_{(s\cdot \Ext(X_1,X_2)) X_1 \PGM[]{X_2}{C_2}{X_2'}} ,  \rho_{U_1}\otimes \rho_{X_1} \otimes \rho_{\PGM[]{X_2}{C_2}{X_2'}} \right) }. \\
    %     \end{split}
    % \end{equation}
    \begin{equation}
        \begin{split}
            \delta\big(&\rho_{\Ext(X_1,X_2) X_1 C_1 C_2}, \omega_{m} \otimes \rho_{X_1 C_1} \otimes \rho_{C_2}\big)\\
            &\leq \sqrt{ \frac{1}{2} \sum_{0\neq s \in \{0,1\}^m} \delta\left( \PGMchannelMartin{\rho_{X_2 C_2}}{X'_{2}|C_2}[\rho_{(s \cdot \Ext(X_1,X_2)) X_1 C_2}], \omega_{1} \otimes \rho_{X_1} \otimes \PGMchannelMartin{\rho_{X_2 C_2}}{X'_2|C_2}[\rho_{C_2}] \right) }. \\
        \end{split}
    \end{equation}
\end{corollary}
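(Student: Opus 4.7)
The plan is to reduce the corollary to a pointwise application of the Measured XOR-Lemma (\cref{lem:measured_XOR_lemma}) after conditioning on $X_1$, use \cref{prop:PGM_f_commute} together with data processing to replace the state-dependent PGMs appearing on the right-hand side by the single PGM $\Lambda^{\rho_{X_2 C_2}}_{X'_2 | C_2}$, and finally reassemble via Jensen's inequality and \cref{prop:trace_norm_cq}.

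First, I would use~\eqref{eq:strong_ext_condition} together with the product assumption $\rho_{X_1 X_2 C_1 C_2} = \rho_{X_1 C_1} \otimes \rho_{X_2 C_2}$ to rewrite the left-hand side of the corollary as
\begin{equation*}
    \delta\bigl(\rho_{\Ext(X_1,X_2) X_1 C_1 C_2}, \omega_{m} \otimes \rho_{X_1 C_1} \otimes \rho_{C_2}\bigr) = \mathop{\mathbb{E}}_{x_1 \leftarrow X_1} \bigl[\delta\bigl(\rho_{\Ext(x_1, X_2) C_2}, \omega_{m} \otimes \rho_{C_2}\bigr)\bigr].
\end{equation*}
The crucial point is that, by the product structure, the conditional state of $(X_2, C_2)$ given $X_1 = x_1$ is simply $\rho_{X_2 C_2}$, independent of $x_1$.

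Next, for each fixed $x_1$, I would apply the Measured XOR-Lemma to the $m$-bit cq-state $\rho_{\Ext(x_1, X_2) C_2}$. This produces a sum over $s \neq 0^m$ of trace distances involving the $(x_1, s)$-dependent PGM $\Lambda^{\rho_{(s\cdot\Ext(x_1,X_2)) C_2}}_{\TLS | C_2}$. To remove this inconvenient dependence, I would invoke \cref{prop:PGM_f_commute} with the classical post-processing $f_{x_1, s}\colon x_2 \mapsto s \cdot \Ext(x_1, x_2)$, which rewrites the PGM as $(f_{x_1, s})_{\TLS | X'_2} \circ \Lambda^{\rho_{X_2 C_2}}_{X'_2 | C_2}$. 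Since $(f_{x_1, s})_{\TLS | X'_2}$ is a CPTP map acting only on $X'_2$ and leaves any auxiliary tensor factor (in particular the $\omega_1$ on the first register) untouched, data processing (\cref{lem:trace_norm_data_processing}) upper bounds the trace distance by the same expression with $\Lambda^{\rho_{X_2 C_2}}_{X'_2 | C_2}$ in place of the state-dependent PGM.

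Finally, I would take the expectation over $x_1$, push it inside the square root via Jensen's inequality (valid since $\sqrt{\cdot}$ is concave), swap the order of expectation and sum over $s$, and apply \cref{prop:trace_norm_cq} to each summand to reassemble $\mathop{\mathbb{E}}_{x_1 \leftarrow X_1}[\delta(\cdot, \cdot)]$ as a single trace distance on the joint state with the classical register $X_1$ restored. The first argument then becomes $\Lambda^{\rho_{X_2 C_2}}_{X'_2 | C_2}[\rho_{(s\cdot\Ext(X_1, X_2)) X_1 C_2}]$ and the second $\omega_1 \otimes \rho_{X_1} \otimes \Lambda^{\rho_{X_2 C_2}}_{X'_2 | C_2}[\rho_{C_2}]$, matching the claim exactly. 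The main subtlety is handling the $x_1$-dependence of the PGM produced by the Measured XOR-Lemma; once this is taken care of via \cref{prop:PGM_f_commute} and data processing, the remaining steps are routine manipulations.
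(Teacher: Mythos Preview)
Your proposal is correct and follows essentially the same approach as the paper's proof: condition on $x_1$ via~\eqref{eq:strong_ext_condition}, apply the Measured XOR-Lemma pointwise, use \cref{prop:PGM_f_commute} plus data processing to replace the $(x_1,s)$-dependent PGM by $\Lambda^{\rho_{X_2 C_2}}_{X'_2|C_2}$, then Jensen and reassemble. The only cosmetic difference is that the paper invokes~\eqref{eq:strong_ext_condition} again for the final reassembly whereas you cite \cref{prop:trace_norm_cq}, but these amount to the same computation.
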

A similar result can be obtained for the security condition of an $X_2$-strong two-source extractor with reversed roles $X_1 C_1 \leftrightarrow X_2 C_2$.

\begin{proof}
    %\begin{equation}\label{eq:Ext_state}
    %    \rho_{\Ext(x_1, X_2)C_2} 
    %    = \sum_{z\in\{0,1\}^m} [z] \otimes \rho_{C_2}^{(\Ext(x_1,X_2)=z)}
    %\end{equation}
    %where we defined the non-normalized conditional operator
    %\begin{equation}
    %    \rho_{C_2}^{(\Ext(x_1,X_2)=z)} \coloneq \sum_{\substack{x_2\in\{0,1\}^n \;\mathrm{s.t.} \\ \Ext(x_1,x_2)=z }} P_{X_2}(x_2) \rho_{C_2}^{x_2}.
    %\end{equation}
    %Then we have $\PGMelements[]{\Ext(x_1,X_2)}{C_2}{z} = \rho_{C_2}^{-1/2} \rho_{C_2}^{(\Ext(x_1,X_2)=z)} \rho_{C_2}^{-1/2}$
    
    In the rest of this proof, $s$ will denote $m$-bit strings and $x_1$ will denote $n_1$-bit strings.
    % In the interest of clarity, we will omit the sets $\{0,1\}^m$ and $\{0,1\}^{n_1}$ when writing down sums.
    % We have
    % \begin{equation}\label{eq:proof:cor:measured_XOR:applying_measured_XOR}
    %     \begin{split}
    %         \delta\big(&\rho_{\Ext(X_1,X_2)X_1 C}, \omega_{m}\otimes\rho_{X_1 C_1}\otimes\rho_{C_2}\big) \\
    %         &= \mathop{\mathbb{E}}_{x_1\leftarrow X_1} \mleft[\delta\big(\rho_{\Ext(x_1,X_2)C_2}, \omega_{m}\otimes\rho_{C_2}\big)\mright] \\
    %         &\leq \mathop{\mathbb{E}}_{x_1\leftarrow X_1} \mleft[ \sqrt{\frac{1}{2} \sum_{s\neq0^m} \delta\bigl( \rho_{(s\cdot \Ext(x_1,X_2)) \PGM[]{(s\cdot\mathrm{Ext}(x_1,X_2))}{C_2}{\TLS'}}, \omega_{1}\otimes \rho_{\PGM[]{(s\cdot\mathrm{Ext}(x_1,X_2))}{C_2}{\TLS'}} \bigr)} \mright]  \\
    %         &= \mathop{\mathbb{E}}_{x_1\leftarrow X_1} \mleft[ \sqrt{\frac{1}{2} \sum_{s\neq0^m} \delta\bigl( \rho_{(s\cdot \Ext(x_1,X_2)) (s\cdot\mathrm{Ext}(x_1,\PGM[]{X_2}{C_2}{X_2'}))}, \omega_{1}\otimes \rho_{(s\cdot\mathrm{Ext}(x_1,\PGM[]{X_2}{C_2}{X_2'}))} \bigr)} \mright]  \\
    %         &\leq \mathop{\mathbb{E}}_{x_1\leftarrow X_1} \mleft[ \sqrt{\frac{1}{2} \sum_{s\neq0^m} \delta\bigl( \rho_{(s\cdot \Ext(x_1,X_2)) \PGM[]{X_2}{C_2}{X_2'}}, \omega_{1}\otimes \rho_{\PGM[]{X_2}{C_2}{X_2'}} \bigr)} \mright] \\
    %     \end{split}  
    % \end{equation}
    For readability, let us define $f^{s,x_1}_{}(x_2) \coloneqq s \cdot \mathrm{Ext}(x_1, x_2)$. We have
    \begin{equation}\label{eq:proof:cor:measured_XOR:applying_measured_XOR}
        \begin{split}
            \delta\big(&\rho_{\Ext(X_1,X_2)X_1 C}, \omega_{m}\otimes\rho_{X_1 C_1}\otimes\rho_{C_2}\big) \\
            &= \mathop{\mathbb{E}}_{x_1\leftarrow X_1} \mleft[\delta\big(\rho_{\Ext(x_1,X_2)C_2}, \omega_{m}\otimes\rho_{C_2}\big)\mright] \\
            &\leq \mathop{\mathbb{E}}_{x_1\leftarrow X_1} \mleft[ \sqrt{\frac{1}{2} \sum_{s\neq0^m} \delta\left( \PGMchannelMartin{\rho_{f^{s,x_1}(X_2) C_2}}{\TLS'|C_{2}}[\rho_{(s \cdot \mathrm{Ext}(x_1, X_2)) C_2}], \omega_{1} \otimes \PGMchannelMartin{\rho_{f^{s,x_1}(X_2) C_2}}{\TLS'|C_{2}}[\rho_{C_2}] \right)} \mright]  \\
            &= \mathop{\mathbb{E}}_{x_1\leftarrow X_1} \mleft[ \sqrt{\frac{1}{2} \sum_{s\neq0^m} \delta\left( f^{s,x_1}_{\TLS'|X'_2} \circ \PGMchannelMartin{\rho_{X_2 C_2}}{X'_2|C_2}[\rho_{(s \cdot \mathrm{Ext}(x_1, X_2)) C_2}], \omega_{1} \otimes f^{s,x_1}_{\TLS'|X'_2} \circ \PGMchannelMartin{\rho_{X_2 C_2}}{X'_2|C_2}[\rho_{C_2}] \right)} \mright]  \\
            &\leq \mathop{\mathbb{E}}_{x_1\leftarrow X_1} \mleft[ \sqrt{\frac{1}{2} \sum_{s\neq0^m} \delta\left( \PGMchannelMartin{\rho_{X_2 C_2}}{X'_2|C_2}[\rho_{(s \cdot \mathrm{Ext}(x_1, X_2)) C_2}], \omega_{1} \otimes \PGMchannelMartin{\rho_{X_2 C_2}}{X'_2|C_2}[\rho_{C_2}] \right)} \mright]  \\
        \end{split}  
    \end{equation}
    where we used~\cref{eq:strong_ext_condition} in the first equality.
    In line two, we applied the measured XOR-Lemma with $Z=\Ext(x_1,X_2)$ and $E=C_2$.
    % Recall, that in the Measured XOR-Lemma $\TLS'$ denotes a two-level system.
    In line three, we used~\cref{prop:PGM_f_commute} and in line four data-processing of the trace norm. % in order to get from an expression with $\PGM[]{(s\cdot\mathrm{Ext}(x_1,X_2))}{C_2}{\TLS'}$ to an expression with $\PGM[]{X_2}{C_2}{X_2'}$. 
    % Thereby, we remove the implicit dependence of $\PGM[]{(s\cdot\mathrm{Ext}(x_1,X_2))}{C_2}{\TLS'}$ on $x_1$. 
    
    % We recall Jensen's inequality that states $\mathop{\mathbb{E}}_{i}[f(g_i)]\leq f(\mathop{\mathbb{E}}_{i}[g_i])$ for any concave function $f$ and $g_i\in\mathbb{R}$.
    % Applying it to the equation above, we find
    Applying Jensen's inequality gives
    % \begin{equation}
    %     \begin{split}
    %         \delta\big(&\rho_{\Ext(X_1,X_2)X_1 C}, \omega_{m}\otimes\rho_{X_1 C_1}\otimes\rho_{C_2}\big) \\
    %          &\leq \sqrt{ \mathop{\mathbb{E}}_{x_1\leftarrow X_1} \mleft[ \frac{1}{2} \sum_{s\neq0^m} \delta\bigl( \rho_{(s\cdot \Ext(x_1,X_2)) \PGM[]{X_2}{C_2}{X_2'}} , \omega_{1}\otimes \rho_{\PGM[]{X_2}{C_2}{X_2'}} \bigr) \mright] } \\
    %          &= \sqrt{ \frac{1}{2} \sum_{s\neq0^m} \delta\bigl( \rho_{(s\cdot \Ext(X_1,X_2)) X_1 \PGM[]{X_2}{C_2}{X_2'}} , \omega_{1}\otimes \rho_{X_1} \otimes \rho_{\PGM[]{X_2}{C_2}{X_2'}} \bigr) },
    %     \end{split}
    % \end{equation}
    \begin{equation}
        \begin{split}
            \delta\big(&\rho_{\Ext(X_1,X_2)X_1 C_1 C_2}, \omega_{m}\otimes\rho_{X_1 C_1}\otimes\rho_{C_2}\big) \\
            &\leq \sqrt{ \mathop{\mathbb{E}}_{x_1\leftarrow X_1} \mleft[ \frac{1}{2} \sum_{s\neq0^m} \delta\left( \PGMchannelMartin{\rho_{X_2 C_2}}{X'_2|C_2}[\rho_{(s \cdot \mathrm{Ext}(x_1, X_2) C_2}], \omega_{1} \otimes \PGMchannelMartin{\rho_{X_2 C_2}}{X'_2|C_2}[\rho_{C_2}] \right) \mright]} \\
            &= \sqrt{ \frac{1}{2} \sum_{s\neq0^m} \delta\left( \PGMchannelMartin{\rho_{X_2 C_2}}{X'_2|C_2}[\rho_{(s \cdot \mathrm{Ext}(X_1, X_2) X_1 C_2}], \omega_{1} \otimes \rho_{X_1} \otimes \PGMchannelMartin{\rho_{X_2 C_2}}{X'_2|C_2}[\rho_{C_2}] \right)},
        \end{split}
    \end{equation}
    where in the last line we used~\cref{eq:strong_ext_condition} again.
\end{proof}

\subsection{Security of generic extractors}\label{sec:proof_theorem_1}
With the previous results, we can state our first main result.
\begin{theorem}\label{thm:general_ext_result}
    Any $(k_1, k_2, \varepsilon)$ $X_1$-strong two-source extractor against classical product-type knowledge is a $(k_1, k_2, \sqrt{2^{m-1} \varepsilon})$ $X_1$-strong two-source extractor against quantum product-type knowledge.
\end{theorem}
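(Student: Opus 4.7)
The plan is to combine the measured XOR-Lemma (in the form of \cref{cor:measured_XOR_strong_ext}) with data processing of both min-entropy and trace distance, so that the quantum side information $C_2$ is replaced by a classical measurement outcome to which the classical-product-type hypothesis can be applied.

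Concretely, I would start by applying \cref{cor:measured_XOR_strong_ext} to the strong-extractor quantity on the left-hand side. This yields
\begin{equation*}
\delta\bigl(\rho_{\Ext(X_1,X_2)X_1 C_1 C_2}, \omega_m \otimes \rho_{X_1 C_1}\otimes \rho_{C_2}\bigr)^2 \leq \tfrac{1}{2}\sum_{0 \neq s\in\{0,1\}^m} \delta\bigl(\PGMchannelMartin{\rho_{X_2 C_2}}{X_2'|C_2}[\rho_{(s\cdot\Ext(X_1,X_2))X_1 C_2}],\, \omega_1 \otimes \rho_{X_1}\otimes \PGMchannelMartin{\rho_{X_2 C_2}}{X_2'|C_2}[\rho_{C_2}]\bigr).
\end{equation*}
The goal is then to bound each summand by the extractor assumption.

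Next I would analyse the state produced by the PGM. Because the original side information is product, $\rho_{X_1 X_2 C_1 C_2} = \rho_{X_1 C_1}\otimes \rho_{X_2 C_2}$, and the PGM acts only on $C_2$, the post-measurement state on $X_1 X_2 X_2'$ is $\rho_{X_1}\otimes\rho_{X_2 X_2'}$ after tracing out $C_1$, with $X_2'$ classical. Thus $X_2'$ is a legitimate piece of classical side information on $X_2$ only, and trivial side information on $X_1$. By data processing of the conditional min-entropy (\cref{lem:data_processing}) one has $H_{\min}(X_2|X_2')_\sigma \geq H_{\min}(X_2|C_2)_\rho \geq k_2$, while $H_{\min}(X_1) \geq H_{\min}(X_1|C_1) \geq k_1$. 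Hence the classical-product-type hypothesis applies and gives
\begin{equation*}
\delta\bigl(\rho_{\Ext(X_1,X_2) X_1 X_2'},\, \omega_m \otimes \rho_{X_1} \otimes \rho_{X_2'}\bigr)\leq \varepsilon.
\end{equation*}

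To bring this back to the single-bit expression inside the sum, I would invoke data processing of the trace norm (\cref{lem:trace_norm_data_processing}) with the classical channel $z \mapsto s\cdot z$, noting that for $s\neq 0$ this map sends $\omega_m$ to $\omega_1$. This yields, for every nonzero $s$,
\begin{equation*}
\delta\bigl(\rho_{(s\cdot\Ext(X_1,X_2)) X_1 X_2'},\, \omega_1 \otimes \rho_{X_1} \otimes \rho_{X_2'}\bigr)\leq \varepsilon,
\end{equation*}
which is exactly the quantity appearing in the XOR sum. Substituting and upper-bounding the number of nonzero $s$ by $2^m$ gives $\sqrt{\tfrac{1}{2}(2^m-1)\varepsilon}\leq \sqrt{2^{m-1}\varepsilon}$, as desired.

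The only non-routine step is the middle one: recognizing that the PGM on $C_2$ converts the quantum product-type situation into a genuinely classical product-type one, while simultaneously preserving the relevant min-entropy via data processing. Once this is in place, the rest is a purely mechanical assembly: Corollary~\ref{cor:measured_XOR_strong_ext}, the classical hypothesis applied to $\Ext$, and data processing to pass from $\Ext$ to $s\cdot\Ext$.
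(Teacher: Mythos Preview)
Your proposal is correct and follows essentially the same approach as the paper: apply \cref{cor:measured_XOR_strong_ext}, note that the PGM on $C_2$ turns the quantum product-type setting into a classical product-type one with the min-entropy preserved by data processing, and then invoke the classical hypothesis together with the data-processing step $z\mapsto s\cdot z$ to bound each summand by~$\varepsilon$. The only cosmetic difference is the order of the last two steps: the paper first applies data processing inside the sum (replacing the $1$-bit quantity by the $m$-bit one, which makes the summand independent of $s$) and then invokes the classical hypothesis once, whereas you first invoke the hypothesis on the $m$-bit quantity and then data-process down to each $s\cdot\Ext$; both orderings are equivalent and yield the same bound.
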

\begin{proof}
    As before, $s$ will denote $m$-bit strings.
    Let $\mathrm{Ext}: \{0,1\}^{n_1} \times \{0,1\}^{n_2} \rightarrow \{0,1\}^m$ be a $(k_1, k_2, \varepsilon)$ $X_1$-strong two-source extractor. 
    Let $X_1$, $X_2$ be randomness sources with quantum side information $C = C_1 C_2$ of the form $\rho_{X_1 X_2 C}=\rho_{X_1 C_1}\otimes\rho_{X_2 C_2}$, satisfying $H_{\min}(X_1|C_1)\geq k_1$ and $H_{\min}(X_2|C_2)\geq k_2$.
    From~\cref{cor:measured_XOR_strong_ext} we know that 
    % $\mathcal{D}_{X_2'|C_2}$ denotes the channel that applies the PGM $\{\mathcal{D}^{x_2}\}_{x_2\in\{0,1\}^n}$ with POVM-elements $\mathcal{D}^{x_2} \coloneq P_{X_2}(x_2) \rho_{C_2}^{-1/2} \rho_{C_2}^{x_2} \rho_{C_2}^{-1/2}$ and stores the results in $X_2'$.
    % \begin{equation}\label{eq:proof:general_ext:data_processing}
    %     \begin{split}
    %         \delta\big(&\rho_{\Ext(X_1,X_2)X_1 C}, \omega_{m}\otimes\rho_{X_1 C_1}\otimes\rho_{C_2}\big) \\
    %         &\leq \sqrt{ \frac{1}{2} \sum_{s\neq0^m} \delta\bigl( \rho_{(s\cdot \Ext(X_1,X_2)) X_1 \PGM[]{X_2}{C_2}{X_2'}} ,  \omega_{1}\otimes \rho_{X_1} \otimes \rho_{\PGM[]{X_2}{C_2}{X_2'}} \bigr) } \\
    %         &\leq \sqrt{ \frac{1}{2} \sum_{s\neq0^m} \delta\bigl( \rho_{\Ext(X_1,X_2) X_1 \PGM[]{X_2}{C_2}{X_2'}} , \omega_{m}\otimes \rho_{X_1} \otimes \rho_{\PGM[]{X_2}{C_2}{X_2'}} \bigr) } \\
    %         &= \sqrt{ 2^{m-1} \delta\bigl( \rho_{\Ext(X_1,X_2) X_1 \PGM[]{X_2}{C_2}{X_2'}}, \omega_{m}\otimes \rho_{X_1} \otimes \rho_{\PGM[]{X_2}{C_2}{X_2'}} \bigr) } \\
    %     \end{split}
    % \end{equation}
    \begin{equation}\label{eq:proof:general_ext:data_processing}
        \begin{split}
            \delta\big(&\rho_{\Ext(X_1,X_2)X_1 C_1 C_2}, \omega_{m}\otimes\rho_{X_1 C_1}\otimes\rho_{C_2}\big) \\
            &\leq \sqrt{ \frac{1}{2} \sum_{s\neq0^m} \delta \left( \PGMchannelMartin{\rho_{X_2 C_2}}{X'_2|C_2}[\rho_{(s \cdot \mathrm{Ext}(X_1, X_2)) X_1 C_2}],  \omega_{1} \otimes \rho_{X_1} \otimes \PGMchannelMartin{\rho_{X_2 C_2}}{X'_2|C_2}[\rho_{C_2}] \right) } \\
            &\leq \sqrt{ \frac{1}{2} \sum_{s\neq0^m} \delta \left( \PGMchannelMartin{\rho_{X_2 C_2}}{X'_2|C_2}[\rho_{\mathrm{Ext}(X_1, X_2) X_1 C_2}], \omega_{m} \otimes \rho_{X_1} \otimes \PGMchannelMartin{\rho_{X_2 C_2}}{X'_2|C_2}[\rho_{C_2}] \right) } \\
            &\leq \sqrt{ 2^{m-1} \delta \left( \PGMchannelMartin{\rho_{X_2 C_2}}{X'_2|C_2}[\rho_{\mathrm{Ext}(X_1, X_2) X_1 C_2}], \omega_{m} \otimes \rho_{X_1} \otimes \PGMchannelMartin{\rho_{X_2 C_2}}{X'_2|C_2}[\rho_{C_2}] \right) } \\
        \end{split}
    \end{equation}
    where we used data-processing of the trace norm to obtain the second inequality. 
    Since the summands are independent of the bit string $s$, we obtain the third inequality.
    
    Due to the data-processing inequality of the min-entropy we have, that $H_{\mathrm{min}}(X_1) \geq H_{\mathrm{min}}(X_1|C_1) \geq k_1$ and $H_{\mathrm{min}}(X_2|X_2')_{\Lambda[\rho]} \geq H_{\mathrm{min}}(X_2|C_2)_{\rho} \geq k_2$. Since $\mathrm{Ext}$ is a $(k_1, k_2, \varepsilon)$-extractor against classical product-type knowledge we have that
    % From~\cref{prop:ext_to_ext_against_classical_product_type} we know that $\mathrm{Ext}$ is a $(k_1, k_2 + \log\frac{1}{\varepsilon}, 2\varepsilon)$ $X_1$-strong two-source extractor against classical product-type knowledge, i.e.
    % \begin{equation}
    %     \delta \bigl( \rho_{\Ext(X_1,X_2) X_1 \PGM[]{X_2}{C_2}{X_2'}} ,  \rho_{U_m}\otimes \rho_{X_1} \otimes \rho_{\PGM[]{X_2}{C_2}{X_2'}} \bigr) \leq 2\varepsilon,
    % \end{equation}
    \begin{equation*}
            \delta \left( \PGMchannelMartin{\rho_{X_2 C_2}}{X'_2|C_2}[\rho_{\mathrm{Ext}(X_1, X_2) X_1 C_2}], \omega_{m} \otimes \rho_{X_1} \otimes \PGMchannelMartin{\rho_{X_2 C_2}}{X'_2|C_2}[\rho_{C_2}] \right) \leq \varepsilon
    \end{equation*}
    which, together with the previous equation, gives the desired inequality.
\end{proof}

We remark that a similar statement can be achieved for weak extractors. For a proof, see~\cref{sec:weak_ext_proof}.
Combining the previous result with~\cref{prop:ext_to_ext_against_classical_product_type}, we get the following statement.

\begin{corollary} \label{cor:no_side_info_to_quantum_product}
    Any $(k_1, k_2, \varepsilon)$ $X_1$-strong two-source extractor is a $(k_1, k_2 + \log \frac{1}{\varepsilon}, \sqrt{2^{m} \varepsilon})$ $X_1$-strong two-source extractor against quantum product-type knowledge.
\end{corollary}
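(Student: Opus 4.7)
The plan is to derive this corollary as a direct composition of the two named ingredients, treating it as a parameter-tracking exercise rather than a new proof. Starting from an arbitrary $(k_1, k_2, \varepsilon)$ $X_1$-strong two-source extractor (with no side information), I would first invoke \cref{prop:ext_to_ext_against_classical_product_type} to promote it to a $(k_1,\, k_2 + \log\frac{1}{\varepsilon},\, 2\varepsilon)$ $X_1$-strong two-source extractor against \emph{classical} product-type knowledge. This step pays for introducing classical side information $C_1 C_2$ by raising the min-entropy demand on $X_2$ by $\log \tfrac{1}{\varepsilon}$ and doubling the error.

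Next, I would feed the resulting classical-side-information extractor into \cref{thm:general_ext_result}. Since that theorem preserves both min-entropy parameters and only rescales the error, it returns a $(k_1,\, k_2 + \log\frac{1}{\varepsilon},\, \sqrt{2^{m-1}\cdot 2\varepsilon})$ $X_1$-strong two-source extractor against \emph{quantum} product-type knowledge. Simplifying the error gives
\begin{equation*}
    \sqrt{2^{m-1}\cdot 2\varepsilon} \;=\; \sqrt{2^{m}\varepsilon},
\end{equation*}
which matches the claimed parameters.

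There is no genuine obstacle here; the only check worth flagging is the \emph{hypothesis compatibility} between the two steps. Namely, I need to verify that the min-entropy bounds $H_{\min}(X_1|C_1)\ge k_1$ and $H_{\min}(X_2|C_2)\ge k_2 + \log\frac{1}{\varepsilon}$ produced by \cref{prop:ext_to_ext_against_classical_product_type} are exactly what \cref{thm:general_ext_result} consumes as its classical-product-type hypothesis at the shifted entropy level, and that the $X_1$-strong security criterion is preserved verbatim across both reductions (both statements are formulated for the same $X_1$-strong security definition, so no rewriting is needed). Once this is noted, the corollary follows in a single line by substituting $\varepsilon \mapsto 2\varepsilon$ into the error bound of \cref{thm:general_ext_result}.
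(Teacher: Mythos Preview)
Your proposal is correct and matches the paper's own argument exactly: the corollary is obtained by first applying \cref{prop:ext_to_ext_against_classical_product_type} and then \cref{thm:general_ext_result}, with the error simplification $\sqrt{2^{m-1}\cdot 2\varepsilon}=\sqrt{2^{m}\varepsilon}$.
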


Note, that this result can be seen as a slightly improved version of \cite[corollary 16]{arnonfriedman_extractors} in the case of strong extractors (although one could obtain an equally strong version using the techniques of \cite{arnonfriedman_extractors} by using~\cref{prop:ext_to_ext_against_classical_product_type} instead of their lemma 6).

Applying~\cref{cor:no_side_info_to_quantum_product} to the \textDEOR-extractor, we find that \textDEOR{} remains secure with security parameter $\varepsilon = 2^{-(k_1 + k_2 + 2 - n - r - 4m)/6}$. This is significantly worse than the error against no side information in~\cref{lem:BleA_ext_no_side_info}.

Due to the modularity of our proof, which is based on the measured XOR-Lemma, we can further improve~\cref{thm:general_ext_result} in the case of specific extractors.
The reason for this is that the data-processing inequality in~\cref{eq:proof:general_ext:data_processing} may be very loose. If we know that the error for the extractor $s \cdot \mathrm{Ext}$ is significantly smaller than the error for $\mathrm{Ext}$, we can get an improved bound.
We demonstrate this for the case of the \textDEOR{}-extractor, where we can use the fact that $s\cdot\DEOR(X_1,X_2) = \mathrm{IP}(X_1, A_s^T X_2)$ which, together with two-universality of the inner product $\mathrm{IP}$, gives the following result:
\begin{restatable}{lemma}{bleAres}
    \label{lem:BleA_2m_result}
    The function \textDEOR{} is a $(k_1,k_2,2^{-\frac{k_1+k_2+3-n-r-2m}{4}})$ $X_1$-strong two-source extractor against quantum product-type knowledge.
\end{restatable}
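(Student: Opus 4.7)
The plan is modular, built on the measured XOR-Lemma. The first step is to invoke \cref{cor:measured_XOR_strong_ext} applied to $\Ext = \DEOR$, reducing the problem to controlling, for each $s\in\{0,1\}^m\setminus\{0^m\}$, the quantity
\begin{equation*}
    \delta_s := \delta\bigl( \Lambda^{\rho_{X_2 C_2}}_{X_2'|C_2}[\rho_{(s\cdot \DEOR(X_1,X_2))\,X_1 C_2}],\; \omega_{1} \otimes \rho_{X_1} \otimes \Lambda^{\rho_{X_2 C_2}}_{X_2'|C_2}[\rho_{C_2}] \bigr),
\end{equation*}
then summing over $s$ and taking a square root.

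The key algebraic identity is $s\cdot\DEOR(X_1,X_2) = X_1 \cdot (A_s^T X_2) = \mathrm{IP}(X_1, A_s^T X_2)$, so each $\delta_s$ is simply the error of a single-bit inner-product extractor applied to $X_1$ and $Y := A_s^T X_2$. Since the PGM replaces the quantum register $C_2$ by a classical register $X_2'$, data-processing (\cref{lem:data_processing}) yields $H_{\min}(X_2|X_2') \geq H_{\min}(X_2|C_2) \geq k_2$; applying \cref{prop:min_entropy_matrix_multipl} with $\mathrm{rank}(A_s^T)\geq n-r$ then gives $H_{\min}(Y|X_2') \geq k_2 - r$, and $H_{\min}(X_1) \geq k_1$ also holds by data-processing.

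To bound each $\delta_s$, I avoid the generic reduction \cref{prop:ext_to_ext_against_classical_product_type} (which would introduce an extra constant factor and spoil the exponent) and instead exploit two-universality of $\mathrm{IP}$ directly. \Cref{cor:bound_one_norm_by_two_norm} with reference state $\rho_{X_1 X_2'}$ reduces matters to bounding $d_2(\rho_{\mathrm{IP}(X_1,Y) X_1 X_2'}|\rho_{X_1 X_2'})$. Because all registers are classical and $X_1$ is independent of $(Y, X_2')$, \cref{lem:d_2_cq_states} and the Bernoulli identity $\sum_z P(\mathrm{IP}=z|x_1,x_2')^2 = \tfrac{1}{2} + \tfrac{1}{2}\alpha(x_1,x_2')^2$ with bias $\alpha(x_1,x_2') := \mathop{\mathbb{E}}_{y\leftarrow Y|x_2'}[(-1)^{x_1\cdot y}]$ express $d_2$ as $\tfrac{1}{2}\mathop{\mathbb{E}}_{x_1\leftarrow X_1, x_2'\leftarrow X_2'}[\alpha(x_1,x_2')^2]$. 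Orthogonality of the Walsh--Hadamard characters gives $\sum_{x_1}\alpha(x_1,x_2')^2 = 2^n\sum_y P(y|x_2')^2$, and combining this with $\max_{x_1} P_{X_1}(x_1) \leq 2^{-k_1}$ and $\mathop{\mathbb{E}}_{x_2'\leftarrow X_2'}[\sum_y P(y|x_2')^2] \leq 2^{-H_{\min}(Y|X_2')} \leq 2^{-(k_2-r)}$ produces $\delta_s \leq 2^{-(k_1+k_2-r+2-n)/2}$ uniformly in $s$.

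Plugging back into \cref{cor:measured_XOR_strong_ext}, summing over the $2^m -1 < 2^m$ nonzero $s$ and taking a square root yields the claimed bound $\varepsilon \leq 2^{-(k_1+k_2+3-n-r-2m)/4}$. The main obstacle is the two-universality computation in the third paragraph: it is what produces the improvement over the generic reduction, and it crucially requires choosing the non-uniform reference state $\rho_{X_1 X_2'}$ in \cref{cor:bound_one_norm_by_two_norm} to leverage the product structure $X_1 \perp (Y,X_2')$. The remaining steps are routine applications of min-entropy data-processing and the measured XOR-Lemma.
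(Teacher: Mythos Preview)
Your proof is correct and follows the same modular strategy as the paper: apply \cref{cor:measured_XOR_strong_ext}, rewrite $s\cdot\DEOR(X_1,X_2)=\mathrm{IP}(X_1,A_s^T X_2)$, use data-processing plus \cref{prop:min_entropy_matrix_multipl} to get $H_{\min}(A_s^T X_2\mid X_2')\geq k_2-r$, bound each $\delta_s$, and sum. The only difference is in how you bound $\delta_s$. The paper invokes a separately stated result that $\mathrm{IP}$ is a $(k_1,k_2,2^{-(1+k_1+k_2-n)/2})$ $X_1$-strong extractor against classical product-type knowledge, itself proved via the two-universal hashing machinery (Renner's $d_2$ bound combined with Hayashi--Tsurumaru's non-uniform-seed lemma). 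You instead compute $d_2(\rho_{\mathrm{IP}(X_1,Y)X_1X_2'}\mid\rho_{X_1X_2'})$ directly via the bias/Parseval identity, which is essentially an inline re-derivation of that same IP extractor property. Your route is more elementary and self-contained, and in fact yields $\delta_s\leq 2^{-(k_1+k_2-r+2-n)/2}$, one bit tighter than the paper's $2^{-(k_1+k_2-r+1-n)/2}$; this propagates to a final bound of $2^{-(k_1+k_2+4-n-r-2m)/4}$, which of course still implies the stated lemma. The paper's route has the advantage of isolating the IP extractor property as a reusable black box.
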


The detailed proof of this result can be found in~\cref{app:proof:lem:BleA_2m_result}.
Note, that the error $\varepsilon$ in~\cref{lem:BleA_2m_result} is still larger than the error in~\cref{lem:BleA_ext_no_side_info}, where no side information is considered. 
Let us focus on closing this gap next.

\subsection{Security of the Dodis extractor}\label{sec:proof_theorem_2}
In this section we present our second, non-modular approach that gives us our second main result.
\begin{theorem}\label{thm:BleA_1m_result}
    The \textDEOR{}-extractor from~\cref{def:BleA_ext} is a $(k_1,k_2,\varepsilon)$ $X_1$-strong extractor against quantum product-type knowledge, where
    \begin{equation*}
        \varepsilon = 2^{-\frac{k_1+k_2+2-n-r-m}{2}}.
    \end{equation*}
\end{theorem}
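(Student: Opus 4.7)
The plan is to bypass the measured XOR-Lemma used in \cref{sec:proof_theorem_1}, which introduces a lossy factor of $2^{m-1}$, and instead apply \cref{prop:proof_step_1} directly to the conditional state $\rho_{\DEOR(x_1, X_2) C_2}$. The key algebraic fact that makes this possible is the identity
\begin{equation*}
    s \cdot \DEOR(x_1, x_2) = (A_s x_1) \cdot x_2 \quad \text{with} \quad A_s = \sum_{i=1}^m s_i A_i,
\end{equation*}
which follows immediately from \cref{def:BleA_ext}. This identity turns the XOR-phase $(-1)^{s \cdot z}$ on the $m$-bit output into a single linear character of $x_2$ whose frequency is $A_s x_1$, so that the sum over output strings becomes a matrix-valued Fourier transform in the $x_2$-variable.

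By \cref{eq:strong_ext_condition}, the $X_1$-strong security quantity equals $\mathop{\mathbb{E}}_{x_1 \leftarrow X_1}\mleft[\delta(\rho_{\DEOR(x_1, X_2) C_2}, \omega_m \otimes \rho_{C_2})\mright]$. For each fixed $x_1$, I apply \cref{prop:proof_step_1} with $\sigma_{C_2}$ chosen as the optimizer of $H_{\min}(X_2|C_2)$ and set $M(x_2) \coloneq \sigma_{C_2}^{-1/4} \rho_{C_2 \wedge x_2} \sigma_{C_2}^{-1/4}$, which depends on neither $s$ nor $x_1$. A short calculation, using the identity above, collapses the inner sum over $z \in \{0,1\}^m$ weighted by $(-1)^{s \cdot z}$ into $\sqrt{2^n}\,\mathfrak{F}[M](A_s x_1)$, giving
\begin{equation*}
    \delta(\rho_{\DEOR(x_1, X_2) C_2}, \omega_m \otimes \rho_{C_2})^2 \leq \frac{2^n}{4} \sum_{0^m \neq s \in \{0,1\}^m} \tr{\mathfrak{F}[M](A_s x_1)^2}.
\end{equation*}

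I would then use Jensen's inequality to move $\mathop{\mathbb{E}}_{x_1}$ inside the square root and swap it with the sum over $s$. By \cref{prop:min_entropy_matrix_multipl} combined with $H_{\min}(X_1) \geq H_{\min}(X_1|C_1) \geq k_1$, the image $A_s X_1$ has min-entropy at least $k_1 - r$ for every $s \neq 0^m$, hence $P_{A_s X_1}(y) \leq 2^{r - k_1}$. Since $\tr{\mathfrak{F}[M](y)^2} \geq 0$, this gives
\begin{equation*}
    \mathop{\mathbb{E}}_{x_1 \leftarrow X_1}\mleft[\tr{\mathfrak{F}[M](A_s x_1)^2}\mright] \leq 2^{r - k_1} \sum_{y \in \{0,1\}^n} \tr{\mathfrak{F}[M](y)^2} = 2^{r - k_1}\,\trinorm{M}_2^2,
\end{equation*}
where the last equality is Parseval's theorem (\cref{lem:Parseval}). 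The choice of $\sigma_{C_2}$ together with $H_2 \geq H_{\min}$ and \cref{lem:collision_ent_cq} yields $\trinorm{M}_2^2 \leq 2^{-k_2}$. Collecting the factors and bounding $2^m - 1 \leq 2^m$, we obtain $\delta^2 \leq 2^{n + m + r - k_1 - k_2 - 2}$, i.e.\ exactly the claimed $\varepsilon = 2^{-(k_1 + k_2 + 2 - n - r - m)/2}$.

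The main obstacle is the Fourier identification in the first step: one must recognize that, after substituting $z = \DEOR(x_1, x_2)$, the XOR-character on the $m$-bit output becomes a \emph{single} linear character of the $n$-bit input $x_2$ with frequency $A_s x_1$. Once this is done, every subsequent inequality---\cref{prop:proof_step_1} itself, Parseval, the min-entropy bound on $A_s X_1$, and the collision-entropy expression for $\trinorm{M}_2^2$---is essentially tight, which is precisely what allows us to match the no-side-information error from \cref{lem:BleA_ext_no_side_info}.
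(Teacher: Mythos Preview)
Your proposal is correct and follows essentially the same approach as the paper: apply \cref{prop:proof_step_1}, use the identity $s\cdot\DEOR(x_1,x_2)=(A_s x_1)\cdot x_2$ to collapse the $(-1)^{s\cdot z}$ phase into a Fourier character in $x_2$, bound the resulting $\mathfrak{F}[M](A_s x_1)$-terms via the min-entropy of $A_s X_1$ and Parseval, and finish with the collision-entropy identity for $\trinorm{M}_2^2$.

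The only difference is organizational. The paper applies \cref{prop:proof_step_1} once to the full state $\rho_{\DEOR(X_1,X_2)X_1C_1C_2}$ with $E=X_1C_1C_2$ and then chooses $\sigma_{X_1C}=\rho_{X_1C_1}\otimes\sigma_{C_2}$; this choice makes the $X_1C_1$-part of the trace collapse to $\delta_{x_1,x_1'}P_{X_1}(x_1)$, producing the expectation over $x_1$ internally. You instead use \cref{eq:strong_ext_condition} to condition on $x_1$ first, apply \cref{prop:proof_step_1} with $E=C_2$ for each $x_1$, and then average via Jensen. Both routes land on the same sum $\sum_{s\neq 0^m}\mathop{\mathbb{E}}_{x_1}[\tr{\mathfrak{F}[M](A_s x_1)^2}]$ and the same bound. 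Your route is slightly cleaner in that the $C_1$ system disappears before any computation; the paper's route avoids the extra Jensen step. Neither gains or loses anything quantitatively.
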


\begin{proof}
    In the following, $x$ will represent $n$-bit strings whereas $s, z$ will represent $m$-bit strings.
    % When writing sums we will omit the sets $\{0,1\}^n$ and $\{0,1\}^m$.
    We consider the \textDEOR-extractor in the setting of strong extractors against product-type quantum side information.
    Let $X_1$, $X_2$ be randomness sources with quantum side information $C=C_{1}C_{2}$ of the form $\rho_{X_1 X_2 C_1 C_2}=\rho_{X_1 C_1}\otimes\rho_{X_2 C_2}$, satisfying $H_{\min}(X_1|C_1)\geq k_1$ and  $H_{\min}(X_2|C_2)\geq k_2$.
    To show that the \textDEOR{}-extractor is $X_1$-strong we consider the state 
    \begin{equation}
        \begin{split}
        \rho_{\DEOR(X_1,X_2) X_1 C} & = ((\DEOR)_{Z|X_1 X_2}\otimes\mathcal{I}_{X_1 C})[\rho_{X_1 X_2 X_1 C}]\\
            & = \sum_{x_1,x_2} [\DEOR(x_1,x_2)]_{Z}\otimes[x_1]\otimes\rho_{C_1 \land x_1}\otimes\rho_{C_2 \land x_2}\\
            & = \sum_{x_1,x_2} [\DEOR(x_1,x_2)]_Z \otimes \rho_{X_1 C \land x_1, x_2},
        \end{split}
    \end{equation}    
    where we introduced the non-normalized state $\rho_{X_1 C \land x_1, x_2} = [x_1]_{X_1} \otimes \rho_{C_1 \land x_1}\otimes\rho_{C_2 \land x_2}$.
%    with the second $X_1$ being an additional register containing a copy of the first $X_1$ register.
    From~\cref{prop:proof_step_1} with $Z=\DEOR(X_1,X_2)$ and $E=X_1 C_1 C_2$, we obtain
    \begin{equation}\label{eq:M_BleA}
        M(z) = \sum_{\substack{x_1,x_2\;\mathrm{s.t.} \\\DEOR(x_1,x_2) = z} } \sigma_{X_1 C}^{-1/4}\rho_{X_1 C \land x_1,x_2}\sigma_{X_1 C}^{-1/4}
    \end{equation}
    and
    \begin{equation}\label{eq:applying_prop_proof_step_1_in_reaching_classical}
        \begin{split}
            \delta&\bigl(\rho_{\DEOR(X_1,X_2) X_1 C}, \omega_{m}\otimes\rho_{X_1 C} \bigr)^2 \\
                & \leq \frac{1}{4} \sum_{s\neq 0^m}\tr{ \sum_{z,z'} (-1)^{s\cdot z + s\cdot z'} M(z)M(z') }\\    
                & = \frac{1}{4} \sum_{s\neq 0^m} \sum_{x_1,x_2,x_1',x_2'} (-1)^{s\cdot \DEOR(x_1,x_2) + s\cdot \DEOR(x_1',x_2')} \\
                &\hspace{3cm}\tr{\mleft(\sigma_{X_1 C}^{-1/4}\rho_{X_1 C \land x_1,x_2}\sigma_{X_1 C}^{-1/4}\mright) \mleft(\sigma_{X_1 C}^{-1/4}\rho_{X_1 C \land x_1',x_2'}\sigma_{X_1 C}^{-1/4}\mright)},
        \end{split}
    \end{equation}
    where in the third line we plugged in~\cref{eq:M_BleA} and used linearity of the trace. 
    Note, that the inequality holds for arbitrary $\sigma_{X_1 C}$ satisfying the assumption of~\cref{prop:proof_step_1}.
    % \jakob{"almost arbitrary"?} \carla{This refers to the ker condition, right?}
    For simplicity, we choose $\sigma_{X_1 C} = \rho_{X_1 C_1} \otimes \sigma_{C_2}$ for some $\sigma_{C_2}$ to be specified later. 
    Since $\rho_{X_1 C_1}$ is a cq-state, we have $\rho_{X_1 C_1}^{-1/4} = \sum_{x_1}[x_1]_{X_1} \otimes(\rho_{C_1 \land x_1})^{-1/4}$. 
    Focusing on the first term inside the trace, we find
    \begin{equation}
        \begin{split}
            \sigma_{X_1 C}^{-1/4}\rho_{X_1 C \land x_1,x_2}\sigma_{X_1 C}^{-1/4} 
            &= \mleft(\rho_{X_1 C_1}^{-1/4}\otimes\sigma_{C_2}^{-1/4} \mright) \mleft([x_1]_{X_1}\otimes\rho_{C_1 \land x_1}\otimes\rho_{C_2 \land x_2} \mright) \mleft(\rho_{X_1 C_1}^{-1/4}\otimes\sigma_{C_2}^{-1/4} \mright) \\
            &= \mleft(\rho_{X_1 C_1}^{-1/4} \mleft([x_1]_{X_1}\otimes\rho_{C_1 \land x_1}\mright) \rho_{X_1 C_1}^{-1/4} \mright) \otimes \mleft( \sigma_{C_2}^{-1/4} \rho_{C_2 \land x_2}\sigma_{C_2}^{-1/4} \mright) \\
            &= [x_1]_{X_1}\otimes \left(\rho_{C_1 \land x_1}\right)^{1/2} \otimes N_{C_2}(x_2).
        \end{split}
    \end{equation}
    In the last line we defined $N_{C_2}(x_2) \coloneqq \sigma_{C_2}^{-1/4} \rho_{C_2 \land x_2}\sigma_{C_2}^{-1/4}$ for compactness.
    With the equation above, the trace in~\cref{eq:applying_prop_proof_step_1_in_reaching_classical} becomes    
    \begin{equation}\label{eq:proof:reaching_classical:simplifying_trace}
        \begin{split}
            &\tr{\mleft(\sigma_{X_1 C}^{-1/4}\rho_{X_1 C \land x_1,x_2}\sigma_{X_1 C}^{-1/4}\mright) \mleft(\sigma_{X_1 C}^{-1/4}\rho_{X_1 C \land x_1',x_2'}\sigma_{X_1 C}^{-1/4}\mright)} \\
            &\qquad= \tr{  \mleft([x_1]_{X_1} [x_1']_{X_1}\mright) \otimes \mleft((\rho_{C_1 \land x_1})^{1/2} (\rho_{C_1 \land x_1'})^{1/2} \mright) \otimes \mleft(N_{C_2}(x_2) N_{C_2}(x_2') \mright)}\\
            &\qquad= |\langle x_1 |x_1' \rangle|^2\; \tr{ (\rho_{C_1 \land x_1})^{1/2} (\rho_{C_1 \land x_1'})^{1/2} } \; \tr{ N_{C_2}(x_2)N_{C_2}(x_2') } \\
            &\qquad= \delta_{x_1,x_1'} P_{X_1}(x_1) \tr{ N_{C_2}(x_2)N_{C_2}(x_2') },
        \end{split}
    \end{equation}
    where we used $\tr{ \rho_{C_1 \land x_1} }=P_{X_1}(x_1)$ in the last equality.
    
    Let us recall from~\cref{sec:Dodis_ext} that $s\cdot (\DEOR(x,y)) = (A_s x)\cdot y$ with the matrix $A_{s} = \sum_{i=1}^m A_i s_i$.
    By definition of the \textDEOR{}-extractor, $A_s$ has rank at least $n-r$.
    We define the new random variable $P_{(A_s \cdot X_1)}(x_1') = \sum_{x_1\;\mathrm{s.t.}\;x_1'=A_s x_1} P_{X_1}(x_1)$.
    By~\cref{prop:min_entropy_matrix_multipl} for trivial side information and using data-processing, we have $H_\mathrm{min}((A_s \cdot X_1)) \geq H_{\mathrm{min}}(X_1) - r \geq H_{\mathrm{min}}(X_1|C_1) - r \geq k_1 - r $. 
    From this we get the bound $P_{(A_s \cdot X_1)}(x_1')\leq 2^{-k_1 +r}$.
    We now return to~\cref{eq:applying_prop_proof_step_1_in_reaching_classical}. 
    Plugging in~\cref{eq:proof:reaching_classical:simplifying_trace}, we get
    \begin{equation}
        \begin{split}
            \delta&\bigl(\rho_{\DEOR(X_1,X_2) X_1 C}, \omega_{m}\otimes\rho_{X_1 C} \bigr)^2 \\
                  &\leq \frac{1}{4} \sum_{s\neq 0^m} \sum_{x_1,x_2,x_2'} (-1)^{(A_s x_1)\cdot x_2 + (A_s x_1)\cdot x_2'} P_{X_1}(x_1) \tr{N_{C_2}(x_2)N_{C_2}(x_2')}\\
                  &= \frac{1}{4} \sum_{s\neq 0^m} \sum_{x_1',x_2,x_2'} (-1)^{x_1'\cdot x_2 + x_1'\cdot x_2'} P_{(A_s\cdot X_1)}(x_1') \tr{N_{C_2}(x_2)N_{C_2}(x_2')}\\
                  &= \frac{1}{4} \sum_{s\neq 0^m} \sum_{x_1'} 2^n P_{(A_s\cdot X_1)}(x_1') \tr{\mathfrak{F}[N](x_1')\mathfrak{F}[N](x_1')}\\
                  &\leq 2^{-k_1 - 2 +n+r} \sum_{s\neq 0^m} \sum_{x_1'} \tr{\mathfrak{F}[N](x_1')^\dagger \mathfrak{F}[N](x_1')},
        \end{split}
    \end{equation}
    where we used linearity of the trace to identify the Fourier transform of $N$.
    The Fourier transform of $N$ is hermitian as each matrix $N(x_2)$ is hermitian. 
    Hence, each summand $\tr{\mathfrak{F}[N](x_1')\mathfrak{F}[N](x_1')}$ is positive, allowing us to apply our upper bound $P_{(A_s X_1)}(x_1') \leq 2^{-k_1+r}$ to obtain the last line.

    Note, that none of the summands in the equation above depend on $s$. Furthermore, we identify the L\textsubscript{2}-norm for matrix-valued functions and use~\cref{lem:Parseval} once again, to get
    \begin{equation*}
        \delta\bigl(\rho_{\DEOR(X_1,X_2) X_1 C} , \omega_{m}\otimes\rho_{X_1 C} \bigr)^2 
        \leq 2^{-k_1 - 2 +n+r + m} \trinorm{\mathfrak{F}[N]}_2^2
        = 2^{-k_1 - 2 +n+r+m} \trinorm{N}_2^2\\
    \end{equation*}
    By definition of the L\textsubscript{2}-norm and due to the form of the collision-entropy for cq-states (\cref{lem:collision_ent_cq}), we observe
    \begin{equation*}
        \trinorm{N}_2^2 =  \sum_{x_2} \tr{\mleft(\sigma_{C_2}^{-1/4} \rho_{C_2 \land x_2}\sigma_{C_2}^{-1/4}\mright)^2 } = 2^{-H_2(\rho_{X_2 C_2} | \sigma_{C_2})}.
    \end{equation*}
    We can choose $\sigma_{C_2}\in\mathcal{S}(\mathcal{H}_{C_2})$ such that the collision-entropy of $\rho_{X_2 C_2}$ relative to $\sigma_{C_2}$ is maximized and thus equal to $H_2(X_2|C_2)$. %\jakob{"We can choose $\sigma_{C_2}\in\mathcal{S}(\mathcal{H}_{C_2})$ such that..."?} 
    Observing that the collision-entropy upper bounds the min-entropy gives the final result
    \begin{equation*}
        \delta\bigl(\rho_{\DEOR(X_1,X_2) X_1 C} , \omega_{m}\otimes\rho_{X_1 C} \bigr)^2 
        \leq 2^{-k_1 -k_2- 2 +n+r+m}.
    \end{equation*}
\end{proof}

\subsection{Extension to the Markov Model}\label{sec:extension_to_Markov_model}
Up to this point we only presented results for two-source extractors against quantum product-type side information.
In this section, we extend our results to the case of quantum side information in the Markov model following the approach from \cite{arnonfriedman_extractors}.
In particular, the following lemma shows that a two-source extractor against quantum product-type side information remains secure against quantum side information in the Markov model with similar parameters. The lemma is analogous to \cite[lemma 6]{arnonfriedman_extractors} with an almost identical proof, although here we also allow for quantum side information.
\begin{lemma}\label{lem:quantum_Ext_to_Markov}
    Any $(k_1, k_2, \varepsilon)$ $X_1$-strong two-source extractor against quantum product-type knowledge is a $(k_1+\log\frac{1}{\varepsilon}, k_2 + \log\frac{1}{\varepsilon}, 3\varepsilon)$ $X_1$-strong two-source extractor against quantum knowledge in the Markov model.
\end{lemma}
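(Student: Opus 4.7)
The plan is to reduce the Markov case to the product-type case using the structural decomposition of~\cref{thm:Markov_rewriting} together with a standard min-entropy splitting argument. Let $\rho_{X_1 X_2 C}$ be a Markov state with $X_1, X_2$ classical and $H_{\min}(X_i | C)_\rho \geq k_i + \log\frac{1}{\varepsilon}$ for $i = 1, 2$. Applying~\cref{thm:Markov_rewriting} gives a classical label $Z$ with distribution $P_Z$ such that $\rho_{X_1 X_2 C} = \bigoplus_{z} P_Z(z)\,\rho^z_{X_1 C_1^z} \otimes \rho^z_{X_2 C_2^z}$. Since this is a direct-sum decomposition of $\hilmap_C$, the register $Z$ can be read out from $C$ by a projective measurement, so I will freely treat $Z$ as an additional classical side-information register.

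Using that $Z$ is classical,~\cref{prop:trace_norm_cq} yields the convex decomposition
\begin{equation*}
    \delta\bigl(\rho_{\Ext(X_1,X_2) X_1 C}, \omega_m \otimes \rho_{X_1 C}\bigr) = \mathop{\mathbb{E}}_{z \leftarrow Z}\Bigl[\delta\bigl(\rho^z_{\Ext(X_1,X_2) X_1 C_1^z C_2^z}, \omega_m \otimes \rho^z_{X_1 C_1^z C_2^z}\bigr)\Bigr].
\end{equation*}
Within each block $z$ the conditional side information factorizes as $\rho^z_{X_1 C_1^z}\otimes\rho^z_{X_2 C_2^z}$, which is precisely the product-type setting, so the hypothesis applies block-wise \emph{provided} the conditional min-entropies on that block are still at least $k_1$ and $k_2$.

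The key estimate controlling these conditional min-entropies is a standard splitting lemma. Since $Z$ is classical and is obtained from $C$, data-processing (\cref{lem:data_processing}) gives $H_{\min}(X_i | C_i^Z Z)_\rho \geq k_i + \log\frac{1}{\varepsilon}$. The optimizer $\sigma_{C_i^Z Z}$ may be chosen block-diagonal in $Z$, say $\sigma_{C_i^Z Z} = \sum_z q^i_z [z]_Z \otimes \sigma^{i,z}$ with $\sum_z q^i_z = 1$; restricting the defining operator inequality to block $z$ then gives $H_{\min}(X_i | C_i^z)_{\rho^z} \geq k_i + \log\frac{1}{\varepsilon} - \log\frac{q^i_z}{P_Z(z)}$. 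A direct application of Markov's inequality to the likelihood ratios $q^i_z/P_Z(z)$ shows that $\Pr_{z\leftarrow Z}[q^i_z/P_Z(z) > 1/\varepsilon] < \varepsilon$, so by a union bound the ``bad'' set $\mathcal{B} \coloneq \{z : H_{\min}(X_1|C_1^z)_{\rho^z} < k_1 \text{ or } H_{\min}(X_2|C_2^z)_{\rho^z} < k_2\}$ satisfies $\Pr[\mathcal{B}] \leq 2\varepsilon$.

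Finally, splitting the expectation over $Z$ along $\mathcal{B}$ and its complement, the product-type hypothesis bounds the conditional trace distance by $\varepsilon$ for $z \notin \mathcal{B}$, while for $z \in \mathcal{B}$ the trivial bound $1$ suffices. This yields the overall estimate $\varepsilon + \Pr[\mathcal{B}] \leq 3\varepsilon$, as claimed. The main technical step I expect to need care with is the block-diagonality of the min-entropy optimizer in the fully quantum-$C$ case; this follows from the observation that $\rho_{X_i C_i^Z Z}$ itself is block-diagonal in $Z$, so projecting any valid $\sigma$ onto its $Z$-diagonal blocks preserves normalization and only tightens the defining operator inequality.
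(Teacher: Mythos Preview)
Your proof is correct and follows essentially the same route as the paper's: decompose the Markov state via~\cref{thm:Markov_rewriting}, reduce to an average over the classical label $Z$, control the block-wise min-entropies by a Markov-inequality argument, and finish with a union bound plus the trivial bound on the bad set. The only cosmetic difference is that the paper invokes the identity $2^{-H_{\min}(X_i|C)}=\mathbb{E}_{z\leftarrow Z}\bigl[2^{-H_{\min}(X_i|C_i)_{\tilde\rho_{|z}}}\bigr]$ (citing \cite[equation~6.25]{Tomamichel_2016}) and applies Markov's inequality directly to that, whereas you rederive the same bound by block-diagonalizing the min-entropy optimizer and applying Markov's inequality to the likelihood ratios $q^i_z/P_Z(z)$; these are equivalent formulations of the same splitting lemma.
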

\begin{proof}
    Let $\Ext$ be a $(k_1,k_2,\varepsilon)$ $X_1$-strong two-source extractor against quantum product-type knowledge and let $\rho_{X_1 X_2 C}$ be a Markov chain such that $H_{\mathrm{min}}(X_1|C) \geq k_1 + \log\frac{1}{\varepsilon}$ and $H_{\mathrm{min}}(X_2|C) \geq k_2 + \log\frac{1}{\varepsilon}$.
    \Cref{thm:Markov_rewriting} allows us to write the state as
    \begin{equation}
        \rho_{A_1 A_2 C} = \mathop{\bigoplus}_{z\in\mathcal{Z}}P_Z(z) \rho^z_{A_1 C_1^z} \otimes \rho^z_{A_2 C_2^z},
    \end{equation}
    which is isometric to
    \begin{equation}\label{eq:Markov_rewriting}
        \tilde{\rho}_{A_1 A_2 C_1 C_2 Z} = \sum_{z} P_Z(z) \tilde{\rho}_{A_1 C_1|z} \otimes \tilde{\rho}_{A_2 C_2|z} \otimes\ketbra{z}{z}_Z.
    \end{equation}
    through embedding the space $\bigoplus_z\hilmap_{C_1^z}\otimes\hilmap_{C_2^z}$ into a larger space.

\begin{comment}
    We recall from~\cref{eq:Markov_rewriting} that $\rho_{X_1 X_2 C}$ can be replaced by the equivalent state 
    \begin{equation}
        \tilde{\rho}_{X_1 X_2 C_1 C_2 Z} = \sum_z P_Z(z) \tilde{\rho}_{X_1 X_2 C_1 C_2|z} \otimes [z]_Z,
    \end{equation}
    where $\tilde{\rho}_{X_1 X_2 C_1 C_2|z} = \tilde{\rho}_{X_1 C_1 | z}\otimes\tilde{\rho}_{X_2 C_2 | z}$.
\end{comment}
    
    Then we have
    \begin{equation}
        2^{-H_{\mathrm{min}}(X_i|C)_\rho} 
        = 2^{-H_{\mathrm{min}}(X_i|C_1 C_2 Z)_{\tilde{\rho}}} 
        = \mathop{\mathbb{E}}_{z\leftarrow Z}\mleft[ 2^{-H_{\mathrm{min}}(X_i|C_1 C_2)_{\tilde{\rho}_{|z}}} \mright] 
        = \mathop{\mathbb{E}}_{z\leftarrow Z}\mleft[ 2^{-H_{\mathrm{min}}(X_i|C_i)_{\tilde{\rho}_{|z}}} \mright],
    \end{equation}
    where we used \cite[equation 6.25]{Tomamichel_2016} in the second equality.
    By Markov's inequality we get
    \begin{equation}
                \mathop{\mathrm{Pr}}_{z\leftarrow Z}\mleft[ H_{\mathrm{min}}(X_i|C_i)_{\tilde{\rho}_{|z}} < k_1 \mright] \leq \mathop{\mathrm{Pr}}_{z\leftarrow Z}\mleft[ H_{\mathrm{min}}(X_i|C_i)_{\tilde{\rho}_{|z}} + \log\frac{1}{\varepsilon} \leq H_{\mathrm{min}}(X_i|C)_\rho \mright]
        %= \mathop{\mathrm{Pr}}_{z\leftarrow Z}\mleft[ 2^{-H_{\mathrm{min}(X_i|C_i)_{\tilde{\rho}^z}}}\geq 2^{-H_{\mathrm{min}}(X_i|C)_\rho} /\varepsilon \mright] 
        \leq \varepsilon,
    \end{equation}
    which, by the union bound, implies that both $H_{\mathrm{min}}(X_1|C_1)_{\tilde{\rho}_{|z}} \geq k_1$ and $H_{\mathrm{min}}(X_2|C_2)_{\tilde{\rho}_{|z}} \geq k_2$ are fulfilled with probability at least $1-2\varepsilon$.
    We find
    \begin{equation}
    \begin{aligned}
        \delta\big(\rho_{\Ext(X_1,X_2)X_1 C}, \omega_{m}\otimes\rho_{X_1 C}\big) =& \delta\big(\tilde{\rho}_{\Ext(X_1,X_2)X_1 C_1 C_2 Z}, \omega_{m}\otimes\tilde{\rho}_{X_1 C_1 C_2 Z}\big) \\
        =& \mathop{\mathbb{E}}_{z\leftarrow Z}\mleft[ \delta\big(\tilde{\rho}_{\Ext(X_1,X_2)X_1 C_1 C_2|z}, \omega_{m}\otimes\tilde{\rho}_{X_1 C_1|z}\otimes\tilde{\rho}_{C_2|z}\big) \mright] \\
        \leq& 3 \varepsilon,
    \end{aligned}
    \end{equation}
    where we used~\cref{prop:trace_norm_cq} in line two.
    For the last line we bounded the $\delta\big(\tilde{\rho}_{\Ext(X_1,X_2) X_1C_1C_2|z}, \omega_{m}\otimes\tilde{\rho}_{X_1C_1|z}\otimes\tilde{\rho}_{C_2|z}\big)$ terms by 1 if $H_{\mathrm{min}}(X_1|C_1)_{\tilde{\rho}_{|z}} < k_1$ or $H_{\mathrm{min}}(X_2|C_2)_{\tilde{\rho}_{|z}} < k_2$, which happens with probability $2\varepsilon$ at most, and used the fact that $\Ext$ is a $X_1$-strong two-source extractor against product-type quantum knowledge else.
\end{proof}

Let us apply this result to our first main result,~\cref{thm:general_ext_result}, that we found for generic two-source extractors via our modular approach.
\begin{corollary}\label{cor:generic_ext_Markov_model}
    Any $(k_1, k_2, \varepsilon)$ $X_1$-strong two-source extractor, that produces an $m$-bit output, is a $(k_1+\frac{1}{2}\log\frac{1}{2^m\varepsilon}, k_2 + \frac{1}{2}\log\frac{1}{2^m\varepsilon}+\log\frac{1}{\varepsilon}, 3\sqrt{2^{m}\varepsilon})$ $X_1$-strong two-source extractor against quantum knowledge in the Markov model.
\end{corollary}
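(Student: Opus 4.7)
The plan is to obtain \Cref{cor:generic_ext_Markov_model} by chaining three results already established in the excerpt, each of which performs one of the three ``upgrades'' needed: adding classical product-type side information, passing from classical to quantum product-type side information, and then passing from product-type to the Markov model. Starting data: a $(k_1,k_2,\varepsilon)$ $X_1$-strong two-source extractor $\Ext$ in the sense of \Cref{def:ext}.

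First I would apply \Cref{prop:ext_to_ext_against_classical_product_type} to conclude that $\Ext$ is a $(k_1,\, k_2+\log\tfrac{1}{\varepsilon},\, 2\varepsilon)$ $X_1$-strong two-source extractor against classical product-type knowledge. Next I would invoke \Cref{thm:general_ext_result}, which upgrades such an extractor with error $\varepsilon'$ to an $X_1$-strong extractor against \emph{quantum} product-type knowledge with the same min-entropy thresholds and error $\sqrt{2^{m-1}\varepsilon'}$. Setting $\varepsilon' = 2\varepsilon$ gives error $\sqrt{2^{m-1}\cdot 2\varepsilon} = \sqrt{2^{m}\varepsilon}$, so $\Ext$ is a $(k_1,\, k_2+\log\tfrac{1}{\varepsilon},\, \sqrt{2^{m}\varepsilon})$ $X_1$-strong extractor against quantum product-type knowledge.

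Finally I would apply \Cref{lem:quantum_Ext_to_Markov} with $\varepsilon'' \coloneqq \sqrt{2^{m}\varepsilon}$, which converts quantum product-type security into quantum Markov-model security at the price of adding $\log\tfrac{1}{\varepsilon''} = \tfrac{1}{2}\log\tfrac{1}{2^{m}\varepsilon}$ to both min-entropy thresholds and multiplying the error by $3$. Combining the three parameter changes yields exactly $\bigl(k_1 + \tfrac{1}{2}\log\tfrac{1}{2^{m}\varepsilon},\, k_2 + \log\tfrac{1}{\varepsilon} + \tfrac{1}{2}\log\tfrac{1}{2^{m}\varepsilon},\, 3\sqrt{2^{m}\varepsilon}\bigr)$, as claimed.

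There is no real conceptual obstacle here; the argument is entirely a bookkeeping exercise on how the three parameter transformations compose. The only thing to be careful about is consistency of the min-entropy thresholds along the chain: each of \Cref{prop:ext_to_ext_against_classical_product_type} and \Cref{lem:quantum_Ext_to_Markov} raises the required min-entropies, whereas \Cref{thm:general_ext_result} leaves them untouched, so the final bounds are obtained simply by summing the two entropy penalties, and the error transformations compose as $\varepsilon \mapsto 2\varepsilon \mapsto \sqrt{2^{m}\varepsilon} \mapsto 3\sqrt{2^{m}\varepsilon}$.
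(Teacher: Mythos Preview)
Your proposal is correct and follows essentially the same route as the paper: the paper applies \Cref{cor:no_side_info_to_quantum_product} followed by \Cref{lem:quantum_Ext_to_Markov}, and your three-step chain simply unfolds \Cref{cor:no_side_info_to_quantum_product} into its two constituents (\Cref{prop:ext_to_ext_against_classical_product_type} and \Cref{thm:general_ext_result}), arriving at identical parameters.
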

\begin{proof}
    Consider a $(k_1,k_2,\varepsilon)$ $X_1$-strong two-source extractor $\Ext:\{0,1\}^{n_1}\times\{0,1\}^{n_2}\rightarrow \{0,1\}^{m}$.
    We know from~\cref{cor:no_side_info_to_quantum_product} that $\Ext$ is a $(k_1',k_2',\varepsilon)$ $X_1$-strong two-source extractor against quantum side information in the product model, where $k_1'=k_1$, $k_2'=k_2+\log\frac{1}{\varepsilon}$ and $\varepsilon'=\sqrt{2^m\varepsilon}$.
    Then, according to~\cref{lem:quantum_Ext_to_Markov} $\Ext$ is a $(k_1'+\log\frac{1}{\varepsilon'}, k_2' + \log\frac{1}{\varepsilon'}, 3\varepsilon')$ $X_1$-strong two-source extractor against quantum knowledge in the Markov model.
    Plugging in the expressions for $k_1',k_2'$ and $\varepsilon'$ gives the desired statement. 
\end{proof}

Let us compare~\cref{cor:generic_ext_Markov_model} to \cite[lemma 9]{arnonfriedman_extractors} (also given in~\cref{prop:ext_to_ext_against_quantum_Markov}).
Note that, in both lemmas, additional terms have to be added to $k_1$ and $k_2$, in order to obtain a two-source extractor against quantum side information in the Markov model. 
In~\cref{prop:ext_to_ext_against_quantum_Markov} the sum of these additional terms is $2\log\frac{1}{\varepsilon}$, whereas in~\cref{cor:generic_ext_Markov_model} the sum of these additional terms is $\log\frac{1}{\varepsilon} + \log\frac{1}{2^m\varepsilon} = 2\log\frac{1}{\varepsilon}-m$.
Hence, the construction in~\cref{cor:generic_ext_Markov_model} is slightly stronger than the equivalent statement in \cite{arnonfriedman_extractors}.
%In the two lemmas $\varepsilon$ is, however, not referring to the same quantity. 
%On the one hand, $\varepsilon$ describes the error of the extractor when considering quantum product-type side information in Lemma \ref{lem:quantum_Ext_to_Markov}.
%For this reason, we denote it $\varepsilon_{q}$ in the following. 
%In \cite{arnonfriedman_extractors} on the other hand, $\varepsilon$ refers to the error of the extractor when no side information is considered. 
%Therefore, we refer to it as $\varepsilon_{no}$.
%Typically, $\varepsilon_q$ is exponentially larger than $\varepsilon_{no}$.
%Hence, $\log1/\varepsilon_q$ will be significantly smaller than $\log1/\varepsilon_{no}$.
%Correspondingly, the $\log1/\varepsilon_q$-term in Lemma \ref{lem:quantum_Ext_to_Markov} results in a much smaller reduction in the extractor's performance than the reduction caused by the $\log1/\varepsilon_{no}$-term in \cite{arnonfriedman_extractors}.
Finally, let us apply~\cref{lem:quantum_Ext_to_Markov} to our second main result, \cref{thm:BleA_1m_result}, that we obtained for the case of the \textDEOR{}-extractor via our non-modular approach.

\begin{corollary}\label{cor:BleA_against_Markov}
    The \textDEOR{}-extractor is a $(k_1,k_2,\varepsilon)$ $X_1$-strong two-source extractor against quantum knowledge in the Markov model, where
    \begin{equation*}
        \varepsilon = 3 \cdot 2^{-\frac{k_1+k_2+2-n-r-m}{4}}.
    \end{equation*}
\end{corollary}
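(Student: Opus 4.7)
The plan is to chain the two available results. Theorem \ref{thm:BleA_1m_result} guarantees that \textDEOR{} is a $(k_1', k_2', \varepsilon')$ $X_1$-strong two-source extractor against quantum product-type knowledge with $\varepsilon' = 2^{-(k_1'+k_2'+2-n-r-m)/2}$, and Lemma \ref{lem:quantum_Ext_to_Markov} upgrades any such statement into a $(k_1'+\log\frac{1}{\varepsilon'}, k_2'+\log\frac{1}{\varepsilon'}, 3\varepsilon')$ $X_1$-strong extractor against quantum knowledge in the Markov model. Composing the two immediately yields an extractor of the desired type, so the only task left is to relate the primed parameters to the unprimed parameters appearing in the statement.

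To carry this out, I would set $k_i = k_i' + \log\frac{1}{\varepsilon'}$ for $i=1,2$ and $\varepsilon = 3\varepsilon'$, then solve for $\log\frac{1}{\varepsilon'}$. Summing the two min-entropy relations gives $k_1+k_2 = (k_1'+k_2') + 2\log\frac{1}{\varepsilon'}$, and inserting the explicit expression $\log\frac{1}{\varepsilon'} = (k_1'+k_2'+2-n-r-m)/2$ turns this into a linear equation for $k_1'+k_2'$ whose solution is $k_1'+k_2' = (k_1+k_2+n+r+m-2)/2$. Substituting back produces $\log\frac{1}{\varepsilon'} = (k_1+k_2+2-n-r-m)/4$, so $\varepsilon = 3\varepsilon' = 3 \cdot 2^{-(k_1+k_2+2-n-r-m)/4}$ as claimed.

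There is no genuine obstacle here: the proof is a one-line composition followed by routine algebra. The only subtlety worth noting is that the error in Theorem \ref{thm:BleA_1m_result} depends on $k_1'+k_2'$ symmetrically rather than on each $k_i'$ separately, which is precisely what makes the inversion well-posed without having to fix the individual source entropies. The overall structure mirrors that of the classical Corollary \ref{cor:BleA_against_classical_knowledge}, differing only in the factor appearing in the exponent, which is inherited from the quantum product-type-to-Markov lift.
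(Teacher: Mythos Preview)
Your proposal is correct and follows exactly the same approach as the paper: compose \cref{thm:BleA_1m_result} with \cref{lem:quantum_Ext_to_Markov}, set $k_i = k_i' + \log\frac{1}{\varepsilon'}$ and $\varepsilon = 3\varepsilon'$, and solve. The paper's proof is in fact terser than yours, leaving the algebraic inversion implicit.
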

\begin{proof}
    From~\cref{lem:quantum_Ext_to_Markov} and~\cref{thm:BleA_1m_result} we know that the \textDEOR{}-extractor is a $(k_1'+\log\frac{1}{\varepsilon'},k_2'+\log\frac{1}{\varepsilon'},3\varepsilon')$ $X_1$-strong two-source extractor against quantum knowledge in the Markov model, where $\varepsilon' = 2^{-\frac{k_1'+k_2'+2-n-r-m}{2}}$.
    Setting $k_1 = k_1' + \log\frac{1}{\varepsilon'}$,  $k_2 = k_2' + \log\frac{1}{\varepsilon'}$ and $\varepsilon=3\varepsilon'$ we find
    \begin{equation*}
        \varepsilon
        %= 2^{-\frac{k_1'+k_2'-n-r-m}{2}} 
        = 3 \cdot 2^{-\frac{k_1+k_2+2-n-r-m}{4}}
    \end{equation*}
    as desired.
\end{proof}
Comparing this result to~\cref{thm:BleA_1m_result} shows that, for $k_1, k_2, \varepsilon$ fixed, the possible output length $m$ in the Markov model decreases only logarithmically in $\varepsilon$ when compared to quantum side information of product form.
\section{Conclusion}
\label{chapter:conclusion}

In this work we have shown that the \textDEOR{}-extractor performs significantly better in the presence of both quantum product-type side information and quantum side information in the Markov model than previously known.
In particular, we provide two separate proofs that yield improvements over the results from Arnon-Friedman et al.~\cite{arnonfriedman_extractors} for the specific case of the \textDEOR{}-extractor.  

For the first approach, that we term modular, we introduce a new XOR-Lemma which we call the measured XOR-Lemma. 
It allows us to reduce the security of a two-source extractor against quantum side information to the security of a single-bit output two-source extractor against classical side information.
Here the classical side information is obtained from the quantum side information by application of a pretty good measurement.
\Cref{thm:general_ext_result} results from applying our modular approach to study arbitrary two-source exractors against quantum product-type side information.
In this case, the extractor's error displays the same $\sqrt{2^m \varepsilon}$ dependence that was already established in \cite{arnonfriedman_extractors}.
In addition, applying this modular approach to the \textDEOR{}-extractor allows us to relate the security of the \textDEOR{}-extractor to the security of the inner product extractor, yielding better results than directly applying the generic reductions from \cref{thm:general_ext_result} or \cite[corollary 16]{arnonfriedman_extractors}.

The second approach, which we call non-modular, is based on the Fourier transformation for matrix-valued functions. 
Against quantum product-type side information this approach gives \cref{thm:BleA_1m_result} which increases the \textDEOR{}-extractor's output length by a factor of 5 over \cite[corollary 21]{arnonfriedman_extractors}\footnote{Their results are stated in the Markov model which, of course, also apply to product-type side information.}.
Notably, \cref{thm:BleA_1m_result} implies that the \textDEOR{}-extractor displays the same parameters in the presence of quantum product-type side information as in the presence of no side information at all.
Furthermore, in~\cref{cor:BleA_against_Markov} we extend our results to the Markov model.
We remark that this result implies that the performance of the \textDEOR{}-extractor against side information in the Markov model is the same irrespective of whether the side information is quantum or classical. 

From our discussion above, two open questions arise. The first concerns arbitrary two-source extractors against quantum product-type side information and has already been posed in \cite{arnonfriedman_extractors,Berta_2017}. As discussed before, \cref{thm:general_ext_result} is, up to a constant factor, identical to the best known result \cite[corollary 16]{arnonfriedman_extractors}. This begs the question whether one can hope for future improvements when considering the security of arbitrary extractors against quantum side information. More precisely, it is unknown whether there exist specific examples of extractors which saturate the exponentially growing error term $\sqrt{2^m \varepsilon}$ in~\cref{thm:general_ext_result} (as noted in \cite{Berta_2017} the fastest growing error known so far is $m \varepsilon$ \cite{Gavinsky_2007}). Finding such examples, or showing that they don't exist, presents an interesting direction for future research.

The second direction of research is related to our measured XOR-Lemma and the performance improvement it yields for the \textDEOR{}-extractor. 
Due to the modularity of our proof-technique, it can easily be adapted to study the security of other extractors besides the \textDEOR{}-extractor.
We are optimistic that this fact will improve the performance of a large class of extractors.
Because of the nature of the measured XOR-Lemma, extractors $\Ext$ that give good single-bit output extractors $s \cdot \Ext$ are of particular interest.
Finding explicit constructions of extractors belonging to this class promises to be a fruitful topic of future research. One example of particular interest is the extractor from \cite{Raz_2005} which can extract randomness even when one of the two sources has only logarithmic entropy.

\bigskip

\paragraph{Acknowledgments.}
We thank Renato Renner, Ramona Wolf, and Cameron Foreman for valuable comments and discussion. We thank Ramona Wolf and Renato Renner for help in supervising the project. This
work was supported by the NSF project No.~20QU-1\_225171 and the CHIST-ERA project ``Modern Device Independent Cryptography"
MS and CF acknowledge support from NCCR SwissMAP and the ETH Zurich Quantum Center.

\begin{comment}
    \carla{Cameron, add people...} \martin{Add funding}
    I want to thank Martin Sandfuchs, Carla Ferradini, Prof. Dr. Ramona Wolf and Prof. Dr. Renato Renner for their great ideas, time and effort that they have put into this manuscript.
    The many discussions and meetings with them were essential to the project's success.
    Additionally I want to thank them for inviting me to visit the quantum key distribution summer school in Les Diablerets and the amazing time that I spent there.
\end{comment}

\bibliographystyle{halpha}
%\bibliography{references}

\newcommand{\etalchar}[1]{$^{#1}$}

\appendix
\section{Proofs for two-source extractors against classical side information}
\label{app:A}
\subsection{Proof of proposition \ref{prop:ext_to_ext_against_classical_product_type}}
\label{app:proof:ext_to_ext_against_classical_product_type}

\extagainstcpt*

\begin{proof}
    Let $\Ext$ be a $(k_1,k_2,\varepsilon)$ $X_1$-strong two-source extractor.
    Let $X_1$ and $X_2$ be two sources of randomness with classical side information of the form $\rho_{X_1 Z_1}\otimes \rho_{X_2 Z_2}$ such that $H_{\min}(X_1|Z_1)\geq k_1$ and $H_{\min}(X_2|Z_2)\geq k_2 + \log\frac{1}{\varepsilon}$. 
    By definition of the conditional min-entropy we find that 
    \begin{equation*}
        2^{-H_{\min}(X_2|Z_2)} = \sum_{z_2} P_{Z_2}(z_2) \max_{x_2} P_{X_2|Z_2=z_2}(x_2) = \mathop{\mathbb{E}}_{Z_2 \leftarrow z_2}\mleft[2^{-H_{\min}(X_2|{Z_2=z_2})}\mright],
    \end{equation*}
    where $H_{\min}(X_2|{Z_2=z_2})$ stands for the min-entropy of the probability distribution $P_{X_2|Z_2=z_2}(x_2)$.
    We find
    \begin{align*}
        \begin{split}
            \mathop{\mathrm{Pr}}_{z_2 \leftarrow Z_2}\mleft[ H_{\min}(X_2|{Z_2=z_2}) < k_2 \mright] 
            &\leq \mathop{\mathrm{Pr}}_{z_2\leftarrow Z_2}\mleft[ H_{\min}(X_2|{Z_2=z_2})\leq H_{min}(X_2|Z_2) - \log\frac{1}{\varepsilon} \mright] \\
            &= \mathop{\mathrm{Pr}}_{z_2\leftarrow Z_2}\mleft[ 2^{-H_{\min}(X_2|{Z_2=z_2})}\geq \frac{1}{\varepsilon} 2^{-H_{min}(X_2|Z_2)}\mright] \\
            &\leq \mathop{\mathbb{E}}_{z_2 \leftarrow Z_2}\mleft[2^{-H_{\min}(X_2|{Z_2=z_2})}\mright]  2^{H_{min}(X_2|Z_2)} \varepsilon \\
            &= \varepsilon,
        \end{split}
    \end{align*}
    where we used Markov's inequality in the last inequality.
    This means that the probability distribution $P_{X_2|Z_2=z_2}(x_2)$ can be interpreted as a source of randomness that has min-entropy larger than $k_2$ with probability greater than $1-\varepsilon$.
    Note, that $H_{\min}(X_1)\geq H_{\min}(X_1|Z_1)\geq k_1$ due to data-processing of the min-entropy as given in~\cref{lem:data_processing}. Let us now consider
    \begin{equation*}
    \begin{aligned}
        \delta\big(\rho_{\Ext(X_1,X_2)X_1 Z_1 Z_2}, \omega_{m}\otimes\rho_{X_1 Z_1 Z_2}\big) 
        &= \delta\big(\rho_{\Ext(x_1,X_2) X_1 Z_2}, \omega_{m}\otimes\rho_{X_1 Z_2}\big) \\
        &= \mathop{\mathbb{E}}_{z_2\leftarrow Z_2}\mleft[ \delta\big(\rho_{\Ext(X_1,X_2)X_1|z_2}, \omega_{m}\otimes\rho_{X_1}\big) \mright]\\
        & \leq 2\varepsilon,
    \end{aligned}
    \end{equation*}
    where we used~\cref{eq:strong_ext_condition} and that the sources $X_1$, $X_2|_{Z_2=z_2}$ have min-entropy of at least $k_1$, $k_2$ with probability greater than $1-\varepsilon$. For the last line we used that $\Ext$ is a $(k_1,k_2,\varepsilon)$ $X_1$-strong two-source extractor.
\end{proof}

\section{Weak extractors} \label{sec:weak_ext_proof}

\begin{corollary}\label{cor:measured_XOR_weak_ext}
    Let $\rho_{X_1 X_2 C_1 C_2} = \rho_{X_1 C_1}\otimes\rho_{X_2 C_2}$, where $\rho_{X_1 C_1}$ and $\rho_{X_2 C_2}$ are cq-states. For any function $\Ext: \{0,1\}^{n_1}\times\{0,1\}^{n_2}\mapsto\{0,1\}^m$, we have
    \begin{equation*}
        \begin{split}
            \delta\big(&\rho_{\Ext(X_1,X_2) C_1 C_2}, \omega_{m} \otimes \rho_{C_1} \otimes \rho_{C_2}\big)\\
            &\leq \sqrt{ \frac{1}{2} \sum_{0\neq s \in \{0,1\}^m} \delta\left( \PGMchannelMartin{\rho_{X_1 C_1}}{X'_1|C_1} \otimes \PGMchannelMartin{\rho_{X_2 C_2}}{X'_{2}|C_2}[\rho_{(s \cdot \Ext(X_1,X_2)) C_1 C_2}], \omega_{1} \otimes \PGMchannelMartin{\rho_{X_1 C_1}}{X_1'|C_1}[\rho_{C_1}] \otimes \PGMchannelMartin{\rho_{X_2 C_2}}{X'_2|C_2}[\rho_{C_2}] \right) }. \\
        \end{split}
    \end{equation*}
\end{corollary}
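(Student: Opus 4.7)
The plan is to mimic the proof of \cref{cor:measured_XOR_strong_ext}, adapted to the weak setting in which the source $X_{1}$ is not retained in the output. Write $Z := \Ext(X_{1}, X_{2})$ and $\Lambda_{s} := \PGMchannelMartin{\rho_{(s\cdot Z) C_{1}C_{2}}}{\TLS|C_{1}C_{2}}$ for brevity. First, I would apply the measured XOR-Lemma (\cref{lem:measured_XOR_lemma}) directly with the choice $E = C_{1}C_{2}$. This produces the bound
\begin{equation*}
\delta\bigl(\rho_{Z C_{1}C_{2}},\, \omega_{m}\otimes \rho_{C_{1}C_{2}}\bigr) \leq \sqrt{\tfrac{1}{2}\sum_{0\neq s\in\{0,1\}^{m}} \delta\bigl( \Lambda_{s}[\rho_{(s\cdot Z) C_{1}C_{2}}],\, \omega_{1}\otimes \Lambda_{s}[\rho_{C_{1}C_{2}}] \bigr)}.
\end{equation*}
Unlike in the strong version, the product structure $\rho_{X_{1}X_{2}C_{1}C_{2}}=\rho_{X_{1}C_{1}}\otimes\rho_{X_{2}C_{2}}$ removes the need to separate out an average over $x_{1}$ and invoke Jensen's inequality.

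The remaining task is to replace each PGM $\Lambda_{s}$, which is tuned to the joint state $\rho_{(s\cdot Z) C_{1}C_{2}}$, by the product PGM $\mathcal{L} := \PGMchannelMartin{\rho_{X_{1}C_{1}}}{X'_{1}|C_{1}}\otimes \PGMchannelMartin{\rho_{X_{2}C_{2}}}{X'_{2}|C_{2}}$ appearing in the statement. I would do this in two short steps. First, viewing $s\cdot \Ext$ as a classical function of the pair $(X_{1},X_{2})$ and applying \cref{prop:PGM_f_commute} gives
\begin{equation*}
\Lambda_{s} = (s\cdot \Ext)_{\TLS|X'_{1}X'_{2}}\circ \PGMchannelMartin{\rho_{X_{1}X_{2}C_{1}C_{2}}}{X'_{1}X'_{2}|C_{1}C_{2}}.
\end{equation*}
Second, I would verify that the PGM of a product cq-state factorizes into the product of the individual PGMs; this is immediate from the definition together with $(\rho_{C_{1}}\otimes\rho_{C_{2}})^{-1/2}=\rho_{C_{1}}^{-1/2}\otimes\rho_{C_{2}}^{-1/2}$ and $\rho_{C_{1}C_{2}\land (x_{1},x_{2})}=\rho_{C_{1}\land x_{1}}\otimes\rho_{C_{2}\land x_{2}}$. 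Combining both observations yields $\Lambda_{s} = (s\cdot \Ext)_{\TLS|X'_{1}X'_{2}}\circ \mathcal{L}$.

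Finally, I would invoke the data-processing inequality for the trace distance (\cref{lem:trace_norm_data_processing}) to drop the post-composed classical map $(s\cdot\Ext)_{\TLS|X'_{1}X'_{2}}$. This map acts only on the $X'_{1}X'_{2}$ registers and therefore leaves the $\omega_{1}$ factor on the extractor output register untouched, so we obtain
\begin{equation*}
\delta\bigl( \Lambda_{s}[\rho_{(s\cdot Z) C_{1}C_{2}}],\, \omega_{1}\otimes \Lambda_{s}[\rho_{C_{1}C_{2}}] \bigr) \leq \delta\bigl( \mathcal{L}[\rho_{(s\cdot Z) C_{1}C_{2}}],\, \omega_{1}\otimes \mathcal{L}[\rho_{C_{1}C_{2}}] \bigr).
\end{equation*}
Using $\mathcal{L}[\rho_{C_{1}C_{2}}] = \PGMchannelMartin{\rho_{X_{1}C_{1}}}{X'_{1}|C_{1}}[\rho_{C_{1}}]\otimes \PGMchannelMartin{\rho_{X_{2}C_{2}}}{X'_{2}|C_{2}}[\rho_{C_{2}}]$ on the right-hand side and plugging back into the first displayed bound delivers exactly the statement of the corollary.

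There is no real obstacle: every ingredient has already been established in the paper, and the only non-automatic observation is that the product structure of the side information lets the joint PGM of the measured XOR-Lemma factor through the product PGM via the function $s\cdot \Ext$, so that a single data-processing step suffices to convert the measured XOR-Lemma's joint measurement into the product measurement appearing in the corollary.
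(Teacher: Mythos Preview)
Your proposal is correct and follows essentially the same approach as the paper: apply the measured XOR-Lemma with $E=C_{1}C_{2}$, use \cref{prop:PGM_f_commute} to pull the function $s\cdot\Ext$ through the PGM, observe that the PGM of the product state $\rho_{X_{1}X_{2}C_{1}C_{2}}$ factorizes, and finish with data-processing. The only cosmetic difference is that the paper applies data-processing before writing out the factorization of the PGM, whereas you factorize first and then apply data-processing; either order works.
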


\begin{proof}
We begin by applying the measured XOR-Lemma (\cref{lem:measured_XOR_lemma}) with $Z = \mathrm{Ext}(X_1, X_2)$ and $E = C_1 C_2$ which yields
\begin{equation*}
\begin{aligned}
    \delta\big(&\rho_{\Ext(X_1,X_2) C_1 C_2}, \omega_{m} \otimes \rho_{C_1} \otimes \rho_{C_2}\big)^2 \\
    &\leq \frac{1}{2} \sum_{s \neq 0} \delta \left( \PGMchannelMartin{\rho_{(s \cdot \mathrm{Ext}(X_1, X_2)) C_1 C_2}}{I|C_1 C_2}[\rho_{(s \cdot \mathrm{Ext}(X_1, X_2)) C_1 C_2}] , \omega_{1} \otimes  \PGMchannelMartin{\rho_{(s \cdot \mathrm{Ext}(X_1, X_2)) C_1 C_2}}{I|C_1 C_2}[\rho_{C_1 C_2}] \right).
\end{aligned}
\end{equation*}
For ease of notation, let us define $f^{s}(x_1, x_2) \coloneqq s \cdot \mathrm{Ext}(x_1, x_2)$. With this, we get
\begin{equation}
\begin{aligned} \label{eq:weak_ext_1}
    \delta\big(&\rho_{\Ext(X_1,X_2) C_1 C_2}, \omega_{m} \otimes \rho_{C_1} \otimes \rho_{C_2}\big)^2 \\
    &\leq \frac{1}{2} \sum_{s \neq 0} \delta \left( \PGMchannelMartin{\rho_{f^{s}(X_1, X_2) C_1 C_2}}{I|C_1 C_2}[\rho_{(s \cdot \mathrm{Ext}(X_1, X_2)) C_1 C_2}] , \omega_{1} \otimes  \PGMchannelMartin{\rho_{f^{s}(X_1, X_2) C_1 C_2}}{I|C_1 C_2}[\rho_{C_1 C_2}] \right) \\
    &= \frac{1}{2} \sum_{s \neq 0} \delta \left( f^{s}_{I|X'_1 X'_2} \circ \PGMchannelMartin{\rho_{X_1 X_2 C_1 C_2}}{X'_1 X'_2|C_1 C_2}[\rho_{(s \cdot \mathrm{Ext}(X_1, X_2)) C_1 C_2}] , \omega_{1} \otimes  f^{s}_{I|X'_1 X'_2} \circ \PGMchannelMartin{\rho_{X_1 X_2 C_1 C_2}}{X'_1 X'_2|C_1 C_2}[\rho_{C_1 C_2}] \right),
\end{aligned}
\end{equation}
where we applied~\cref{prop:PGM_f_commute} for the equality. Using the data-processing inequality given in ~\cref{lem:trace_norm_data_processing}, we obtain
\begin{equation}
\begin{aligned} \label{eq:weak_ext_2}
    & \delta \left( f^{s}_{I|X'_1 X'_2} \circ \PGMchannelMartin{\rho_{X_1 X_2 C_1 C_2}}{X'_1 X'_2|C_1 C_2}[\rho_{(s \cdot \mathrm{Ext}(X_1, X_2)) C_1 C_2}] , \omega_{1} \otimes  f^{s}_{I|X'_1 X'_2} \circ \PGMchannelMartin{\rho_{X_1 X_2 C_1 C_2}}{X'_1 X'_2|C_1 C_2}[\rho_{C_1 C_2}] \right) \\
    &\quad \leq \delta \left( \PGMchannelMartin{\rho_{X_1 X_2 C_1 C_2}}{X'_1 X'_2|C_1 C_2}[\rho_{(s \cdot \mathrm{Ext}(X_1, X_2)) C_1 C_2}] , \omega_{1} \otimes \PGMchannelMartin{\rho_{X_1 X_2 C_1 C_2}}{X'_1 X'_2|C_1 C_2}[\rho_{C_1 C_2}] \right).
\end{aligned}
\end{equation}
From~\cref{def:PGM}, one can easily see that since $\rho_{X_1 X_2 C_1 C_2}$ is a product state, the PGM factorizes:
\begin{equation} \label{eq:pgm_factorizes}
    \PGMchannelMartin{\rho_{X_1 X_2 C_1 C_2}}{X'_1 X'_2|C_1 C_2} = \PGMchannelMartin{\rho_{X_1 C_1}}{X'_1|C_1} \otimes \PGMchannelMartin{\rho_{X_2 C_2}}{X'_2|C_2}.
\end{equation}
Combining~\cref{eq:pgm_factorizes,eq:weak_ext_1,eq:weak_ext_2} then yields the desired result.
\end{proof}

\begin{lemma}
    Any $(k_1, k_2, \varepsilon)$ weak two-source extractor against classical product-type knowledge is a $(k_1, k_2,\allowbreak \sqrt{2^{m-1}\varepsilon})$ weak two-source extractor against quantum product-type knowledge.
\end{lemma}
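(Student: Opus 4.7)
The plan is to mirror the proof of~\cref{thm:general_ext_result} for the strong case, but now using~\cref{cor:measured_XOR_weak_ext} as the starting point, which is the version of the measured XOR-Lemma suited to weak extractors (it measures a PGM on both $C_1$ and $C_2$ rather than only on $C_2$).

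First I would fix a $(k_1,k_2,\varepsilon)$ weak two-source extractor $\mathrm{Ext}$ against classical product-type knowledge, together with a product state $\rho_{X_1 X_2 C_1 C_2} = \rho_{X_1 C_1} \otimes \rho_{X_2 C_2}$ satisfying $H_{\min}(X_1|C_1) \geq k_1$ and $H_{\min}(X_2|C_2) \geq k_2$. Applying~\cref{cor:measured_XOR_weak_ext} yields
\begin{equation*}
    \delta\bigl(\rho_{\mathrm{Ext}(X_1,X_2) C_1 C_2}, \omega_m \otimes \rho_{C_1} \otimes \rho_{C_2}\bigr)^2
    \leq \tfrac{1}{2}\sum_{0 \neq s \in \{0,1\}^m} \delta\!\left(\Lambda^{\rho_{X_1 C_1}}_{X_1'|C_1} \otimes \Lambda^{\rho_{X_2 C_2}}_{X_2'|C_2}[\rho_{(s\cdot\mathrm{Ext}(X_1,X_2)) C_1 C_2}],\, \omega_1 \otimes \tilde\rho_{C_1} \otimes \tilde\rho_{C_2}\right),
\end{equation*}
where $\tilde\rho_{C_i}$ abbreviates the measured states. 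Next, exactly as in~\cref{eq:proof:general_ext:data_processing}, I would invoke data-processing of the trace norm to replace each $s\cdot\mathrm{Ext}(X_1,X_2)$ by the full output $\mathrm{Ext}(X_1,X_2)$ and $\omega_1$ by $\omega_m$ (since $s\cdot Z$ is a deterministic function of $Z$ and the XOR of $s$ with uniform is uniform). All $2^m-1$ summands then become identical copies of a single quantity, so the bound collapses to
\begin{equation*}
    \delta\bigl(\rho_{\mathrm{Ext}(X_1,X_2) C_1 C_2}, \omega_m \otimes \rho_{C_1} \otimes \rho_{C_2}\bigr)^2
    \leq 2^{m-1}\, \delta\!\left(\Lambda^{\rho_{X_1 C_1}}_{X_1'|C_1} \otimes \Lambda^{\rho_{X_2 C_2}}_{X_2'|C_2}[\rho_{\mathrm{Ext}(X_1,X_2) C_1 C_2}],\, \omega_m \otimes \tilde\rho_{C_1} \otimes \tilde\rho_{C_2}\right).
\end{equation*}

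The final step is to recognize that the right-hand-side quantity is precisely the security condition for the \emph{classical} product-type scenario. The state after the two local PGMs is a product cq-state $\tilde\rho_{X_1 X_1'} \otimes \tilde\rho_{X_2 X_2'}$ (since the initial joint state is a product and the PGMs act locally), so the measured systems $X_1'$ and $X_2'$ form classical, independent side information on $X_1$ and $X_2$. By data-processing of the min-entropy (\cref{lem:data_processing}) the PGM outputs satisfy $H_{\min}(X_i|X_i')_{\tilde\rho} \geq H_{\min}(X_i|C_i)_\rho \geq k_i$. Hence the classical product-type security of $\mathrm{Ext}$ bounds the quantity above by $\varepsilon$, and taking the square root yields $\sqrt{2^{m-1}\varepsilon}$ as desired.

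The main obstacle, if any, is a bookkeeping one: verifying carefully that after applying the two local PGMs to the product state $\rho_{X_1 C_1}\otimes\rho_{X_2 C_2}$ the resulting joint classical-classical state on $X_1 X_2 X_1' X_2'$ is genuinely a product distribution, so that it fits the product-type premise of the classical extractor. This follows because the PGMs factor across the tensor decomposition (as already observed in the proof of~\cref{cor:measured_XOR_weak_ext}) and each local measurement yields a classical output that is a function only of the corresponding local register. Once this is established, the rest of the argument is a direct transcription of the strong-case proof of~\cref{thm:general_ext_result}.
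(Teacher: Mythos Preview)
Your proposal is correct and follows essentially the same route as the paper: apply \cref{cor:measured_XOR_weak_ext}, use data-processing of the trace norm to collapse the $2^m-1$ summands into a single term involving the full output $\mathrm{Ext}(X_1,X_2)$, then invoke data-processing of the min-entropy together with the classical product-type security of $\mathrm{Ext}$ to bound that term by $\varepsilon$. The product structure of the post-measurement state that you flag as a potential bookkeeping issue is indeed already handled by the PGM factorization observed in the proof of \cref{cor:measured_XOR_weak_ext}.
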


\begin{proof}
Let $\rho_{X_1 X_2 C_1 C_2} = \rho_{X_1 C_1} \otimes \rho_{X_2 C_2}$ be a state such that $H_{\min}(X_1|C_1) \geq k_1$ and $H_{\min}(X_2|C_2) \geq k_2$. From~\cref{cor:measured_XOR_weak_ext} we know that
\begin{equation} \label{eq:weak_ext_3}
    \begin{split}
        \delta\big(&\rho_{\Ext(X_1,X_2) C_1 C_2}, \omega_{m} \otimes \rho_{C_1} \otimes \rho_{C_2}\big)\\
        &\leq \sqrt{ \frac{1}{2} \sum_{0\neq s \in \{0,1\}^m} \delta\left( \PGMchannelMartin{\rho_{X_1 C_1}}{X'_1|C_1} \otimes \PGMchannelMartin{\rho_{X_2 C_2}}{X'_{2}|C_2}[\rho_{(s \cdot \Ext(X_1,X_2)) C_1 C_2}], \omega_{1} \otimes \PGMchannelMartin{\rho_{X_1 C_1}}{X_1'|C_1}[\rho_{C_1}] \otimes \PGMchannelMartin{\rho_{X_2 C_2}}{X'_2|C_2}[\rho_{C_1 C_2}] \right) } \\
        &\leq \sqrt{ \frac{1}{2} \sum_{0\neq s \in \{0,1\}^m} \delta\left( \PGMchannelMartin{\rho_{X_1 C_1}}{X'_1|C_1} \otimes \PGMchannelMartin{\rho_{X_2 C_2}}{X'_{2}|C_2}[\rho_{\Ext(X_1,X_2) C_1 C_2}], \omega_{m} \otimes \PGMchannelMartin{\rho_{X_1 C_1}}{X_1'|C_1}[\rho_{C_1}] \otimes \PGMchannelMartin{\rho_{X_2 C_2}}{X'_2|C_2}[\rho_{C_1} \otimes \rho_{C_2}] \right) }, \\
    \end{split}
\end{equation}
where for the second inequality we applied the data-processing inequality (\cref{lem:trace_norm_data_processing}).
From the data-processing inequality for the min-entropy (\cref{lem:data_processing}), we have that $H_{\min}(X_1|X'_1) \geq k_1$ and $H_{\min}(X_2|X'_2) \geq k_2$. Since $\mathrm{Ext}$ is a $(k_1, k_2, \varepsilon)$-extractor against classical product-type knowledge, we get
\begin{equation*}
    \delta\left( \PGMchannelMartin{\rho_{X_1 C_1}}{X'_1|C_1} \otimes \PGMchannelMartin{\rho_{X_2 C_2}}{X'_{2}|C_2}[\rho_{\Ext(X_1,X_2) C_1 C_2}], \omega_{m} \otimes \PGMchannelMartin{\rho_{X_1 C_1}}{X_1'|C_1}[\rho_{C_1}] \otimes \PGMchannelMartin{\rho_{X_2 C_2}}{X'_2|C_2}[\rho_{C_2}] \right) \leq \varepsilon. \\
\end{equation*}
Combining this with~\cref{eq:weak_ext_3} then gives the claimed result.
\end{proof}

\section{Proof of lemma \ref{lem:BleA_2m_result}}\label{app:B}
In this chapter we expand on the observation, stated in~\cref{sec:proof_theorem_1}, that the measured XOR-Lemma allows for simple improvements in the extractor's output length if one considers specific extractor constructions instead of generic two-source extractors.
We demonstrate this for the case of the \textDEOR{}-extractor which results in~\cref{lem:BleA_2m_result}.
The idea for proving~\cref{lem:BleA_2m_result} is to show that $(s \cdot \DEOR)$ has a much smaller error than the extractor $\DEOR$.
To prove this, we first show that the inner product is a good two-source extractor (\cref{app:inner_product_extractor}) and then notice that $(s \cdot \DEOR)$ can be seen as applying the inner product extractor to the source $A_{s}^{T} \cdot X_{2}$ (\cref{app:proof:lem:BleA_2m_result}).

\subsection{Two-universal hashing}\label{app:two_universal_hashing}
Hash functions are frequently used in computer science applications such as password storage or message authentication to name just a few \cite{modern_crypto}.
Furthermore, it is well-known that two-universal families of hash functions constitute strong two-source extractors \cite{Renner_PhD_thesis, Hayashi2}. 

\begin{definition}[Two-universality]
    Let $\mathcal{F}$ be a family of functions from $\{0,1\}^{n_2}$ to $\{0,1\}^m$ that is indexed by $n_1$-bit strings, i.e.\ $\mathcal{F}=\{ f_{x_1}: \{0,1\}^{n_2} \rightarrow \{0,1\}^m \}_{x_1\in\{0,1\}^{n_1}}$. 
    We call $\mathcal{F}$ a two-universal family of hash functions, if
    \begin{equation}\label{eq:two_universality}
        \mathop{Pr}_{x_1\leftarrow U_{n_1}}\mleft[f_{x_1}(x_2) = f_{x_1}(x_2')\mright] \leq 2^{-m},
    \end{equation}
    for any two $n_2$-bit strings $x_2\neq x_2'$.
\end{definition}
Recall from~\cref{sec:QM_notation}, that we can view any classical function as a channel acting on a quantum system of suitable dimension.
Thus, we can interpret a two-universal family of hash functions $\mathcal{F}$ as a channel, where the random variable $X_1$ determines which element of $\mathcal{F}$ is applied.
If this choice is made uniformly at random, we find the following lemma to upper bound the L\textsubscript{2}-distance from uniform.
\begin{lemma}[{\cite[Lemma 5.4.3]{Renner_PhD_thesis}}]\label{lem:Renner_hashing}
    Let $\rho_{X_2 B}\in\mathcal{P}(\mathcal{H}_{X_2}\otimes\mathcal{H}_B)$ be a (non-normalized) cq-state that is classical on $X_{2}$ and let $\sigma_B\in\mathcal{P}(\mathcal{H}_B)$. 
    Let $\mathcal{F}=\{f_{x_1}:\{0,1\}^{n_2}\rightarrow\{0,1\}^m\}_{x_1\in\{0,1\}^{n_1}}$ be a two-universal family of hash functions.
    Then we have
    \begin{equation*}
        \mathop{\mathbb{E}}_{x_1\leftarrow U_{n_1}}\mleft[ d_2(\rho_{f_{x_1}(X_2)B}|\sigma_B) \mright] \leq \tr{\rho_{X_2 B}} 2^{-H_2(\rho_{X_2 B}|\sigma_B)}.
    \end{equation*}
\end{lemma}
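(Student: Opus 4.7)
The plan is to unfold the $L_{2}$-distance to uniform using \cref{lem:d_2_cq_states}, interchange the sum over the output $y$ with the sum over preimages $\{x_2 : f_{x_1}(x_2) = y\}$, and then average over $x_1$ so that the two-universality condition in \cref{eq:two_universality} can be applied directly to the resulting pairs $(x_2,x_2')$.

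More concretely, I would start by introducing the shorthand $T_{x_2} \coloneq \sigma_B^{-1/4} \rho_{B \land x_2} \sigma_B^{-1/4}$, so that the conditional operator of $\rho_{f_{x_1}(X_2)B}$ at output $y$ is $\sum_{x_2 : f_{x_1}(x_2)=y} \rho_{B \land x_2}$ and the first term of $d_2(\rho_{f_{x_1}(X_2)B}|\sigma_B)$ becomes $\sum_{y}\tr{(\sum_{x_2:f_{x_1}(x_2)=y} T_{x_2})^2} = \sum_{x_2,x_2' : f_{x_1}(x_2)=f_{x_1}(x_2')} \tr{T_{x_2}T_{x_2'}}$. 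Taking the expectation over a uniform $x_1$ then introduces the factor $\mathrm{Pr}_{x_1}[f_{x_1}(x_2)=f_{x_1}(x_2')]$ in front of each $\tr{T_{x_2}T_{x_2'}}$. I would split this sum into the diagonal $x_2=x_2'$ part, which contributes exactly $\sum_{x_2}\tr{T_{x_2}^2}$, and the off-diagonal part, where two-universality gives an upper bound of $2^{-m}$ on each collision probability; here I use the key positivity observation $\tr{T_{x_2}T_{x_2'}} = \tr{T_{x_2}^{1/2}T_{x_2'}T_{x_2}^{1/2}} \geq 0$, which is what allows replacing the collision probabilities by their upper bound.

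Reassembling the off-diagonal contribution as $\sum_{x_2,x_2'}\tr{T_{x_2}T_{x_2'}} - \sum_{x_2}\tr{T_{x_2}^2} = \tr{(\sigma_B^{-1/4}\rho_B\sigma_B^{-1/4})^2} - \sum_{x_2}\tr{T_{x_2}^2}$ and subtracting the second term $\tfrac{1}{2^m}\tr{(\sigma_B^{-1/4}\rho_B\sigma_B^{-1/4})^2}$ of $d_2$ causes the $\rho_B$-dependent pieces to cancel exactly, leaving $\mathop{\mathbb{E}}_{x_1}[d_2(\rho_{f_{x_1}(X_2)B}|\sigma_B)] \leq (1-2^{-m})\sum_{x_2}\tr{T_{x_2}^2} \leq \sum_{x_2}\tr{T_{x_2}^2}$. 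Finally, recognizing the right-hand side via \cref{lem:collision_ent_cq} as $\tr{\rho_{X_2B}}\,2^{-H_2(\rho_{X_2B}|\sigma_B)}$ gives the claimed bound.

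The only delicate point is the positivity step $\tr{T_{x_2}T_{x_2'}}\geq 0$, which legitimizes replacing the exact collision probabilities by their $2^{-m}$ upper bound without flipping the inequality; after that, everything is bookkeeping. The clean cancellation of the $\tr{(\sigma_B^{-1/4}\rho_B\sigma_B^{-1/4})^2}$ term is what makes the bound depend only on the collision entropy of $\rho_{X_2B}$ and not on any residual $\rho_B$-dependent quantity, which is exactly the structure needed in the proof of \cref{lem:BleA_2m_result}.
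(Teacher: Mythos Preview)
The paper does not include its own proof of this lemma; it is simply cited from \cite[Lemma 5.4.3]{Renner_PhD_thesis}. Your argument is correct and is essentially the standard proof given there: expand $d_2$ via \cref{lem:d_2_cq_states}, average over the uniform choice of hash function so that the collision probabilities appear, use two-universality together with the positivity $\tr{T_{x_2}T_{x_2'}}\geq 0$ to bound the off-diagonal terms, and observe the exact cancellation of the $\tr{(\sigma_B^{-1/4}\rho_B\sigma_B^{-1/4})^2}$ contribution before identifying the collision entropy via \cref{lem:collision_ent_cq}.
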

As we are interested in cases where $X_1$ is not uniform, the following result allows us to reduce this case to the case of uniform $X_1$.
\begin{lemma}[{\cite[Theorem 7]{Hayashi2}}]\label{lem:Hayashi_Thm_7}
    Let $\rho_{X_2 B}\in\mathcal{P}(\mathcal{H}_{X_2}\otimes\mathcal{H}_B)$ be a (non-normalized) cq-state that is classical on $X_2$ and let $\sigma_B\in\mathcal{P}(\mathcal{H}_B)$. 
    Let $\mathcal{F}=\{f_{x_1}:\{0,1\}^{n_2}\rightarrow\{0,1\}^m\}_{x_1\in\{0,1\}^{n_1}}$ be a family of hash functions and let $X_1$ be a random variable over the alphabet of all $n_1$-bit strings, satisfying $H_{\min}(X_1) \geq k_1$.
    Then, we have
    \begin{equation*}
        \mathop{\mathbb{E}}_{x_1 \leftarrow X_1} \mleft[d_2(\rho_{f_{x_1}(X_2)B}|\sigma_B)\mright] \leq 2^{(n_1-k_1)}  \mathop{\mathbb{E}}_{x_1 \leftarrow U_{n_1}} \mleft[ d_2(\rho_{f_{x_1} (X_2)B}|\sigma_B) \mright].
    \end{equation*}
\end{lemma}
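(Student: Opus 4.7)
The plan is to exploit the non-negativity of the $L_{2}$-distance to uniform together with the defining upper bound on the probabilities coming from the min-entropy assumption. Concretely, I would begin by noting that for any cq-state $\tau_{AB}$ and any $\sigma_B$ such that $d_2(\tau_{AB}|\sigma_B)$ is defined, the quantity is non-negative: by~\cref{def: L2 distance}, it equals $\tr{T^2}$ with $T = (\id_A \otimes \sigma_B^{-1/4})(\tau_{AB} - \omega_A \otimes \tau_B)(\id_A \otimes \sigma_B^{-1/4})$ a Hermitian operator, hence $d_2(\tau_{AB}|\sigma_B) \geq 0$. In particular, $d_2(\rho_{f_{x_1}(X_2)B}|\sigma_B) \geq 0$ for every $x_1 \in \{0,1\}^{n_1}$.

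Next, the min-entropy assumption $H_{\min}(X_1) \geq k_1$ translates, by~\cref{def:quantum_cond_min_entropy} applied to a trivial conditioning system, into the pointwise bound $P_{X_1}(x_1) \leq 2^{-k_1}$ for every $x_1$. Combining this with the non-negativity observation above, I would bound the left-hand side termwise:
\begin{equation*}
    \mathop{\mathbb{E}}_{x_1 \leftarrow X_1} \mleft[d_2(\rho_{f_{x_1}(X_2)B}|\sigma_B)\mright]
    = \sum_{x_1} P_{X_1}(x_1)\, d_2(\rho_{f_{x_1}(X_2)B}|\sigma_B)
    \leq 2^{-k_1} \sum_{x_1} d_2(\rho_{f_{x_1}(X_2)B}|\sigma_B).
\end{equation*}
Rewriting the plain sum as $2^{n_1}$ times the uniform expectation, the right-hand side becomes $2^{n_1 - k_1}\, \mathbb{E}_{x_1 \leftarrow U_{n_1}}[d_2(\rho_{f_{x_1}(X_2)B}|\sigma_B)]$, which is exactly the claimed inequality.

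There is essentially no obstacle here: the argument is a one-line application of Markov-type reasoning, relying only on (i) the pointwise probability bound supplied by the min-entropy and (ii) the non-negativity of the summand, both of which are immediate from the definitions. The only point that deserves a brief sanity-check is that the condition $\ker(\id \otimes \sigma_B) \subseteq \ker(\rho_{f_{x_1}(X_2)B})$ required for $d_2$ to be well-defined holds for every $x_1$ in the support (and where it fails, the corresponding $P_{X_1}(x_1)$ contributes zero), so the bound is meaningful in all cases of interest.
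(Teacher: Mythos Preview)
Your argument is correct: non-negativity of $d_2$ together with the pointwise bound $P_{X_1}(x_1)\le 2^{-k_1}$ immediately yields the claimed inequality after rewriting the unweighted sum as $2^{n_1}$ times the uniform expectation. Note that the paper does not supply its own proof of this lemma; it simply cites \cite[Theorem~7]{Hayashi2}, so there is nothing further to compare against.
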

Putting the two lemmas above together allows us to rederive the following result.
\begin{lemma}[{\cite[Equation 115]{Hayashi2}}]\label{lem:Hayashi_hashing}
    Let $\mathcal{F}=\{f_{x_1}:\{0,1\}^{n_2}\rightarrow\{0,1\}^m\}_{x_1\in\{0,1\}^{n_1}}$ be a two-universal family of hash functions.
    Let $X_1$ be a random variable over the alphabet of all $n_1$-bit strings, satisfying $H_{\min}(X_1) \geq k_1$.
    Let $\rho_{X_2 Z_2}\in\mathcal{S}(\mathcal{H}_{X_2}\otimes\mathcal{H}_{Z_2})$ be a state that is classical on both $X_2$ and $Z_2$, such that $H_{\min}(X_2|Z_2)\geq k_2$.
    Then, we have
    \begin{equation}
        \mathop{\mathbb{E}}_{x_1\leftarrow X_1} \mleft[\delta\bigl(\rho_{f_{x_1}(X_2)Z_2} , \omega_{m}\otimes \rho_{Z_2} \bigr)\mright] \leq 2^{-\frac{1}{2}(2+k_1+k_2-n_1-m)}.
    \end{equation}
\end{lemma}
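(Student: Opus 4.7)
The plan is to chain together the two preceding lemmas (\cref{lem:Renner_hashing} and \cref{lem:Hayashi_Thm_7}) after converting the trace distance to the conditional $L_2$-distance. The roadmap has six short steps and no real obstacle; the only subtle point is choosing $\sigma_{Z_2}$ correctly so that the inequality $H_2 \geq H_{\min}$ does not hurt us.

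First I would fix $\sigma_{Z_2} \in \mathcal{S}(\mathcal{H}_{Z_2})$ to be the optimizer in the definition of the conditional min-entropy, so that $H_{\min}(\rho_{X_2 Z_2}|\sigma_{Z_2}) = H_{\min}(X_2|Z_2) \geq k_2$. Since the collision entropy upper bounds the min-entropy (as recalled right after \cref{def:collision entropy}), this choice also yields $H_2(\rho_{X_2 Z_2}|\sigma_{Z_2}) \geq k_2$. Note $\tr{\sigma_{Z_2}} = 1$.

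Next, for each fixed $x_1$ I would apply \cref{cor:bound_one_norm_by_two_norm} with $\mathrm{dim}(\mathcal{H}_{f_{x_1}(X_2)}) = 2^m$ to obtain
\begin{equation*}
    \delta\bigl(\rho_{f_{x_1}(X_2) Z_2}, \omega_{m} \otimes \rho_{Z_2}\bigr) \leq \tfrac{1}{2}\sqrt{2^m\, d_2(\rho_{f_{x_1}(X_2) Z_2}|\sigma_{Z_2})}.
\end{equation*}
Taking the expectation over $x_1 \leftarrow X_1$ and applying Jensen's inequality to pull the square root outside gives
\begin{equation*}
    \mathop{\mathbb{E}}_{x_1\leftarrow X_1}\!\bigl[\delta\bigl(\rho_{f_{x_1}(X_2)Z_2}, \omega_{m}\otimes\rho_{Z_2}\bigr)\bigr] \leq \tfrac{1}{2}\sqrt{2^m\, \mathop{\mathbb{E}}_{x_1\leftarrow X_1}\!\bigl[d_2(\rho_{f_{x_1}(X_2) Z_2}|\sigma_{Z_2})\bigr]}.
\end{equation*}

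Now I would invoke \cref{lem:Hayashi_Thm_7} to trade the non-uniform expectation over $X_1$ for a uniform one at the cost of a factor $2^{n_1 - k_1}$, and then apply \cref{lem:Renner_hashing} to the uniform expectation to obtain an upper bound of $2^{-H_2(\rho_{X_2 Z_2}|\sigma_{Z_2})}$ (using $\tr{\rho_{X_2 Z_2}} = 1$). Combining these two bounds yields
\begin{equation*}
    \mathop{\mathbb{E}}_{x_1\leftarrow X_1}\!\bigl[d_2(\rho_{f_{x_1}(X_2) Z_2}|\sigma_{Z_2})\bigr] \leq 2^{n_1 - k_1}\cdot 2^{-H_2(\rho_{X_2 Z_2}|\sigma_{Z_2})} \leq 2^{n_1 - k_1 - k_2},
\end{equation*}
where the last inequality uses the choice of $\sigma_{Z_2}$ from step one. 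Plugging this back in gives the stated bound $\tfrac{1}{2}\sqrt{2^{m + n_1 - k_1 - k_2}} = 2^{-(2+k_1+k_2-n_1-m)/2}$, finishing the proof. There is no conceptually difficult step; the only thing to watch is that one must select $\sigma_{Z_2}$ before invoking \cref{lem:Renner_hashing} so that the collision-entropy bound can be converted into a min-entropy bound.
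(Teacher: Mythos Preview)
Your proof is correct and follows essentially the same approach as the paper: bound the trace distance by the $L_2$-distance via \cref{cor:bound_one_norm_by_two_norm}, apply Jensen's inequality, then chain \cref{lem:Hayashi_Thm_7} and \cref{lem:Renner_hashing}, and finally use $H_2 \geq H_{\min}$ with the optimizing $\sigma_{Z_2}$. The only cosmetic difference is that the paper keeps $\sigma_{Z_2}$ arbitrary throughout and specializes it at the very end, whereas you fix it at the start; the resulting chain of inequalities is identical.
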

\begin{proof}
    Let $\sigma_{Z_2}\in\mathcal{S}(\mathcal{H}_{Z_2})$ arbitrary.
    Consider
    \begin{align*}
        \mathop{\mathbb{E}}_{x_1\leftarrow X_1} \mleft[\delta\bigl(\rho_{f_{x_1}(X_2)Z_2} , \rho_{U_m}\otimes \rho_{Z_2} \bigr)\mright]
        &\leq \frac{1}{2} \mathop{\mathbb{E}}_{x_1\leftarrow X_1} \mleft[\sqrt{2^m d_2(\rho_{f_{x_1}(X_2)Z_2}|\sigma_{Z_2})}\mright] \\
        &\leq \frac{1}{2} \sqrt{2^m \mathop{\mathbb{E}}_{x_1\leftarrow X_1} \mleft[ d_2(\rho_{f_{x_1}(X_2)Z_2}|\sigma_{Z_2})\mright]}\\
        &\leq \frac{1}{2} \sqrt{2^{m + n_1-k_1} \mathop{\mathbb{E}}_{x_1 \leftarrow U_{n_1}} \mleft[ d_2(\rho_{f_{x_1}(X_2)Z_2}|\sigma_{Z_2})\mright]}\\
        &\leq \frac{1}{2} \sqrt{2^{m + n_1-k_1 -H_2(\rho_{X_2 Z_2}|\sigma_{Z_2})}}\\
        &\leq 2^{-(2+k_1 + H_{\min}(\rho_{X_2 Z_2}|\sigma_{Z_2}) - n_1 - m)/2}
    \end{align*}
    where we applied~\cref{cor:bound_one_norm_by_two_norm} to obtain the first inequality.
    In the second inequality, we used Jensen's inequality.
    For the third and fourth inequality we used~\cref{lem:Hayashi_Thm_7} and~\cref{lem:Renner_hashing}, respectively, together with the two-universality of $\mathcal{F}$.
    The last inequality can be obtained by using the fact the collision entropy upper bounds the min-entropy, as seen in \cref{sec:entropies}.
    Since $\sigma_{Z_2}$ was arbitrary, we can choose it such that the min-entropy of $\rho_{X_2 Z_2}$ relative to $\sigma_{Z_2}$ is maximized and thus equal to $H_{\min}(X_2|Z_2)$.
    By assumption, we have that $H_{\min}(X_2|Z_2) \geq k_2$ which concludes the proof.
\end{proof}

\subsection{Inner product single-bit output extractor}\label{app:inner_product_extractor}
One of the simplest two-source extractors that produces a one-bit output is given by the inner product modulo 2 denoted by
\begin{align*}
    \mathrm{IP}: \{0,1\}^n \times \{0,1\}^n &\longrightarrow \{0,1\} \\
    (x,y) &\longmapsto x^T y = \left(\sum_{i=1}^n x_i y_i\right) \mod{2},
\end{align*}
which we often abbreviate as $\mathrm{IP}(x,y)=x\cdot y$.
We also define the bitwise addition modulo 2, or bitwise XOR, of two bit strings $x,y\in\{0,1\}^n$ by $x\oplus y = (x_1 + y_1\mod{2},...,x_n + y_n \mod{2})$.

By showing that the inner product function $\mathrm{IP}$, gives a two-universal family of hash functions, we can prove that $\mathrm{IP}$ is a good two-source extractor.
\begin{lemma}[Two-universality of the inner product]
    The set $\{\mathrm{IP}(y,\cdot): \{0,1\}^n\rightarrow \{0,1\}\}_{y\in\{0,1\}^n}$ is a two-universal family of hash functions.
\end{lemma}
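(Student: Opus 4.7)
The plan is a short direct verification of the definition of two-universality with output length $m=1$. Unpacking the definition, I need to show that for any two distinct $x,x'\in\{0,1\}^{n}$,
\[
\Pr_{y\leftarrow U_{n}}\bigl[\mathrm{IP}(y,x)=\mathrm{IP}(y,x')\bigr]\;\leq\;2^{-1}.
\]

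First I would use bilinearity of the inner product modulo $2$ to rewrite the event: $y\cdot x=y\cdot x'$ is equivalent to $y\cdot(x\oplus x')=0$. Setting $z\coloneqq x\oplus x'$, the assumption $x\neq x'$ guarantees $z\neq 0^{n}$, so there exists some coordinate $i$ with $z_{i}=1$.

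Next I would analyse $\Pr_{y\leftarrow U_{n}}[y\cdot z=0]$ by conditioning on the values of $y_{j}$ for $j\neq i$. For any such fixed assignment, the equation $y\cdot z=0$ becomes a linear equation in the single unknown $y_{i}$ of the form $y_{i}=c$ for some $c\in\{0,1\}$ determined by the fixed $y_{j}$'s and by $z$. Since $y_{i}$ is uniform in $\{0,1\}$ independently of the other coordinates, exactly one of its two values satisfies this equation, so the conditional probability is $1/2$. Averaging over the remaining bits yields $\Pr_{y\leftarrow U_{n}}[y\cdot z=0]=1/2$, which is the desired bound and proves two-universality.

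There is essentially no obstacle here: the argument is one line of linear algebra over $\mathbb{F}_{2}$ combined with the fact that a uniformly random bit satisfies any nontrivial affine equation with probability $1/2$. The only care needed is to state the equivalence via $z=x\oplus x'$ and to observe that two-universality for $m=1$ requires collision probability at most $1/2$, which is met with equality here.
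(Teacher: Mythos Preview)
Your proof is correct and follows essentially the same approach as the paper: both reduce the collision event to $y\cdot(x\oplus x')=0$ with $x\oplus x'\neq 0^{n}$ and then show this holds with probability exactly $1/2$. The only cosmetic difference is that the paper computes this probability via the character-sum identity $\tfrac{1}{2}+2^{-n-1}\prod_{i}\bigl(1+(-1)^{(x\oplus x')_{i}}\bigr)$, whereas you condition on all coordinates of $y$ but one; both are standard one-line arguments for the same fact.
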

\begin{proof}
    Let $x,x'\in\{0,1\}^n$ such that $x\neq x'$. Then
    \begin{align*}
        \mathop{Pr}_{y\leftarrow U_n}\mleft[y\cdot x = y\cdot x'\mright] 
        &= \sum_{y\in\{0,1\}^n} 2^{-n} \frac{1+(-1)^{y\cdot x + y\cdot x'}}{2}\\
        &= \frac{1}{2} + 2^{-n-1}\sum_{y\in\{0,1\}^n}(-1)^{y\cdot(x\oplus x')}\\
        &= \frac{1}{2} + 2^{-n-1}\prod_{i=1}^n\mleft(\sum_{y_i\in\{0,1\}} (-1)^{y_i(x\oplus x')_i}\mright)\\
        &= \frac{1}{2} + 2^{-n-1}\prod_{i=1}^n\mleft(1+(-1)^{(x\oplus x')_i}\mright)\\
        &= \frac{1}{2}
    \end{align*}
    where we used in the last equality that for $x\neq x'$ there has to exist at least one $i$ such that $(x\oplus x')_i$ is nonzero.
\end{proof}
Together with \cref{lem:Hayashi_hashing} we immediately find that the inner product is a good extractor as formalized in the following corollary.
\begin{corollary}\label{cor:IP_strong_cl_ext}
    The inner product $\mathrm{IP}$ is a $(k_1,k_2,\epsilon)$ $X_1$-strong two-source extractor against classical product-type knowledge, where
    \begin{equation*}
        \epsilon = 2^{-\frac{1}{2}(1+k_1+k_2-n)}.
    \end{equation*}
\end{corollary}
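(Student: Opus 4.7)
The plan is to combine the two-universality of the inner product (established in the preceding lemma) with Lemma \ref{lem:Hayashi_hashing}, which already converts two-universality into a strong extractor bound against classical product-type side information. There is essentially nothing new to prove beyond matching up parameters.

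First I would unpack the definition of a strong two-source extractor against classical product-type knowledge via \cref{eq:strong_ext_condition}. For a state of the form $\rho_{X_1 X_2 C_1 C_2} = \rho_{X_1 C_1} \otimes \rho_{X_2 C_2}$ with $C_1$, $C_2$ classical, the quantity $\delta\bigl(\rho_{\mathrm{IP}(X_1,X_2) X_1 C_1 C_2}, \omega_{1}\otimes \rho_{X_1 C_1}\otimes \rho_{C_2}\bigr)$ equals
\[
    \mathop{\mathbb{E}}_{x_1 \leftarrow X_1}\mleft[ \delta\bigl(\rho_{\mathrm{IP}(x_1, X_2) C_2}, \omega_{1}\otimes \rho_{C_2}\bigr) \mright].
\]
This is precisely the form of the quantity bounded by \cref{lem:Hayashi_hashing}, with the role of $Z_2$ played by the classical register $C_2$.

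Next I would invoke the preceding lemma to conclude that the family $\mathcal{F} = \{\mathrm{IP}(x_1, \cdot)\colon \{0,1\}^n \to \{0,1\}\}_{x_1 \in \{0,1\}^n}$ is two-universal, and then apply \cref{lem:Hayashi_hashing} with parameters $n_1 = n_2 = n$, $m = 1$, $Z_2 = C_2$, and the min-entropy assumptions $H_{\min}(X_1|C_1) = H_{\min}(X_1) \geq k_1$ (using data-processing, since $C_1$ is independent of $X_1$ in the product-type setting actually follows from the fact that $\rho_{X_1 C_1}$ is a generic cq-state and $H_{\min}(X_1)\geq H_{\min}(X_1|C_1)$) and $H_{\min}(X_2|C_2) \geq k_2$. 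This gives the bound
\[
    \mathop{\mathbb{E}}_{x_1 \leftarrow X_1}\mleft[ \delta\bigl(\rho_{\mathrm{IP}(x_1, X_2) C_2}, \omega_{1}\otimes \rho_{C_2}\bigr) \mright] \leq 2^{-\frac{1}{2}(2 + k_1 + k_2 - n - 1)} = 2^{-\frac{1}{2}(1 + k_1 + k_2 - n)},
\]
which is exactly the claimed $\varepsilon$.

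There is no real obstacle here: the content is in the two-universality lemma and in \cref{lem:Hayashi_hashing}; the corollary is just their composition with the parameter substitution $m=1$, $n_1 = n$. The only point requiring any care is noting that the min-entropy condition on $X_1$ needed by \cref{lem:Hayashi_hashing} (which assumes $X_1$ is an unconditioned source) follows from $H_{\min}(X_1) \geq H_{\min}(X_1|C_1) \geq k_1$ by data-processing (\cref{lem:data_processing}), exploiting the product structure of the side information.
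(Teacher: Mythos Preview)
Your proposal is correct and follows essentially the same argument as the paper: rewrite the strong extractor condition via \cref{eq:strong_ext_condition} as an expectation over $x_1$, use data-processing (\cref{lem:data_processing}) to obtain $H_{\min}(X_1)\geq H_{\min}(X_1|C_1)\geq k_1$, and then apply \cref{lem:Hayashi_hashing} with $m=1$ and $n_1=n$ using the two-universality of $\{\mathrm{IP}(x_1,\cdot)\}_{x_1}$. The parameter matching is exactly as in the paper's proof.
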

\begin{proof}
    Let  $X_1$, $X_2$ be randomness sources with classical side information $Z$ of the form $\rho_{X_1 X_2 Z}=\rho_{X_1 Z_1}\otimes\rho_{X_2 Z_2}$ that satisfies $H_{\min}(X_1|Z_1)\geq k_1$ and  $H_{\min}(X_2|Z_2)\geq k_2$.
    Then by data-processing $H_{\min}(X_1)\geq H_{\min}(X_1|Z_1)\geq k_1$.
    We first consider the condition for the extractor to be $X_1$-strong, for which we observe
    \begin{equation*}
        \delta\bigl( \rho_{\mathrm{IP}(X_1,X_2)X_1 Z}, \rho_{U_1}\otimes\rho_{X_1 Z_1}\otimes\rho_{Z_2}\bigr) = \mathop{\mathbb{E}}_{x_1\leftarrow X_1} \mleft[\delta\big(\rho_{\mathrm{IP}(x_1,X_2)Z_2}, \rho_{U_1}\otimes\rho_{Z_2}\big)\mright]
        \leq 2^{-\frac{1}{2}(1+k_1+k_2-n)}
    \end{equation*}
    where we used the rewriting of the strong extractor condition (\cref{eq:strong_ext_condition}) in the equality.
    In order to obtain the inequality, we applied~\cref{lem:Hayashi_hashing} together with the fact that the set $\{\mathrm{IP}(x_1,\cdot)\}_{x_1\in\{0,1\}^n}$ is two-universal.
\end{proof}

\subsection{Proof of lemma~\ref{lem:BleA_2m_result}}\label{app:proof:lem:BleA_2m_result}
In the previous section we proved that the inner product is a good two-source extractor which essentially shows that the function $(s\cdot\DEOR)$ is a good two-source extractor as well. Guided by this idea we find the following Lemma:

\bleAres*

\begin{proof}
    Let $X_1$ and $X_2$ be randomness sources with quantum side information $C_1 C_2$ of product form $\rho_{X_1 X_2 C_1 C_2} = \rho_{X_1 C_1}\otimes \rho_{X_2 C_2}$ such that $H_{\min}(X_1|C_1)\geq k_1$ and $H_{\min}(X_2|C_2)\geq k_2$.
    Using~\cref{cor:measured_XOR_strong_ext} we have
    \begin{equation}\label{eq:proof:BleA_2m_result:applying_XOR_lemma}
    \begin{split}
        \delta\big(&\rho_{\DEOR(X_1,X_2)X_1 C}, \omega_{m}\otimes\rho_{X_1 C_1}\otimes\rho_{C_2}\big) \\
        &\leq \sqrt{ \frac{1}{2} \sum_{s\neq0^m} \delta \left( \PGMchannelMartin{\rho_{X_{2}C_{2}}}{X'_{2}|C_{2}} \left[ \rho_{(s\cdot \DEOR(X_1,X_2)) X_1 C_{2}} \right] , \omega_{1} \otimes \rho_{X_1} \otimes \PGMchannelMartin{\rho_{X_{2}C_{2}}}{X'_{2}|C_{2}} \left[ \rho_{C_2} \right] \right) } \\
        &= \sqrt{ \frac{1}{2} \sum_{s\neq0^m} \delta\left( \PGMchannelMartin{\rho_{X_{2}C_{2}}}{X'_{2}|C_{2}} \left[ \rho_{\mathrm{IP}(X_1,(A_s^T\cdot X_2)) X_1 C_2} \right] , \omega_{1} \otimes \rho_{X_1} \otimes \PGMchannelMartin{\rho_{X_2C_2}}{X'_2|C_2} \left[ \rho_{C_2} \right] \right) },
    \end{split}
    \end{equation}
    where we observed that $s\cdot\DEOR(x_1,x_2) =  x_1^T A_s^T x_2 = \mathrm{IP}(x_1, A_s^T x_2)$ in the last line.
    By assumption the matrix $A_s$, and thereby also the matrix $A_s^T$, has rank at least $n-r$ for any $0\neq s\in\{0,1\}^m$.
    From~\cref{prop:min_entropy_matrix_multipl} we then get 
    \begin{equation}
        H_{\min}((A_s^T \cdot X_2)|X_2') \geq H_{\min}(X_2|X_2') - r \geq H_{\min}(X_2|C_2) - r \geq k_2 - r
    \end{equation}
    where we used data-processing in the second to last inequality. Similarly we have that $H_{\min}(X_1) \geq k_1$.
    
    Recall from~\cref{cor:IP_strong_cl_ext} that $\mathrm{IP}$ is an $X_1$-strong extractor which immediately gives
    \begin{equation}
        \delta\left( \PGMchannelMartin{\rho_{X_{2}C_{2}}}{X'_{2}|C_{2}} \left[ \rho_{\mathrm{IP}(X_1,(A_s^T\cdot X_2)) X_1 C_2} \right] , \omega_{1} \otimes \rho_{X_1} \otimes \PGMchannelMartin{\rho_{X_2C_2}}{X'_2|C_2} \left[ \rho_{C_2} \right] \right)
        \leq 2^{-\frac{1}{2}(1+k_1+k_2-r-n)}.
    \end{equation}
    By plugging this back into~\cref{eq:proof:BleA_2m_result:applying_XOR_lemma} we obtain
    \begin{equation}
        \delta\big(\rho_{\DEOR(X_1,X_2)X_1 C}, \omega_{m}\otimes\rho_{X_1 C_1}\otimes\rho_{C_2}\big) \leq 2^{-\frac{k_1+k_2+3-r-n-2m}{4}}
    \end{equation}
    as desired.
\end{proof}

% \printbibliography

\end{document}